\newtheorem{theorem}{Theorem}
\newtheorem{proposition}{Proposition}
\newtheorem{lemma}{Lemma}
\newtheorem{remark}{Remark}
\newtheorem{corollary}{Corollary}
\newtheorem{example}{Example}
\def\reff#1{{\rm(\ref{#1})}}
\newcommand\bone{\mathbf 1}
\def\ttheta{\tilde \theta}
\def\tb{\tilde b}
\def\tE{\tilde E}
\newcommand\tW{\tilde W}
\newcommand\EE {\mathbb E}
\newcommand\FF {\mathbb F}
\newcommand\RR {\mathbb R}
\newcommand\PP {\mathbb P}
\newcommand\QQ {\mathbb Q}
\newcommand\bF {\mathbb F}
\newcommand\cA {\mathcal A}
\newcommand\cF {\mathcal F}
\begin{document}

\title[Singular FBSDEs and Emissions Derivatives]{Singular Forward-Backward Stochastic Differential Equations and Emissions Derivatives}

%    author one information
\author{Ren\'e Carmona}
\address{ORFE, Bendheim Center for Finance, Princeton University,
Princeton, NJ  08544, USA.}
\email{rcarmona@princeton.edu}
\thanks{Partially supported  by NSF: DMS-0806591}

\author{Fran\c{c}ois Delarue}
\address{Laboratoire J.-A. Dieudonn\'e,
Universit\'e de Nice Sophia-Antipolis, 
Parc Valrose, 
06108 Cedex 02, Nice, FRANCE}
\email{francois.delarue@unice.fr}

\author{Gilles-Edouard Espinosa}
\address{Centre de Math\'ematiques Appliqu\'ees, Ecole Polytechnique, UMR CNRS 7641,
91128 Palaiseau Cedex, FRANCE}
\email{gilles-edouard.espinosa@polytechnique.org}

\author{Nizar Touzi}
\address{Centre de Math\'ematiques Appliqu\'ees, Ecole Polytechnique, UMR CNRS 7641,
91128 Palaiseau Cedex, FRANCE}
\email{nizar.touzi@polytechnique.edu}
\thanks{The last two authors gratefully acknowledge financial support from the Chair {\it Financial Risks} of the {\it Risk
Foundation} sponsored by Soci\'et\'e G\'en\'erale, the Chair {\it
Derivatives of the Future} sponsored by the {F\'ed\'eration Bancaire
Fran\c{c}aise}, the Chair {\it Finance and Sustainable Development}
sponsored by EDF and CA-CIB.}

\subjclass[2000]{Primary }

\keywords{}

\date{July 22, 2010}

\begin{abstract}
We introduce two simple models of forward-backward stochastic differential equations with a singular terminal condition and we explain how and why they appear naturally as models for the valuation of CO${}_2$ emission allowances. Single phase cap-and-trade schemes lead readily to terminal conditions given by indicator functions of the forward component, and using fine partial differential equations estimates, we show that the existence theory of these equations, as well as the properties of the candidates for solution, depend strongly upon the characteristics of  the forward dynamics. Finally, we give a first order Taylor expansion and show how to numerically calibrate some of these models for the purpose of CO${}_2$ option pricing.  
\end{abstract}

\maketitle

\section{\textbf{Introduction}}
\label{se:intro}

This paper is motivated by the mathematical analysis of the emissions markets, as implemented for example in the European Union (EU) Emissions Trading Scheme (ETS).
These market mechanisms have been hailed by some as the most cost efficient way to control Green House Gas (GHG) emissions.
They have been criticized by others  for being a tax in disguise and adding to the burden of industries covered by the regulation.
Implementation of cap-and-trade schemes is not limited to the implementation of the Kyoto protocol. The successful US acid rain program is a case in point. However, a widespread lack of understanding of their properties, and misinformation campaigns by advocacy groups more interested in pushing their political agendas than 
using the results of objective scientific studies have muddied the water and add to the confusion. More mathematical studies are needed to increase the understanding of these market mechanisms and raise the level of awareness of their advantages as well as their shortcomings. This paper was prepared in this spirit.

\vskip 4pt
In a first part, we introduce simple single-firm models inspired by the workings of the electricity markets (electric power generation is responsible for most of the CO${}_2$ emissions worldwide). Despite the specificity of some assumptions, our treatment is quite general in the sense that individual risk averse power producers choose their own utility functions. Moreover, the financial markets in which they trade emission allowances are not assumed to be complete. 

While market incompleteness prevents us from identifying the optimal trading strategy of each producer, we show that, independently of the choice of the utility function, the optimal production or abatement strategy is what we expect by proving mathematically, and in full generality (i.e. without assuming completeness of the markets),  a \emph{folk theorem} in environmental economics: the equilibrium allowance price equals the marginal abatement cost, and market participants implement all the abatement measures whose costs are not greater than the cost of compliance (i.e. the equilibrium price of an allowance).

\vskip 1pt
The next section puts together the economic activities of a large number of producers and searches for the existence of an equilibrium price for the emissions allowances. Such a problem leads naturally to a \emph{forward} stochastic differential equation (SDE) for the aggregate emissions in the economy, and a \emph{backward} stochastic differential equation (BSDE) for the allowance price. However, these equations are "coupled" since a nonlinear function of the \emph{price of carbon} (i.e. the price of an emission allowance) appears in the forward equation giving the dynamics of the aggregate emissions. This feedback of the emission price in the dynamics of the emissions is quite natural. For the purpose of option pricing, this approach was described in \cite{CarmonaHinz} where it was called \emph{detailed risk neutral approach}.

\vskip 1pt
Forward backward stochastic differential equations (FBSDEs) of the type considered in this section have been studied for a long time. See for example \cite{MaYong.book},
or \cite{Pham.book}. However, the FBSDEs we need to consider for the purpose of emission prices have an unusual pecularity: the terminal condition of the backward equation is given by a discontinuous function of the terminal value of the state driven by the forward equation. We use our first model to prove that this lack of continuity is not an issue when the forward dynamics are \emph{strongly elliptic} in the neighborhood of the singularities of the terminal condition, in other words when the volatility of the forward SDE is bounded from below in the neighborhood of the discontinuities of the terminal value. However, using our second equilibrium model, we also show that when the forward dynamics are degenerate (even if they are hypoelliptic), discontinuities in the terminal condition and lack of strong ellipticity in the forward dynamics can conspire to produce point masses in the terminal distribution of the forward component, at the locations of the discontinuities.
This implies that the terminal value of the backward component is not given by a deterministic function of the forward component,  for the forward scenarios ending at the locations of jumps in the terminal condition, and justifies relaxing the definition of a solution of the FBSDE.

Even though we only present a detailed proof for a very specific model for the sake of definiteness, we believe that our result is representative of a large class of models.
Since from the point of view of the definition of "aggregate emissions", the degeneracy of the forward dynamics is expected, this seemingly pathological result should not be overlooked. Indeed, it sheds new light on an absolute continuity assumption made repeatedly in equilibrium analyses, even in discrete time models. See for example \cite{CarmonaFehrHinz} and \cite{CarmonaFehrHinzPorchet}. This assumption was regarded as an annoying technicality, but in the light of the results of this paper, it looks more intrinsic to these types of models. In any case, it fully justifies the need to relax the definition of a solution of a FBSDE when the terminal condition of the backward part jumps. 

\vskip 2pt
A vibrant market for options written on allowance  futures/forward contracts has recently developed and increased in liquidity. See for example \cite{CarmonaHinz} for details on these markets. Reduced formed models have been proposed to price these options. See \cite{CarmonaHinz} or \cite{ChesneyTaschini}. Several attempts have been made at matching the smile (or lack thereof) contained in the quotes published daily by the exchanges. Section \ref{se:option} develops the technology needed to price these options in the context of the equilibrium framework developed in the present paper. We identify the option prices in terms of solutions of nonlinear partial differential equations and we prove
when the dynamics of the aggregate emissions are given by a geometric Brownian motion, a Taylor expansion formula when the nonlinear abatement feedback is small. We derive an explicit integral form for the first order Taylor expansion coefficient which can easily be computed by Monte Carlo methods. We believe that the present paper is the first rigorous attempt to include the nonlinear feedback term in the dynamics of aggregate emissions for the purpose of emissions option pricing.

\vskip 2pt
The final Section \ref{se:option} was motivated by the desire to provide practical tools for the efficient computation of option prices
within the equilibrium framework of the paper. Indeed, because of the nonlinear feedback created by the coupling in the FBSDE, option prices computed from our equilibrium model differ from the \emph{linear} prices computed in \cite{ChesneyTaschini}, \cite{Seifert} and  \cite{CarmonaHinz} in the framework of reduced form models. We derive rigorously an approximation based on the first order asymptotics in the nonlinear feedback. This approximation can be used to compute numerically option prices and has the potential to efficiently fit the implied volatility smile present in recent option price quotes. The final subsection \ref{sub:numerics} illustrates numerically the properties of our approximation.

\vskip 4pt
\emph{Acknowledgements:} We would like to thank two anonymous referees for pointing out inconsistencies in the original proofs of Theorem \ref{thmsec3} and Proposition \ref{prop:1:1}, and for insightful comments which led to improvements in the presentation of the results of the paper.

\section{\textbf{Two Simple Models of Green House Gas Emission Control}}
\label{se:2models}

We first describe the optimization problem of a single power producer facing a carbon  cap-and-trade regulation. We assume that this producer is a small player in the market in the sense that his actions have no impact on prices and that a liquid market for pollution permits exists. In particular, we assume that the price of an allowance is given exogenously, and we use the notation $Y=(Y_t)_{0\le t\le T}$ for the (stochastic) time evolution of the price of such an emission allowance.
For the sake of simplicity we assume that $[0,T]$ is a single phase of the regulation and that no banking or borrowing of the certificates is possible at the end of the phase. For  illustration purposes, we analyze two simple models. Strangely enough, the first steps of these analyses, namely the identifications of the optimal abatement and production strategies, do not require the full force of the sophisticated techniques of optimal stochastic control.

\subsection{\textbf{Modeling First the Emissions Dynamics}}

We assume that the source of randomness in the model is given by $W=(W_t)_{0\le t\le T}$, a finite family of independent  one-dimensional  Wiener processes $W^j=(W^j_t)_{0\le t\le T}$, $1 \leq j \leq d$. In other words, $W_t=(W^1_t,\cdots, W^d_t)$ for each fixed $t\in [0,T]$.
All these Wiener processes are assumed to be defined on a complete probability space $(\Omega,\cF,\PP)$, and we denote by $\FF=\{\cF_t,t\ge 0\}$ the Brownian filtration they generate. Here, $T>0$ is a fixed time horizon representing the end of the regulation period. 

\vskip 2pt
We will eventually extend the model to include $N$ firms, but for the time being, we consider only the problem of one single firm whose production of electricity generates emissions of carbon dioxyde,  and we denote by $E_t$ the cumulative emissions up to time $t$ of the firm. We also denote by $\tE_t$ the perception at time $t$ (for example the conditional expectation) of what the total cumulative emission $E_T$ will be at the end of the time horizon. Clearly, $E$ and $\tE$ can be different stochastic processes, but they have the same terminal values at time $T$, i.e. $E_T=\tE_T$. We will assume that the dynamics of the proxy $\tE$ for the cumulative emissions of the firm are given by an It\^o process of the form: 
 \begin{equation}
 \label{fo:proxyEt}
 \tE_t
 =
 \tE_0+\int_0^t(b_s-\xi_s)ds+\int_0^t\sigma_sdW_s,
 \end{equation}
where $b$ represents the (conditional) expectation of what the rate of emission would be in a world without carbon regulation, in other words in what is usually called \emph{Business As Usual} (BAU for short), while $\xi$ is the instantaneous rate of abatement chosen by the firm. In mathematical terms, $\xi$ represents the control on emission reduction implemented by the firm. Clearly, in such a model, the firm only acts on the drift of its \emph{perceived} emissions. For the sake of simplicity we assume that the processes  $b$ and $\sigma$ are adapted and bounded. Because of the vector nature of the Brownian motion $W$, the volatility process $\sigma$ is in fact a vector of scalar volatility processes $(\sigma^{j})_{1 \leq j \leq d}$. For the purpose of this section, we could use one single scalar Wiener process and one single scalar volatility process as long as we allow the filtration $\bF$ to be larger than the filtration generated by this single Wiener process. 
This fact will be needed when we study a model with more than one firm.

Notice that the formulation \eqref{fo:proxyEt} does not guarantee the positiveness of the perceived emissions process, as one would expect it to be. This issue will be discussed in Proposition \ref{prop:sign} below, where we provide sufficient conditions on the coefficients of \eqref{fo:proxyEt} in order to guarantee the positiveness of the process $\tE$.

\vskip 2pt
Continuing on with the description of the model, we assume that the abatement decision is based on a cost function  $c:\RR\to\RR$ which is assumed to be continuously differentiable ($C^1$ in notation), strictly convex and satisfy Inada-like conditions:
 \begin{equation}
 \label{fo:inada}
 c'(-\infty)=-\infty\qquad\text{and}\qquad c'(+\infty)=+\infty.
 \end{equation}
 Note that $(c')^{-1}$ exists because of the assumption of strict convexity.
Since $c(x)$ can be interpreted as the cost to the firm for an abatement rate of level $x$, without any loss of generality we will also assume  $c(0)=\min c=0$. Notice that (\ref{fo:inada}) implies that $\lim_{x\to\pm\infty}c(x)=+\infty$.

\begin{example}
\label{rem:examples:c}
A typical example of abatement cost function is given by the quadratic cost function $c(x)=\alpha x^2$ for some $\alpha>0$ used in \cite{Seifert}, or more generally the power cost function $c(x)=\alpha|x|^{1+\beta}$ for some $\alpha>0$ and $\beta>0$.
\end{example}

The firm controls its destiny by choosing its own abatement schedule $\xi$ as well as the quantity $\theta$ of pollution permits it holds through trading in the allowance market. For these controls to be admissible, $\xi$ and $\theta$ need only to be progressively measurable processes satisfying the integrability condition 
\begin{equation}
\label{fo:admissibility}
\EE\int_0^T[\theta_t^2+\xi_t^2]dt<\infty.
\end{equation}
We denote by $\cA$ the set of admissible controls $(\xi,\theta)$. Given its initial wealth $x$, the terminal wealth $X_T$ of the firm is given by:
 \begin{equation}
 \label{fo:terminal_wealth}
 X_T=X^{\xi,\theta}_T=x+\int_0^T\theta_tdY_t-\int_0^Tc(\xi_t)dt-E_TY_T.
\end{equation}
The first integral in the right hand side of the above equation gives the proceeds from trading in the allowance market. Recall that we use the notation $Y_t$ for the price of an emission allowance at time $t$. The next term represents the abatement costs, and the last term gives the costs of the emission regulation. Recall also that at this stage, we are not interested in the existence or the formation of this price. We merely assume the existence of a liquid and frictionless market for emission allowances, and that  $Y_t$ is the price at which each firm can buy or sell one allowance at time $t$. The risk preferences of the firm are given by a utility function $U:\RR\to\RR$, which is assumed to be $C^1$, increasing, strictly concave and satisfying the Inada conditions:
 \begin{equation}
 \label{fo:inada4U}
 (U)'(-\infty)=+\infty\qquad\text{and}\qquad(U)'(+\infty)=0.
 \end{equation}
The optimization problem of the firm can be written as the computation of:
 \begin{equation}
 V(x)=\sup_{(\xi,\theta)\in\cA}\EE U(X^{\xi,\theta}_T)
 \end{equation}
where $\EE$ denotes the expectation under the \emph{historical} measure $\PP$, and $\cA$ is the set of abatement and trading  strategies $(\xi,\theta)$ admissible to the firm. The following simple result holds.

\begin{proposition}
The optimal abatement strategy of the firm is given by:
$$
\xi^*_t=[c']^{-1}(Y_t).
$$
\end{proposition}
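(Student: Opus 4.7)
The plan is to rewrite the terminal wealth so that the abatement control $\xi$ decouples from the trading control $\theta$ and appears only inside a pathwise integrand, which can then be maximized $(\omega,t)$-wise.

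First, I would use $E_T=\tilde E_T$ together with \eqref{fo:proxyEt} to expand
$$
-E_TY_T = -Y_T\Bigl[\tilde E_0+\int_0^Tb_sds+\int_0^T\sigma_sdW_s\Bigr] + Y_T\int_0^T\xi_sds.
$$
The last term is the only place where $\xi$ interacts with the (possibly very rough) endpoint value $Y_T$, and the key step is to get rid of this coupling via integration by parts (assuming, as is implicit, that $Y$ is a semimartingale so that the stochastic integral $\int \theta dY$ is well defined in \eqref{fo:terminal_wealth}):
$$
Y_T\int_0^T\xi_sds \;=\; \int_0^TY_t\xi_t\,dt \;+\; \int_0^T\Bigl(\int_0^t\xi_sds\Bigr)dY_t.
$$
Substituting this into \eqref{fo:terminal_wealth} and grouping the two $dY$-integrals gives
$$
X_T \;=\; x + \int_0^T\tilde\theta_t\,dY_t - Y_T\Bigl[\tilde E_0+\int_0^Tb_sds+\int_0^T\sigma_sdW_s\Bigr] + \int_0^T\bigl[Y_t\xi_t-c(\xi_t)\bigr]dt,
$$
where $\tilde\theta_t:=\theta_t+\int_0^t\xi_sds$.

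Next I would argue that the mapping $(\xi,\theta)\mapsto(\xi,\tilde\theta)$ is a bijection of the admissible set $\cA$ onto itself: by Cauchy--Schwarz, $\EE\int_0^T(\int_0^t\xi_sds)^2dt\le T^2\,\EE\int_0^T\xi_t^2dt<\infty$, so \eqref{fo:admissibility} is preserved in both directions. Hence, for every fixed $\xi$, optimizing the expected utility over $\theta\in\cA$ is equivalent to optimizing it over $\tilde\theta\in\cA$. In the rewritten formula, the middle (BAU emissions) term does not depend on any control and the $\int\tilde\theta dY$ term does not depend on $\xi$, so $\xi$ enters $X_T$ only through the pathwise integral $\int_0^T[Y_t\xi_t-c(\xi_t)]dt$. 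Because $U$ is increasing and this integrand can be maximized pointwise (in $(t,\omega)$) without affecting the admissibility class of $\tilde\theta$, the optimal abatement must maximize $\xi\mapsto Y_t(\omega)\xi-c(\xi)$ for each $(t,\omega)$.

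Finally, for the pointwise optimization: strict convexity of $c$ together with the Inada conditions \eqref{fo:inada} guarantees that $c'$ is a strictly increasing bijection from $\RR$ onto $\RR$, so the first-order condition $c'(\xi)=Y_t$ has a unique solution $\xi^*_t=[c']^{-1}(Y_t)$, and this solution is indeed a maximizer since $c$ is strictly convex. This yields the claimed formula. The main technical obstacle is the change-of-variable step: one must check that the integration by parts is licit under the tacit regularity assumed on $Y$, and that $\xi^*$ is itself admissible (i.e.\ that $[c']^{-1}(Y)\in L^2(dt\otimes d\PP)$), which will be a mild integrability requirement on the exogenous allowance price $Y$.
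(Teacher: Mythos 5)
Your proof is correct and follows essentially the same route as the paper's: the same integration by parts on $Y_T\int_0^T\xi_sds$, the same change of trading variable $\tilde\theta_t=\theta_t+\int_0^t\xi_sds$ decoupling the controls, and the same pathwise maximization of $\xi\mapsto Y_t\xi-c(\xi)$ using monotonicity of $U$. You are in fact slightly more explicit than the paper on two points it leaves implicit, namely that $(\xi,\theta)\mapsto(\xi,\tilde\theta)$ preserves the admissibility class \reff{fo:admissibility} and that the admissibility of $\xi^*=[c']^{-1}(Y)$ needs an integrability condition on $Y$.
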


\begin{remark}
Notice that the optimal abatement schedule is independent of the utility function. The beauty of this simple result is its powerful intuitive meaning: given a price $Y_t$ for an emission allowance, the firm implements all the abatement measures which make sense economically, namely all those costing less than the current market price of one allowance (i.e. one unit of emission).
\end{remark}

\proof
By an immediate integration by parts in the expression (\ref{fo:terminal_wealth}) of the terminal wealth, we see that:
\begin{align*}
\tE_TY_T&=Y_T\left(\tE_0+\int_0^Tb_tdt+\int_0^T\sigma_tdW_t\right)-Y_T\int_0^T\xi_tdt\\
&=Y_T\left(\tE_0+\int_0^Tb_tdt+\int_0^T\sigma_tdW_t\right)-\int_0^TY_t\xi_tdt-\int_0^T\left(\int_0^t\xi_sds\right)dY_t
\end{align*}
so that $X_T=A^{\ttheta}_T+B^\xi_T$ with
$$
A^{\ttheta}_T=\int_0^T\ttheta_tdY_t
              -Y_T\left(\tE_0+\int_0^Tb_tdt+\int_0^T\sigma_tdW_t\right)
$$
where the modified control $\ttheta$ is defined by $\ttheta_t=\theta_t+\int_0^t\xi_sds$, and 
$$
B^\xi_T=x-\int_0^T[c(\xi_t)-Y_t\xi_t]dt.
$$
Notice that $B^\xi$ depends only upon $\xi$ without depending upon $\ttheta$
while $A^{\ttheta}$ depends only upon $\ttheta$ without depending upon $\xi$. The set $\cA$ of admissible controls is equivalently described by varying the couples $(\theta,\xi)$ or $(\ttheta,\xi)$, so when computing the maximum
$$
\sup_{(\theta,\xi)\in\cA}\EE U(X_T)
=\sup_{(\ttheta,\xi)\in\cA}\EE U(A^{\ttheta}_T+B^\xi_T)
$$
one can perform the optimizations over $\ttheta$ and $\xi$ separately, for example by fixing $\ttheta$ and optimizing with respect to $\xi$ before maximizing the result with respect to $\ttheta$. The proof is complete once we notice that $U$ is increasing and that for each $t\in[0,T]$ and each $\omega\in\Omega$, the quantity $B^\xi_T$ is
maximized by the choice $\xi^*_t=(c')^{-1}(Y_t)$.\qed

\begin{remark}
The above result argues neither existence nor uniqueness of an optimal admissible set $(\xi^*,\theta^*)$ of controls. In the context of a complete market, once the optimal rate of abatement $\xi^*$ is implemented, the optimal investment strategy $\theta^*$ should hedge the financial risk created by the 
implementation of the abatement strategy. This fact can be proved using the classical tools of portfolio optimization in the case of complete market models.
Indeed, if we introduce the convex dual $\tilde U$ of $U$ defined by:
$$
\tilde U(y):=\sup_{x}\{U(x)-xy\}
$$
and the function $I$ by  $I=(U')^{-1}$ so that $\tilde U(y)=U\circ I(y)-yI(y)$ and if we denote by $\EE$ and $\EE^\QQ$ respectively the expectations with respect to $\PP$ and 
the unique equivalent measure $\QQ$ under which $Y$ is a martingale (we write $Z_t$ for  its volatility given by the martingale representation theorem), then from the a.s. inequality
$$
U(X^{\xi,\theta}_T)-y\frac{d\QQ}{d\PP}X^{\xi,\theta}_T\leq U\circ I\left(y\frac{d\QQ}{d\PP}\right)-y\frac{d\QQ}{d\PP}I\left(y\frac{d\QQ}{d\PP}\right),
$$
valid for  any admissible $(\xi,\theta)$, and $y\in\RR$, we get
$$
\EE U(X^{\xi,\theta}_T)\leq \EE U\circ I\left(y\frac{d\QQ}{d\PP}\right)+y\EE^\QQ\left[X^{\xi,\theta}_T-I\left(y\frac{d\QQ}{d\PP}\right)\right]
$$
after taking expectations under $\PP$. Computing $\EE^\QQ X^{\xi,\theta}_T$ by integration by parts we get:
\begin{align*}
\EE U(X^{\xi,\theta}_T)\leq \EE U\circ I\left(y\frac{d\QQ}{d\PP}\right)+y\left[x-\EE^\QQ\int_0^T[c\circ (c')^{-1}(Y_t)+Y_t(b_t-\right.&\left.(c')^{-1}(Y_t))+\sigma_tZ_t]dt\right.\\
&\left.-\EE^\QQ I\left(y\frac{d\QQ}{d\PP}\right)\right]
\end{align*}
if we use the optimal rate of abatement. So if we choose $y=\hat y\in\RR$ as the unique solution of:
$$
\EE^\QQ I\left(\hat y\frac{d\QQ}{d\PP}\right)=x-\EE^\QQ\int_0^Tc\circ (c')^{-1}(Y_t)+Y_t(b_t-(c')^{-1}(Y_t))+\sigma_tZ_tdt.
$$
it follows that
$$
\EE^\QQ X^{\hat\xi,\theta}_T=\EE^\QQ I\left(\hat y\frac{d\QQ}{d\PP}\right),
$$
and finally, if the market is complete, the claim $I \left(\hat y \frac{d\QQ}{d\PP}\right)$ is attainable by a certain $\theta^*$. This completes the proof.
\end{remark}

\subsection{\textbf{Modeling the Electricity Price First}}
\label{sub:elecprice}
We consider a second model for which again, part of the global stochastic optimization problem reduces to a mere path-by-path optimization. As before, the model is simplistic, especially in the case of a single firm in a regulatory environment with a liquid frictionless market for emission allowances. However, this model will become very informative later on when we consider $N$ firms interacting on the same market, and we try to construct the allowance price $Y_t$ by solving a Forward-Backward Stochastic Differential Equation (FBSDE).
The model concerns an economy with one production good (say electricity) whose production is the source of a negative externality (say GHG emissions). Its price $(P_t)_{0\le t\le T}$ evolves according to the following It\^o stochastic differential equation:
\begin{equation}
\label{fo:poft}
dP_t=\mu(P_t)dt+\sigma(P_t)dW_t
\end{equation}
where the deterministic functions $\mu$ and $\sigma$ are assumed to be $C^1$ with bounded derivatives.
At each time $t\in[0,T]$, the firm chooses its instantaneous rate of production $q_t$ and its production costs are $c(q_t)$ where $c$ is a function $c:\RR_+\hookrightarrow \RR$ which is assumed to be $C^1$ and strictly convex.
With these notations, the profits and losses from the production  at the end of the period $[0,T]$, are given by the integral:
$$
 \int_0^T[P_tq_t-c(q_t)]dt.
$$
The emission regulation mandates that at the end of the period $[0,T]$, the cumulative emissions of each firm be measured, and that one emission permit be redeemed per unit of emission. As before, we denote by $(Y_t)_{0\le t\le T}$ the process giving the price of one emission allowance. For the sake of simplicity, we assume that the cumulative emissions $E_t$ up to time $t$ are proportional to the production in the sense that $E_t=\epsilon Q_t$ where the positive number $\epsilon$ represents the rate of emission of the production technology used by the firm, and $Q_t$ denotes the cumulative production up to and including time $t$:
$$
Q_t=\int_0^tq_sds.
$$
At the end of the time horizon, the cost incurred by the firm because of the regulation is given by $E_TY_T=\epsilon Q_TY_T$.
The firm may purchase allowances: we denote by $\theta_t$ the amount of allowances held by the firm at time $t$. Under these conditions, the terminal wealth of the firm is given by:
 \begin{equation}
 X_T=X^{q,\theta}_T=x+\int_0^T\theta_tdY_t+\int_0^T[P_tq_t-c(q_t)]dt-\epsilon Q_TY_T
 \end{equation}
where as before, we used the notation $x$ for the initial wealth of the firm. The first integral in the right hand side of the above equation gives the proceeds from trading in the allowance market, the next term gives the profits from the production and the sale of electricity, and the last term gives the costs of the emission regulation. We assume that the risk preferences of the firm are given by a utility function $U:\RR\to\RR$, which is assumed to be $C^1$, increasing, strictly concave and satisfying the Inada conditions \reff{fo:inada4U} stated earlier. As before, the optimization problem of the firm can be written as:
 \begin{equation}
 V(x)=\sup_{(q,\theta)\in\cA}\EE U(X^{q,\theta}_T)
 \end{equation}
where $\EE$ denotes the expectation under the \emph{historical} measure $\PP$, and $\cA$ is the set of admissible production and trading  strategies $(q,\theta)$. This problem  is similar to those studied in \cite{belaouar-fahim-touzi} where the equilibrium issue is not addressed. As before, for these controls to be admissible, $q$ and $\theta$ need only be adapted processes satisfying the integrability condition 
\begin{equation}
\label{fo:qadmissibility}
\EE\int_0^T[\theta_t^2+q_t^2]dt<\infty.
\end{equation}

\begin{proposition}
The optimal production strategy of the firm is given by:
$$
q^*_t=(c')^{-1}(P_t-\epsilon Y_t).
$$
\end{proposition}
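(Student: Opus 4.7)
The plan is to mimic the decoupling argument used in the proof of the previous proposition: perform an integration by parts to separate the $q$-dependence from the $\theta$-dependence in the terminal wealth, and then carry out a pathwise optimization over $q$.

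First I would expand $\epsilon Q_T Y_T$ by integration by parts. Since $Q_t = \int_0^t q_s ds$ is absolutely continuous and of finite variation, there is no bracket term, so
$$
Q_T Y_T = \int_0^T Q_t dY_t + \int_0^T Y_t q_t dt.
$$
Substituting this into the expression for $X_T^{q,\theta}$ gives
$$
X_T = x + \int_0^T(\theta_t - \epsilon Q_t) dY_t + \int_0^T \bigl[(P_t - \epsilon Y_t) q_t - c(q_t)\bigr] dt.
$$
Now I would introduce the modified control $\ttheta_t := \theta_t - \epsilon Q_t$ and write $X_T = A_T^{\ttheta} + B_T^q$ with
$$
A_T^{\ttheta} = \int_0^T \ttheta_t dY_t, \qquad B_T^q = x + \int_0^T \bigl[(P_t - \epsilon Y_t) q_t - c(q_t)\bigr] dt.
$$
The next step is to verify that the parametrization $(\ttheta, q)$ of $\cA$ is equivalent to the original parametrization $(\theta, q)$: since $Q$ inherits square-integrability from $q$ via the admissibility condition \reff{fo:qadmissibility} (using Cauchy--Schwarz on the bound $|Q_t|^2 \le T\int_0^T q_s^2 ds$), the map $(\theta,q) \mapsto (\ttheta,q)$ is a bijection of $\cA$ onto itself.

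Because $A_T^{\ttheta}$ depends only on $\ttheta$ and $B_T^q$ depends only on $q$, and $U$ is increasing, I can maximize separately over the two variables: fix $\ttheta$ and optimize the integrand of $B_T^q$ pointwise in $(t,\omega)$. For each $(t,\omega)$, the function $q \mapsto (P_t - \epsilon Y_t)q - c(q)$ is strictly concave (since $c$ is strictly convex) and differentiable, so its unique maximizer is characterized by the first-order condition $c'(q) = P_t - \epsilon Y_t$, yielding $q^*_t = (c')^{-1}(P_t - \epsilon Y_t)$.

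The main obstacle I anticipate is a bit subtle: one must check that $q^*_t$ so defined is an admissible control, i.e. that the resulting process is progressively measurable and square-integrable. Progressive measurability is immediate since $(c')^{-1}$ is continuous (by strict convexity of $c$) and $P,Y$ are adapted; square-integrability follows from the assumed regularity/boundedness of the coefficients of \eqref{fo:poft} together with the integrability of $Y$, provided $(c')^{-1}$ has at most linear growth (which is the case for the examples of Example \ref{rem:examples:c}). Under these conditions the proof is complete.
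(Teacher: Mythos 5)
Your proposal is correct and follows essentially the same route as the paper: the same integration by parts for $Q_TY_T$ (with no bracket term since $Q$ has finite variation), the same decomposition $X_T=A^{\ttheta}_T+B^q_T$ with $\ttheta_t=\theta_t-\epsilon Q_t$, and the same pathwise maximization of the strictly concave map $q\mapsto (P_t-\epsilon Y_t)q-c(q)$. Your added remarks on the bijectivity of $(\theta,q)\mapsto(\ttheta,q)$ and on the admissibility of $q^*$ are sensible refinements that the paper passes over in silence.
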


\begin{remark}
As before, the optimal production strategy $q^*$ is independent of the risk aversion (i.e. the utility function) of the firm.
The intuitive interpretation of this result is clear: once a firm observes both prices $P_t$ and $Y_t$, it computes the price for which it can sell the good minus the price it will have to pay because of the emission regulation, and the firm uses this corrected price to choose its optimal rate of production in the usual way.
\end{remark}

\proof
A simple integration by part (notice that $E_t$ is of bounded variations) gives:
\begin{equation}
Q_TY_T=\int_0^TY_tdQ_t+\int_0^TQ_tdY_t=\int_0^TY_tq_tdt+\int_0^TQ_tdY_t,
\end{equation}
so that $X_T=A^{\ttheta}_T+B^q_T$ with
$$
A^{\ttheta}_T=\int_0^T\ttheta_tdY_t\qquad\text{with}\qquad \ttheta_t=\theta_t-\epsilon\int_0^tq_sds
$$
which depends only upon $\ttheta$ and 
$$
B^q_T=x+\int_0^T[(P_t-\epsilon Y_t)q_t-c(q_t)]dt,
$$
which depends only upon $q$ without depending upon $\ttheta$. 
Since the set $\cA$ of admissible controls is equivalently described by varying the couples $(q,\theta)$ or $(q,\ttheta)$, when computing the maximum
$$
\sup_{(q,\theta)\in\cA}\EE\{U(X_T)\}
=\sup_{(q,\ttheta)\in\cA}\EE\{U(A^{\ttheta}_T+B^q_T)\}
$$
one can perform the optimizations over $q$ and $\ttheta$ separately, for example by fixing $\ttheta$ and optimizing with respect to $q$ before maximizing the result with respect to $\ttheta$. The proof is complete once we notice that $U$ is increasing and that for each $t\in[0,T]$ and each $\omega\in\Omega$, the quantity $B^q_T$ is
maximized by the choice $q^*_t=(c')^{-1}(P_t-\epsilon Y_t)$. \qed

\section{\textbf{Allowance Equilibrium Price and a First Singular  FBSDE}}
\label{se:equilibrium}

The goal of this section is to extend the first model introduced in section \ref{se:2models} to an economy with $N$ firms, and solve for the allowance price.

\subsection{Switching to a Risk Neutral Framework}
As before, we assume that $Y=(Y_t)_{t\in[0,T]}$ is the price of one allowance in a one-compliance period cap-and-trade model,
and that the market for allowances is frictionless and liquid. In the absence of arbitrage, $Y$ is a martingale for a measure $\QQ$ equivalent to the historical measure $\PP$. Because we are in a Brownian filtration, 
$$
\frac{d\QQ}{d\PP}=\exp\left[\int_0^T\alpha_t dW_t\,-\,\frac12\int_0^T|\alpha_t|^2dt
\right]
$$
for some sequence $\alpha=(\alpha_t)_{t\in[0,T]}$ of adapted processes. By Girsanov's theorem, the process $\tW=(\tW_t)_{t\in[0,T]}$ defined by
$$
\tW_t=W_t-\int_0^t\alpha_sds
$$
is a Wiener process for $\QQ$ so that equation \reff{fo:proxyEt} giving the dynamics of the perceived emissions of a firm now reads:
$$
d\tE_t=(\tb_t-\xi_t)dt\,+\sigma_t d\tW_t
$$
under $\QQ$, where the new drift $\tb$ is defined by $\tb_t=b_t+\sigma_t\alpha_t$ for all $t\in[0,T]$.

\subsection{Market Model with $N$ Firms}
We now consider an economy comprising $N$ firms labelled by $\{1,\cdots,N\}$, and we work in the risk neutral framework for allowance trading discussed above. When a specific quantity such as cost function, utility, cumulative emission, trading strategy, $\ldots$ depends upon a firm, we
use a superscript ${}^i$ to emphasize the dependence upon the $i$-th firm.
So in equilibrium (i.e. whenever each firm implements its optimal abatement strategy), for each firm $i\in\{1,\cdots,N\}$ we have
$$
d\tE^i_t=\{\tb^i_t-[(c^i)']^{-1}(Y_t)\}dt+\sigma^i_t d\tW_t
$$
with given initial perceived emissions $\tE^i_0$. Consequently, the aggregate perceived emission ${\tE}$ defined by
$$
{\tE}_t=\sum_{i=1}^N\tE^i_t
$$
satisfies
$$
d{\tE}_t=\big(b_t-f(Y_t)\big)dt +{\sigma}_t d\tW_t,
$$
where
$$
b_t=\sum_{i=1}^N \tb^i_t,\qquad {\sigma}_t=\sum_{i=1}^N\sigma^i_t\qquad\text{and}\qquad f(x)=\sum_{i=1}^N[(c^i)']^{-1}(x).
$$
Again, since we are in a Brownian filtration, it follows from the martingale representation theorem that there exists a progressively measurable process $Z=(Z_t)_{t\in[0,T]}$ such that
\begin{eqnarray*}
dY_t=Z_t d\tW_t
&\mbox{and}&
\EE^{\mathbb Q}\int_0^T |Z_t|^2dt<\infty.
\end{eqnarray*}
Furthermore, in order to entertain a concrete existence and uniqueness result, we assume that $\tW$ is one-dimensional and that there exist deterministic continuous functions $[0,T]\times\RR\ni (t,e)\hookrightarrow b(t,e) \in \RR$ and $[0,T] \times \RR \ni t\hookrightarrow \sigma(t,e) \in \RR$ such that $b_t=b(t,\tE_t)$ and $\sigma_t=\sigma(t,\tE_{t})$, for all $t\in[0,T]$, $\QQ$-a.s.

\vskip 4pt\noindent
Consequently, the processes $\tE$, $Y$, and $Z$ satisfy a system of Forward Backward Stochastic Differential Equations (FBSDEs for short) under ${\mathbb Q}$, which we restate for the sake of later reference:
\begin{equation}
\label{fo:RNfbsde}
\begin{cases}
&d\tE_t=\big(b(t,\tE_t)-f(Y_t)\big)dt+\sigma(t,\tE_{t})d\tW_t,~\quad \mbox{with given}~~\tE_0\in\RR\\
&dY_t=Z_td\tW_t,~\quad Y_T=\lambda \bone_{[\Lambda,+\infty)}(\tE_T).
\end{cases}
\end{equation}
The fact that the terminal condition for $Y_T$ is given by an indicator function results from the equilibrium analysis of these markets. See \cite{CarmonaFehrHinz} and \cite{CarmonaFehrHinzPorchet}. $\Lambda$ is the global emission target set by the regulator for the entire economy. It represents the \emph{cap} part of the cap-and-trade scheme. $\lambda$ is the penalty that firms have to pay for each emission unit not covered by the redemption of an allowance. Currently, this penalty is $100$ euros in the European Union Emission Trading Scheme (EU ETS).
Notice that since all the cost functions $c^i$ are strictly convex, $f$ is strictly increasing. We shall make the following additional assumptions:
 \begin{eqnarray}
 &b(t,e) \mbox{ and } \sigma(t,e) ~~\mbox{are Lipschitz in $e$ uniformly in $t$}, \hspace{160pt}&
 \label{b-Niz}
 \\
 &\mbox{there exists an open ball $U \subset \RR^2$, $U \ni (T,\Lambda)$, such that} 
\inf_{(t,e) \in U \cap [0,T] \times \RR} \sigma^2(t,e) >0,  &
  \label{sigma-Niz} 
 \\
 &f~~\mbox{is Lipschitz continuous (and strictly increasing). \hspace{156pt} }&
 \label{f-Niz}
 \end{eqnarray}
We denote by $\mathbb{H}^0$ the collection of all $\RR$-valued progressively measurable processes on $[0,T]\times\RR$, and we introduce the subsets:
 \begin{eqnarray*}
 \mathbb{H}^2
 :=
 \Big\{Z\in\mathbb{H}^0;\;\EE^{\mathbb Q}\int_0^T|Z_s|^2ds<\infty\Big\}
 &\mbox{and}&
 \mathbb{S}^2
 :=
 \Big\{Y\in\mathbb{H}^0;\;\EE^{\mathbb Q}[\sup_{0\le t\le T}|Y_s|^2]<\infty\Big\}.
 \end{eqnarray*}

\subsection{Solving the Singular Equilibrium FBSDE}
\label{subsec:solving}
The purpose of this subsection is to prove existence and uniqueness of a solution to FBSDE \reff{fo:RNfbsde}.

\begin{theorem}
\label{thmsec3}
If assumptions \reff{b-Niz} to \reff{f-Niz} hold for a given $\Lambda \in \RR$, then, for any $\lambda>0$, FBSDE \reff{fo:RNfbsde} admits a unique solution $(\tE,Y,Z)$ $\in$ $\mathbb{S}^2\times\mathbb{S}^2\times\mathbb{H}^2$. Moreover, for any $t \in [0,T]$, $\tE_t$ is non-increasing with respect to $\lambda$ and non-decreasing with respect to $\Lambda$.
\end{theorem}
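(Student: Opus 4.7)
The plan is to use a smooth-approximation scheme for the singular terminal condition, exploit the Lipschitz well-posedness theory for the regularized FBSDEs via the associated decoupling field, and then pass to the limit using the local ellipticity assumption \reff{sigma-Niz} to control the law of $\tE_T$ near the singularity $\Lambda$. Specifically, I would pick a sequence $\psi_n:\RR\to[0,1]$ of smooth non-decreasing Lipschitz functions converging pointwise to $\bone_{[\Lambda,+\infty)}$, and consider the FBSDEs obtained by replacing the terminal condition by $Y^n_T=\lambda\psi_n(\tE^n_T)$. Since $\lambda\psi_n$ is Lipschitz, $b,\sigma$ are Lipschitz in $e$ and $f$ is Lipschitz, standard decoupling-field results (four-step scheme of Ma--Protter--Yong, or Delarue's approach) provide a unique solution $(\tE^n,Y^n,Z^n)$ with $Y^n_t=u_n(t,\tE^n_t)$, where $u_n$ is a bounded viscosity solution of the associated quasilinear parabolic PDE; monotonicity of $u_n$ in $e$ follows from the monotonicity of $\lambda\psi_n$ via FBSDE comparison.

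The next step is uniform estimates. Since $Y^n$ is a $\QQ$-martingale with terminal values in $[0,\lambda]$, one has $0\le Y^n_t\le\lambda$, which uniformly bounds $f(Y^n_t)$; combined with \reff{b-Niz} this gives uniform $\mathbb{S}^2$ bounds on $\tE^n$, and a BDG estimate yields uniform $\mathbb{H}^2$ bounds on $Z^n$. The crucial analytic input, where \reff{sigma-Niz} enters, is a uniform no-atom estimate at $\Lambda$: because $\sigma^2>0$ on the ball $U$ around $(T,\Lambda)$, the forward SDE is non-degenerate near $\Lambda$ close to time $T$, so a localized Gaussian/Malliavin argument (or a barrier construction for the strictly parabolic linearization) gives a bound of the form $\PP(|\tE^n_T-\Lambda|\le\eta)\le C\eta$ for small $\eta$, uniformly in $n$.

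I would then pass to the limit along a subsequence. The monotone decoupling fields $u_n$ converge to a limit $u$, non-decreasing in $e$; together with tightness of $(\tE^n)$ this extracts a limit $(\tE,Y,Z)$ satisfying the forward and backward dynamics of \reff{fo:RNfbsde}. The uniform no-atom estimate lets one identify $Y_T=\lambda\bone_{[\Lambda,+\infty)}(\tE_T)$: since $\psi_n(e)\to\bone_{[\Lambda,+\infty)}(e)$ for every $e\ne\Lambda$ and $\PP(\tE_T=\Lambda)=0$, the convergence of $\lambda\psi_n(\tE^n_T)$ to $\lambda\bone_{[\Lambda,+\infty)}(\tE_T)$ can be justified, and the limit solves the singular FBSDE.

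For uniqueness, I would compare two candidate solutions $(\tE^i,Y^i,Z^i)$, $i=1,2$, by computing $d[(\tE^1-\tE^2)(Y^1-Y^2)]_t$ via It\^o's formula. Using the Lipschitz property of $b$ and $\sigma$, the strict monotonicity of $f$, and the monotonicity of the terminal condition $e\mapsto\lambda\bone_{[\Lambda,+\infty)}(e)$, I expect a Gronwall-type inequality, in the spirit of the Peng--Wu/Hu--Peng monotonicity method, forcing $(\tE^1,Y^1,Z^1)=(\tE^2,Y^2,Z^2)$. The monotonicity in $\lambda$ and $\Lambda$ follows by comparison: increasing $\lambda$ (resp.\ decreasing $\Lambda$) increases $Y_T$ pointwise in $\tE_T$, hence increases $Y_t=u(t,\tE_t)$ by the monotonicity of the decoupling field, which reduces the drift of $\tE$ through $-f(Y_t)$, yielding smaller (resp.\ larger) $\tE_t$ by a standard SDE comparison. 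The hard part will be the uniform no-atom estimate $\PP(|\tE^n_T-\Lambda|\le\eta)\le C\eta$, since only local (not global) ellipticity is assumed, so the Malliavin/PDE estimate must be localized to $U$ and combined with the boundedness of $f(Y^n)$ to control trajectories reaching the ellipticity zone.
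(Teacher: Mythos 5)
Your overall strategy---regularize the indicator by smooth non-decreasing terminal conditions, solve the resulting Lipschitz FBSDEs via their decoupling fields, use the local ellipticity \reff{sigma-Niz} to kill the atom of $\tE_T$ at $\Lambda$, pass to the limit, and prove uniqueness by the Peng--Wu monotonicity computation on $d[(\tE^1_t-\tE^2_t)(Y^1_t-Y^2_t)]$---is exactly the paper's. But there are two places where your sketch, as written, has a genuine gap.

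First, to pass to the limit in the \emph{forward} equation you need much more than tightness of $(\tE^n)$ and pointwise convergence of the decoupling fields: the limit drift $b-f\circ u$ must be Lipschitz in $e$ (locally in time) for the limiting SDE to make sense, and the convergence $f\circ u_n(s,\tE^n_s)\to f\circ u(s,\tE_s)$ must be justified. The paper secures this with a uniform-in-$n$ gradient estimate $\|\partial_e u^n(t,\cdot)\|_\infty\le C/(T-t)$, obtained from a Riccati-type ODE comparison in the Ma--Wu--Zhang--Zhang framework; moreover it takes the approximating sequence $g^n$ \emph{increasing} to $\lambda\bone_{(\Lambda,\infty)}$, so that the $u^n$ increase, the $\tE^n$ decrease, and all limits exist monotonically without subsequence extraction. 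You omit the uniform Lipschitz estimate entirely, and since your $\psi_n$ are not assumed monotone in $n$, you would in any case need that estimate (as an equicontinuity bound) merely to get convergence of the $u_n$. This estimate is the technical heart of the existence proof and cannot be waved through as standard. Second, the uniform linear-rate bound $\PP(|\tE^n_T-\Lambda|\le\eta)\le C\eta$ is both stronger than what local ellipticity delivers and stronger than what is needed: after localizing to $U$ you are left with an error term $\PP[\exists t\in[T-\delta,T]:(t,\tE^n_t)\notin U,\ |\tE^n_T-\Lambda|\le \eta]$, which vanishes as $\delta\to 0$ (by a uniform modulus-of-continuity bound on the paths) but not as $\eta\to 0$ for fixed $\delta$, so no linear rate in $\eta$ survives. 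The paper only proves the qualitative statement $\QQ[\tE_T=\Lambda]=0$ for the \emph{limit} process (Girsanov plus Nualart's density theorem on the event that the path stays in $U$, then $\delta\to 0$ by path continuity), and then identifies the terminal condition by sandwiching $Y^n_t$ between $\EE^{\QQ}_t[g^n(\tE_T)]$ and $\EE^{\QQ}_t[g(\tE^n_T)]$---an argument that again exploits the monotonicity of the scheme. Both defects are repairable, but they sit precisely where the singularity of the terminal condition bites.
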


\begin{proof}
For any function $\varphi: \RR \hookrightarrow \RR$, we write FBSDE($\varphi$) for the FBSDE  \reff{fo:RNfbsde} when the function $g=\lambda {\mathbf 1}_{[\Lambda,+\infty)}$ appearing in the terminal condition in the backward component of \reff{fo:RNfbsde} is replaced by $\varphi$.

(i) We first prove uniqueness. Let $(\tE,Y,Z)$ and $(\tE',Y',Z')$ be two solutions of FBSDE \reff{fo:RNfbsde}. Clearly it is sufficient to prove that $Y=Y'$. Let us set:
\begin{equation*}
\begin{split}
&\delta E_t:=\tE_t-\tE'_t,
 \quad\delta Y_t:=Y_t-Y'_t,
 \quad\delta Z_t:=Z_t-Z'_t,
\\ 
&\beta_t:=\frac{b(t,\tE_t)-b(t,\tE'_t)}{\delta E_t}\mathbf{1}_{\{\delta E_t\neq 0\}}, \quad \Sigma_{t}:= \frac{\sigma(t,\tE_t)-\sigma(t,\tE'_t)}{\delta E_t}\mathbf{1}_{\{\delta E_t\neq 0\}}. 
\end{split}
\end{equation*}
Notice that $(\beta_t)_{0\le t\le T}$ and $(\Sigma_{t})_{0 \le t \le T}$ are bounded processes. By direct calculation, we see that
 \begin{equation*}
 d(B_t\delta E_t\delta Y_t)
 =
 -B_t\delta Y_t\big(f(Y_t)-f(Y'_t)\big)dt
 +B_t\delta E_t\delta Z_t d\tW_t,
 \end{equation*}
 where
 \begin{equation*}
 B_t:= \exp \biggl( \int_0^t \bigl(\frac{\Sigma_{s}^2}{2}-\beta_s \bigr)ds-\int_{0}^t \Sigma_{s}d\tW_{s} \biggr).
 \end{equation*}
Since $\delta E_0=0$ and $\delta E_T\delta Y_T=(\tE_T-\tE'_T)\big(g(\tE_T)-g(\tE'_T)\big)\ge 0$, because $g$ is  nondecreasing, this implies that
 $$
 \EE^{\mathbb Q}\biggl[\int_0^T B_t\delta Y_t\big(f(Y_t)-f(Y'_t)\big)dt\biggr]
 \;\le\;
 0.
 $$
Since $B_t>0$ and $f$ is (strictly) increasing, this implies that $\delta Y=0$ $dt\otimes d\QQ-$a.e. and therefore $Y=Y'$ by continuity. 

(ii) We next prove existence. Let $(g^n)_{n\ge 1}$ be an increasing sequence of smooth non-decreasing functions with $g^n\in[0,\lambda]$ and such that $g^n\longrightarrow g^-= \lambda \mathbf{1}_{(\Lambda,\infty)}$.

\noindent (ii-1) We first prove the existence of a solution when the boundary condition is given by $g^n$.
For every $n \geq 1$, the FBSDE($g^n$) satisfies the assumption of Theorems 5.6 and 7.1 in \cite{MaWuZhangZhang}
with $b_{3}=0$, $f_{1}=f_{2}=f_{3}=0$, $\sigma_{2}=\sigma_{3}=0$, $b_{2} \leq 0$
(by  \reff{f-Niz}) and  $\underline{h}=0$ (since $g^n$ is non-decreasing)
so that Condition (5.11)  in \cite{MaWuZhangZhang} holds with $\lambda=0$ and $\underline{F}(t,0)=0$ for any $\varepsilon >0$. By Theorem 7.1 in \cite{MaWuZhangZhang}, the FBSDE($g^n$) has a unique solution $(\tE^{n},Y^{n},Z^{n})$ $\in$ $\mathbb{S}^2\times\mathbb{S}^2\times\mathbb{H}^2$. Moreover, it holds $Y^{n}_t=u^{n}(t,\tE^{n}_t)$, $0 \leq t \leq T$, for some deterministic function $u^{n}$. In contrast with \cite{MaWuZhangZhang}, the function $u^n$ is not a random field but a deterministic function since the coefficients of the FBSDE are deterministic. We refer to 
\cite{PardouxPeng} for the general construction of $u^n$ when the coefficients are deterministic.
Since the sequence $(g^n)_{n \geq 1}$ is increasing we deduce from the comparison principle
\cite[Theorem 8.6]{MaWuZhangZhang}, which applies under the same assumption as \cite[Theorem 7.1]{MaWuZhangZhang},
  that, for any $t \in [0,T]$, the sequence of functions $(u^n(t,\cdot))_{n\ge 1}$ is non-decreasing. By \cite[Theorem 8.6]{MaWuZhangZhang} again, $u^n$ is non-decreasing in $\lambda$ and non-increasing in $\Lambda$. Since $g^n$ is $[0,\lambda]$-valued and $u^n(t,e)={\mathbb E}^{\mathbb Q}[g^n(\tE^n_{T}) \vert \tE^n_{t} =e ]$, we deduce that $u^n$ is $[0,\lambda]$-valued as well. Since the sequence of functions $(u^n)_{n\ge 1}$ is non-decreasing, we may then define:
 $$
 u(t,e):=\lim_{n\to\infty}\uparrow u^n(t,e),\qquad t\in[0,T],~e\in\RR.
 $$
Clearly, $u$ is $[0,\lambda]$-valued and $u(t,\cdot)$ is a non-decreasing function for any $t \in [0,T]$. Moreover, $u$ is non-decreasing in $\lambda$ and non-increasing in $\Lambda$. 

By Theorem 6.1--$(iii)$ and Theorem 7.1--$(i)$ in \cite{MaWuZhangZhang}, we know that, for every $n \geq 1$, the function 
$u^n$ is Lipschitz continuous with respect to $e$, uniformly in $t \in [0,T]$. Actually, we claim that, for any $\delta \in (0,T)$, the function $u^n(t,\cdot)$ is Lipschitz continuous in $e$, uniformly in 
$t \in [0,T-\delta]$ and in $n \geq 1$. The proof follows again from Theorem 6.1--$(iii)$ and Theorem 7.1--$(i)$ in \cite{MaWuZhangZhang}. To be more specific, we need to establish a uniform upper bound for the bounded solutions $\bar{y}$ to the first ODE in
\cite[(3.12)]{MaWuZhangZhang} associated with an arbitrary positive terminal condition $\bar{y}_{T}= \bar{h}>0$. Namely, for given bounded (measurable) functions $b_{1} : [0,T] \ni t \hookrightarrow b_{1}(t) \in \RR_{+}$ and
$b_{2} : [0,T] \ni t \hookrightarrow b_{2}(t) \in \RR_{+}$, with $\inf_{t \in [0,T]} 
b_{2}(t) >0$, we are seeking an upper bound for any bounded $(\bar{y}_{t})_{0 \leq t \leq T}$ satisfying 
\begin{equation*}
\bar{y}_{t} = \bar{y}_{T} + \int_{t}^T \bigl( b_{1}(s) \vert \bar{y}_{s} \vert - b_{2}(s) \bar{y}_{s}^2 \bigr) ds \quad ; \quad 
\bar{y}_{T} = \bar{h}> 0.
\end{equation*}
Here $b_{1}(t)$ is understood as an upper bound for the derivative of $b$ with respect to $x$, and 
$b_{2}$ as a lower bound for the derivative of $f$ with respect to $y$. As long as $\bar{y}_{t}$ doesn't vanish, we deduce from a simple computation that
\begin{equation*}
\bar{y}_{t} = \exp \biggl(  \int_{t}^T b_{1}(s) ds \biggr) \biggl( \frac{1}{\bar{y}_{T}} + \int_{t}^T b_{2}(s) \exp \biggl( \int_{s}^T b_{1}(r) dr \biggr) ds \biggr)^{-1}.
\end{equation*}
Since the right-hand side above is always (strictly) positive, we conclude that it is indeed a solution for any $t \in [0,T]$. Therefore, there exists a constant $C$, independent of $\bar{y}_{T}$, such that 
$\bar{y}_{t} \leq C/(T-t)$ for any $t \in [0,T)$. By $(iii)$ in Theorem 6.1 and $(i)$ in Theorem 7.1 in \cite{MaWuZhangZhang}, we deduce that, for any $\delta \in (0,T]$, the function $u^n(t,\cdot)$ is Lipschitz continuous with respect to $e$, uniformly in $t \in [0,T-\delta]$ and $n \geq 1$. Letting $n$ tend to $+ \infty$, we deduce that the same holds for $u$. 

Notice that the process $\tE^n$ solves the (forward) stochastic differential equation
 \begin{equation*}
 d\tE^n_t
 =
 \big(b(t,\tE^n_t)-f\circ u^n(t,\tE^n_t)\big)dt
 +\sigma(t,\tE_{t}^n)d\tW_t, \quad t \in [0,T),
 \end{equation*}
where here and in the following, we use the notation $f\circ u$ for the composition of the functions $f$ and $u$.
Since $f$ is increasing and the sequence $(u^n)_{n\ge 1}$ is non-decreasing, it follows from the comparison theorem for (forward) stochastic differential equations that the sequence of processes $(\tE^n)_{n\ge 1}$ is non-increasing. We may then define:
 $$
 \hat E_t:=\lim_{n\to\infty}\downarrow \tE^n_t
\qquad\mbox{for}~~t\in [0,T].
 $$

\noindent (ii-2) To identify the dynamics of the limiting process $\hat E$, we introduce the process $\tE$ defined on $[0,T)$ as the unique strong solution of the stochastic differential equation
 $$
 d\tE_t
 =
 (b-f\circ u)(t,\tE_t)dt + \sigma(t,\tE_{t})d\tW_t, \quad t \in [0,T) \ ; \ \tE_{0} = 0.
 $$
The fact that the function $u$ is bounded and Lipschitz-continuous in space (locally in time), together with our assumptions on $b$, $f$ and $\sigma$ guarantee the existence and uniqueness of such a strong solution. Since $b$ is at most of linear growth and $u$ is bounded, the solution cannot explode as $t$ tend to $T$, so that the process $(\tE_{t})_{0 \leq t < T}$ can be extended by continuity to the closed interval $[0,T]$. 
Since $u$ is Lipschitz continuous with respect to $e$, uniformly in $t \in [0,T-\delta]$ for any $\delta \in (0,T)$, we deduce from the classical comparison result for stochastic differential equations that $\tE^n_{t} \geq \tE_{t}$ for any $t \in [0,T)$. Letting $t$ tend to $T$, it also holds
$\tE^n_{T} \geq \tE_{T}$. Since, for any $n \geq 1$, $u^n(t,e)= \EE^{\mathbb Q} [g^n(\tE^n_T)|\tE^n_t=e]$, for $(t,e) \in [0,T) \times \RR$, and $g^n$ is a non-decreasing function, we deduce that $u^n(t,.)$ is a non-decreasing function as well. Obviously, the same holds for $u(t,\cdot)$. We then use the fact that $\tE^n\ge \tE$ together with the increase of $u^n(t,.)$ to compute, using It\^o's formula, that, for any $t \in [0,T]$:
 \begin{equation}
 \label{eq:2:12:1}
 \begin{split}
 (\tE^n_t-\tE_t)^2
 &=
 2  \int_0^t (\tE^n_s-\tE_s)\big((b-f\circ u^n)(s,\tE^n_s)
               -(b-f\circ u)(s,\tE_s)
          \big)ds 
          \\
          &\hspace{10pt}  + \int_{0}^t \vert \sigma(s,\tE^n_{s}) - \sigma(s,\tE_{s}) \vert^2 ds
          + 2 \int_{0}^t (\tE^n_s-\tE_s) \bigl( \sigma(s,\tE^n_{s}) - \sigma(s,\tE_{s}) \bigr) d\tW_{s}
 \\
 &\le
 C   \int_0^t \big|\tE^n_s-\tE_s\big|^2ds+ 
 2   \int_0^t (\tE^n_s-\tE_s)(f\circ u-f\circ u^n)(s,\tE_s)ds
          \\
  &\hspace{10pt} + 2 \int_{0}^t (\tE^n_s-\tE_s) \bigl( \sigma(s,\tE^n_{s}) - \sigma(s,\tE_{s}) \bigr) d\tW_{s}
 \\
 &\leq
 (C+1)  \int_0^t \big|\tE^n_s-\tE_s\big|^2ds 
 +  \int_0^t \big|(f\circ u-f\circ u^n)(s,\tE_s)\big|^2ds
          \\
          &\hspace{10pt}   + 2 \int_{0}^t (\tE^n_s-\tE_s) \bigl( \sigma(s,\tE^n_{s}) - \sigma(s,\tE_{s}) \bigr) d\tW_{s},
\end{split}
 \end{equation}
by the Lipschitz property of the coefficients $b$ and $\sigma$.  Taking expectation, we deduce
 \begin{equation*}
\EE^{\mathbb Q} \bigl[(\tE^n_t-\tE_t)^2\bigr]
\leq  (C+1) \EE^{\mathbb Q} \int_0^t \big|\tE^n_s-\tE_s\big|^2ds 
 + \EE^{\mathbb Q} \int_0^t \big|(f\circ u-f\circ u^n)(s,\tE_s)\big|^2ds.
 \end{equation*}
Then
 \begin{eqnarray*}
 \EE^{\mathbb Q} \bigl[ (\tE^n_t-\tE_t)^2 \bigr]
 &\le&
 (C+1)\int_0^t  \EE^{\mathbb Q} \bigl[(\tE^n_s-\tE_s)^2\bigr]ds 
 +\varepsilon^n
 \end{eqnarray*}
where $\varepsilon^n
 :=
 \EE^{\mathbb Q}\big[\int_0^T\big|(f\circ u-f\circ u^n)(s,\tE_s)\big|^2ds\big]\longrightarrow 0$, by the dominated convergence theorem. Therefore it follows from Gronwall's inequality that $\sup_{0 \leq t \leq T}\EE^{\mathbb Q} [(\tE^n_t-\tE_t)^2] \rightarrow 0$ as $n$ tends to $+ \infty$. Repeating the argument, but using in addition the Burkh\"older--Davis--Gundy inequality in \eqref{eq:2:12:1}, we deduce that 
 $\tE^n \longrightarrow \tE$ in $\mathbb{S}^2$, and as a consequence,  $\hat E=\tE$.

\vskip 2pt\noindent
(ii-3) The key point to pass to the limit in the backward equation is to prove that $\QQ[\tE_T= \Lambda] =0$. Given a small real $\delta >0$, we write 
\begin{equation}
\label{eq:29:11:1}
\begin{split}
\QQ[\tE_T= \Lambda] &=  \QQ\bigl[\tE_T= \Lambda, (t,\tE_{t})_{T-\delta \leq t \leq T} \in U\bigr]
\\
&\hspace{15pt}+ \QQ\bigl[  \tE_T= \Lambda, \exists t \in [T-\delta,T] : (t,\tE_{t}) \not \in U\bigr],
\end{split}
\end{equation}
where $U$ is as in \eqref{sigma-Niz}. (Here, the notation $(t,\tE_{t})_{T-\delta \leq t \leq T} \in U$ means that 
$(t,\tE_{t}) \in U$ for any $t \in [T-\delta,T]$.) On the event $\{ (t,\tE_{t})_{T-\delta \leq t \leq T} \in U\}$, the process 
$(\tE_{t})_{T-\delta \leq t \leq T}$ 
coincides with $(X_{t})_{T-\delta \leq t \leq T}$, solution to 
\begin{equation*}
X_{t} = \tE_{T-\delta} + \int_{T-\delta}^t \bigl( b(s,X_{s}) - f \circ u (s,X_{s}) \bigr) + \int_{T-\delta}^t
\tilde{\sigma}(s,X_{s}) d\tilde{W}_{s}, \quad T-\delta \leq t \leq T, 
\end{equation*}
where $\tilde{\sigma} : [0,T] \times \RR \hookrightarrow \RR$ is a given bounded and continuous function which is Lipschitz continuous with respect to $e$, which satisfies $\inf_{[0,T] \times \RR} \tilde{\sigma} >0$, and which coincides with $\sigma$ on $U$.  

Since $\tilde{\sigma}^{-1}$ is bounded and $f$ is bounded on $[0,\lambda]$, we may introduce an equivalent measure $\tilde\QQ\sim\QQ$ under which the process $\tilde{B}_t:=\tW_t-\tilde{\sigma}^{-1}(t,X_{t}) (f\circ u)(t,X_t)$, $t\in[T-\delta,T]$, is a Brownian motion. Then $X$ solves the stochastic differential equation
 \begin{equation}\label{EnE Niz}
 dX_t=b(t,X_t)dt+\tilde{\sigma}(t,X_{t})d\tilde{B}_t, \quad t \in [T-\delta,T] \ ; \ X_{T-\delta}=\tE_{T-\delta}.
 \end{equation}
By Theorem 2.3.1 in \cite{Nualart}, the conditional law,  under $\tilde\QQ$, of $X_T$ given the initial condition $X_{T-\delta}$ has a density with respect to the Lebesgue measure. Consequently,  $\tilde\QQ[X_T= \Lambda] =0$, and the same holds true under the equivalent measure $\QQ$. Therefore, 
 \begin{equation*}
%\label{fo:noPointMass}
 \QQ\bigl[\tE_T= \Lambda, (t,\tE_{t})_{T-\delta \leq t \leq T} \in U\bigr] = 0.
 \end{equation*}
By \eqref{eq:29:11:1}, we deduce
\begin{equation*}
\begin{split}
\QQ[\tE_T= \Lambda] &= \QQ\bigl[ \tE_{T}=\Lambda, \exists t \in [T-\delta,T] : (t,\tE_{t}) \not \in U\bigr]
\\
&\leq \QQ\bigl[ \sup_{T-\delta \leq s \leq T} \vert (s,\tE_{s}) - (T,\tE_{T}) \vert 
\geq {\rm dist}((T,\Lambda),U^{\complement}) \bigr].
\end{split}
\end{equation*}
As $\delta$ tends to 0, the right-hand side above tends to 0, so that 
 \begin{equation}
 \label{fo:noPointMass}
 \QQ[\tE_T= \Lambda] =0,
 \end{equation}
 which implies that we can use $g^-=\lambda \mathbf{1}_{(\Lambda,\infty)}$ instead of $g=\lambda \mathbf{1}_{[\Lambda,\infty)}$
 in \reff{fo:RNfbsde}. Moreover, we also have:
 \begin{equation}\label{ac Niz}
\lim_{n\to\infty} \QQ[\tE^n_T> \Lambda|\cF_t]= \QQ[\tE_T> \Lambda|\cF_t]
\end{equation}
for each $t<T$. The fact that $g^n\le g$ implies:
 $$
 Y^n_t= \EE_t^{\mathbb Q}[g^n(\tE^n_T)]\le \EE_t^{\mathbb Q}[g(\tE^n_T)]\longrightarrow  \EE_t^{\mathbb Q}[g(\tE_T)]
 $$
as $n\to \infty$ by (\ref{ac Niz}). On the other hand, since $\tE^n_T\ge \tE_T$, it follows from the non-decrease of $g_n$, the dominated convergence theorem, and (\ref{ac Niz}) that
 $$
 Y^n_t=\EE_t^{\mathbb Q}[g^n(\tE^n_T)]\ge\EE_t^{\mathbb Q}[g^n(\tE_T)]\longrightarrow \EE_t^{\mathbb Q}[g(\tE_T)].
 $$
Hence $Y^n_t\longrightarrow Y_t:= \EE_t^{\mathbb Q}[g(\tE_T)]$. Now, let $Z\in\mathbb{H}^2$ be such that
 $$
 Y_t=g(\tE_T)-\int_t^T Z_sd\tW_s, \qquad t\in[0,T].
 $$
Notice that $Y$ takes values in $[0,\lambda]$, and therefore $Y\in\mathbb{S}^2$. Similarly, using the increase and the decrease of the sequences $(u^n)_{n \geq 1}$ and $(E^n)_{n \geq 1}$ respectively, together with the increase of the functions $u^n(t,.)$ and $u(t,.)$ and the continuity of the function $u(t,\cdot)$ for $t \in [0,T)$, we see that for $t\in[0,T)$:
 $$
 u(t,\tE_t)
 =
 \lim_{n\to\infty} u^n(t,\tE_t)
 \le \liminf_{n\to\infty} u^n(t,\tE^n_t)
 \le \limsup_{n\to\infty} u^n(t,\tE^n_t)
 \le \lim_{n\to\infty} u(t,\tE^n_t)
 =u(t,\tE_t).
 $$
Since $Y^n_t=u^n(t,\tE^n_t)$, this shows that $Y_t=u(t,\tE_t)$ on $[0,T)$, and the proof of existence of a solution is complete.
\end{proof}

\vskip 2pt\noindent
%(iii) It finally remains to check that, for any $t \in [0,T]$, $\tE_{t}$ is non-increasing with respect to $\lambda$ and non-decreasing with respect to 
%$\Lambda$. Because of the comparison principle for forward SDEs and of the decrease of $f$, it is sufficient to prove that 
%the limit value function $u$ is non-decreasing with respect to $\lambda$ and non-increasing with respect to $\Lambda$. By the approximation %argument we used in point (ii-1), it is sufficient to prove that, for each $m,n \geq 1$, $u^{m,n}$ is non-decreasing with respect to $\lambda$ %and 
%non-increasing with respect to $\Lambda$. This may be shown by a similar argument as the one in (ii-1), by reducing the fully coupled %forward-backward system to a decoupled one and then by applying a standard comparison principle for backward SDEs. \qed

\vskip 6pt\noindent
\textbf{Impact on the model for emission control}. As expected, the previous result implies that the tougher the regulation (i.e. the larger $\lambda$ and/or the smaller $\Lambda$), the higher the emission reductions (the lower $\tE_t$). In particular, in the absence of regulation which corresponds to $\lambda=0$, the aggregate level of emissions is at its highest.

We also notice that the assumptions in Theorem \ref{thmsec3} can be specified in such a way that the aggregate perceived emission process $\tE$ takes non-negative values, as expected from the rationale of the model.  

\begin{proposition}
\label{prop:sign}
Let the conditions of Theorem \ref{thmsec3} hold true. Assume further that $f(0)=0$ and there exists $r>0$ such that $\sigma(t,0)=0$, $b(t,.) \ge 0$ on $[0,r]$, and $b(t,.)\le 0$ on $[-r,0]$. Then:
\\
{\rm (i)}\quad for any $\tilde E_0\ge 0$, the process $\tE$ in \eqref{fo:RNfbsde} is non-negative;
\\
{\rm (ii)}\quad if in addition $\tilde E_0>0$, then $\tE_t>0$ for all $t\in[0,T)$.
\end{proposition}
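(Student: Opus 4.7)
The plan is to deduce both claims from a single key identity, namely $u(t,0) = 0$ for every $t \in [0,T]$, where $u : [0,T] \times \RR \to [0,\lambda]$ is the deterministic function constructed in the proof of Theorem \ref{thmsec3} satisfying $Y_t = u(t,\tE_t)$. To establish this identity I first observe that the sign hypotheses combined with the continuity of $b$ force $b(t,0)=0$; combining this with $\sigma(t,0)=0$ and $f(0)=0$, a direct inspection shows that the triple $(\tE',Y',Z') \equiv (0,0,0)$ is a valid solution of the FBSDE \eqref{fo:RNfbsde} from $\tE_0 = 0$, provided the terminal condition $\lambda \bone_{[\Lambda,+\infty)}(0)$ vanishes, i.e.\ provided $\Lambda > 0$ (the physically meaningful regime, since $\Lambda$ stands for a positive emission cap). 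Uniqueness in Theorem \ref{thmsec3} then pins the solution starting from $\tE_0 = 0$ down to $(0,0,0)$, which yields $u(t,0) = Y'_t = 0$ for every $t \in [0,T]$.

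Granting this identity, part (i) follows from the pathwise comparison principle applied to the closed-form one-dimensional SDE
\[
d\tE_t = \bigl(b(t,\tE_t) - f(u(t,\tE_t))\bigr) dt + \sigma(t,\tE_t) d\tW_t,
\]
whose coefficients are, by the proof of Theorem \ref{thmsec3}, Lipschitz continuous in $e$ uniformly on any $[0,T-\delta]$. Because $u(\cdot,0) \equiv 0$ (and $b(t,0) = \sigma(t,0) = 0$), the constant process $X \equiv 0$ is a solution of this SDE starting from $0$, so pathwise comparison gives $\tE_t \ge X_t = 0$ on $[0,T-\delta]$ a.s.\ whenever $\tE_0 \ge 0$; letting $\delta \to 0$ and invoking the continuity of $\tE$ delivers (i).

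For (ii), I would localize by $\tau_\epsilon := \inf\{t \in [0,T] : \tE_t \le \epsilon\}$ and apply It\^o's formula to $-\log \tE_{t \wedge \tau_\epsilon}$, which is legitimate since $\tE > \epsilon$ before $\tau_\epsilon$. The three contributions to the It\^o drift, $-b(s,\tE_s)/\tE_s$, $f(u(s,\tE_s))/\tE_s$ and $\sigma^2(s,\tE_s)/(2\tE_s^2)$, are all uniformly bounded on $[0,T-\delta]$, because each of $b$, $f \circ u$ and $\sigma$ is Lipschitz in $e$ and vanishes at $e = 0$ (the vanishing of $f \circ u$ at $0$ is precisely the content of the first step). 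The It\^o integral is a true martingale thanks to the boundedness of $\sigma(s,\tE_s)/\tE_s$ on $[0, t \wedge \tau_\epsilon]$, so taking expectations yields a uniform estimate $\EE^{\mathbb Q}[-\log \tE_{t \wedge \tau_\epsilon}] \le -\log \tE_0 + C(T-\delta)$. Applying Fatou's lemma to the lower semi-continuous map $e \mapsto -\log e$ (bounded below by $-\log(\sup_{0 \le s \le T} \tE_s)$, which is integrable since $\tE \in \mathbb{S}^2$), one bounds $\EE^{\mathbb Q}[-\log \tE_{t \wedge \tau_0}]$ by the same finite constant, where $\tau_0 := \inf\{t : \tE_t = 0\}$; this forces $\tau_0 > t$ a.s.\ for every $t \in [0,T-\delta]$, and the arbitrariness of $\delta > 0$ then gives $\tau_0 \ge T$ a.s., which is (ii).

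The main obstacle is the first step. Both subsequent arguments are routine comparison and log-transform manipulations once the identity $u(\cdot,0) \equiv 0$ is in hand; what is delicate is verifying that the pathwise candidate $(0,0,0)$ is a bona fide FBSDE solution in the sense of Theorem \ref{thmsec3} (so that uniqueness applies and the terminal condition matches), which imposes the implicit assumption $\Lambda > 0$.
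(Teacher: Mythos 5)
Your proof is correct, and its key step --- showing that the feedback term $f(u(t,\cdot))$, like $b(t,\cdot)$ and $\sigma(t,\cdot)$, vanishes at $e=0$ --- is also the pivot of the paper's argument, though you reach it differently: you exhibit $(0,0,0)$ as an exact solution of the FBSDE from $\tE_0=0$ and invoke uniqueness, whereas the paper compares $\tE$ with the uncontrolled BAU dynamics to show that $\tE_0\le 0$ forces $\tE\le 0$, and thereby obtains the stronger statement $u(t,e)=0$ for all $e\le 0$. Both routes need $\Lambda>0$, which you rightly make explicit; the paper uses it tacitly at the same point and only acknowledges it in the remark following the proposition. Where you genuinely diverge is in how (i) and (ii) are then extracted. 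The paper notes that, since $b$, $\sigma$ and $f\circ u$ are Lipschitz in $e$ (uniformly on each $[0,T-\delta]$) and vanish at $0$, the forward equation is a linear SDE $d\tE_t=(\beta_t-\varphi_t)\tE_t\,dt+\Sigma_t\tE_t\,d\tW_t$ with bounded adapted coefficients, so $\tE_t=\tE_0\exp(\cdots)$ is an explicit stochastic exponential; this gives non-negativity and strict positivity on $[0,T)$ in one stroke. Your pathwise comparison with the zero solution for (i) and the localized $-\log$ estimate with Fatou for (ii) rest on exactly the same boundedness of the ratios $\beta$, $\Sigma$ and $\varphi$, and are a sound substitute --- merely longer, since they re-derive separately the two consequences that the closed-form exponential delivers at once.
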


\begin{proof} 
By \eqref{f-Niz}, we know that $f(y) \geq 0$ for $y \in [0,\lambda]$. Since the process $(Y_{t})_{0 \leq t \leq T}$ is $[0,\lambda]$-valued, we deduce from the comparison principle for forward SDEs that the forward process $(\tilde E_{t})_{0 \leq t \leq T}$ is dominated by the solution $(X_{t})_{0 \leq t \leq T}$ to the SDE:
\begin{equation*}
X_{t} = \tilde{E}_{0} + \int_{0}^t b(s,X_{s}) ds 
              + \int_{0}^t \sigma (s,X_{s}) d\tilde{W}_{s}, \quad 0 \leq t \leq T. 
\end{equation*}
Observe that our conditions on $b$ and $\sigma$ imply that, whenever $\tilde E_{0} \le 0$, we have $X\le 0$ and therefore $\tilde E \leq 0$. Then 
$Y_{T} = \lambda {\mathbf 1}_{[\Lambda,+\infty)}(\tilde E_{T})=0$, so that $u(0,\tilde E_{0}) = {\mathbb E}(Y_{T})=0$. Similarly, $u(t,e)=0$, for any $t \in [0,T]$ and $e \leq 0$. 

As a consequence, for any initial condition $\tilde E_{0}$, we can write $(f(Y_{t}))_{0 \leq t < T}$ in the forward equation in \eqref{fo:RNfbsde} as 
\begin{equation*}
f(Y_{t}) = f(u(t,\tilde E_{t})) = f(u(t,\tilde E_{t})) - f(u(t,0)) = 
\frac{f(u(t,\tilde E_{t}))-f(u(t,0))}{\tilde E_{t}} \tilde E_{t} {\mathbf 1}_{\{\tilde E_{t} \not = 0\}}, 
\end{equation*}
where the ratio $(f(u(t,e))-f(u(t,0)))/e$, for $e \not =0$, is uniformly bounded in $e \in \RR \setminus \{0\}$ and in $t$ in compact subsets of $[0,T)$ since $u$ is Lipschitz-continuous in space, uniformly in time in compact subsets of $[0,T)$, see Point (ii-1) in the proof of Theorem \ref{thmsec3}. Similarly, the processes
 \begin{eqnarray*}
 \beta_t:=\frac{b(t,\tilde E_t)}{\tilde E_t}
          \mathbf{1}_{\{\tilde E_t\neq 0\}}
 &\mbox{and}&
 \Sigma_t
 :=
 \frac{\sigma(t,\tilde E_t)}{\tilde E_t}\mathbf{1}_{\{\tilde E_t\neq 0\}}
\end{eqnarray*}
are adapted and bounded, by the Lipschitz property of the coefficients $b,\sigma$ in $e$ uniformly in $t$, and the fact that $b(t,0)=\sigma(t,0)=0$. 
We then deduce that $(\tilde E_{t})_{0 \leq t < T}$ may be expressed as
\begin{equation*}
\tilde E_{t} 
= 
\tilde E_{0} \exp \biggl( \int_0^t(\beta_s-\varphi_s-\frac12 \Sigma^2_s)ds 
                          +\int_0^t \Sigma_s d\tilde W_s 
                  \biggr), \quad 0 \leq t < T,
\end{equation*}
with $\varphi_{t} = [f(Y_{t})/\tilde E_{t}] {\mathbf 1}_{\{\tilde E_{t} \not = 0\}}$, $0 \leq t < T$. 
\end{proof}

\begin{remark}
Using for $u$  additional estimates from the theory of partial differential equations, we may also prove that $\varphi_{t}$ appearing in the above proof of Proposition \ref{prop:sign} grows up at most as $(T-t)^{-1/2}$ when $t \nearrow T$. This implies that $\varphi$ is integrable on the whole $[0,T]$ and thus, that $\tilde E_{T} >0$ as well when $\tilde E_{0} >0$. Since this result is not needed in this paper, we do not provide a detailed argument.
\end{remark}

\begin{remark}
The non-degeneracy of $\sigma$ in the neighborhood of $(T,\Lambda)$, see \eqref{sigma-Niz}, is compatible with the condition $\sigma(t,0)=0$ of Proposition \ref{prop:sign}, since $\Lambda$, which is the regulatory emission cap in practice,  is expected to be (strictly) positive.
\end{remark}

\section{\textbf{Enlightening Example of a Singular FBSDE}}
\label{se:example}

We saw in the previous section that the terminal condition of the backward equation can 
be a discontinuous function of the terminal value of the forward component without threatening 
existence or uniqueness of a solution to the FBSDE when the forward dynamics are non-degenerate 
in the neighborhood of the singularity of the terminal condition. In this section, we show that this is not 
the case when the forward dynamics are degenerate, even if they are hypoelliptic and the solution of the forward equation has a density before maturity. 
We explained in the introduction why this seemingly pathological mathematical property should not come as a surprise in the context of 
equilibrium models for cap-and-trade schemes.
\vskip 4pt
Motivated by the second model given in subsection \ref{sub:elecprice},  we consider the FBSDE:
\begin{equation}
\label{eq:1}
\begin{cases}
&dP_t = dW_t,\\
&dE_t = \bigl(P_t  - Y_t\bigr) dt,\\
&dY_t = Z_t dW_t, \quad 0 \leq t \leq T,
\end{cases}
\end{equation}
with the terminal condition
\begin{equation}
\label{fo:terminal}
Y_T= {\mathbf 1}_{[\Lambda,\infty)}(E_T), 
\end{equation}
for some real number $\Lambda$. Here, $(W_t)_{t \in[0,T]}$ is a one-dimensional Wiener process.
This unrealistic model corresponds to quadratic costs of production,  and choosing appropriate units for the penalty $\lambda$ and the emission rate $\epsilon$ to be $1$. (For notational convenience, the martingale measure is denoted by $\PP$ instead of ${\mathbb Q}$ as in Section \ref{se:equilibrium}, and the associated Brownian motion by $(W_{t})_{0 \leq t  \leq T}$
instead of $(\tW_{t})_{0 \leq t \leq T}$).

Below, we won't discuss the sign of the emission process $E$ as we did in Proposition 
\ref{prop:sign} above for 
the first model.
Our interest in the example (\ref{eq:1})--(\ref{fo:terminal}) is the outcome of its mathematical analysis, not its realism!
We prove the following unexpected result.

\begin{theorem}
\label{thm:1}
Given $(p,e) \in \RR^2$, there exists a unique progressively measurable
triple $(P_t,E_t,Y_t)_{0 \leq t \leq T}$ satisfying 
\reff{eq:1} together with the initial conditions $P_0=p$ and $E_0=e$, and
\begin{equation}
\label{eq:0:1}
{\mathbf 1}_{(\Lambda,\infty)}(E_T) \leq Y_T \leq {\mathbf 1}_{[\Lambda,\infty)}(E_T).
\end{equation}
Moreover, the marginal distribution of $E_t$ is absolutely continuous with respect to the Lebesgue
measure for any $0 \leq t < T$, but has a Dirac mass at $\Lambda$ when $t=T$. In other words:
\begin{equation*}
\PP\{E_T=\Lambda\} >0.
\end{equation*}
\end{theorem}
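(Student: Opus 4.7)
Following the strategy of Theorem \ref{thmsec3}, I would approximate the Heaviside terminal value by a nondecreasing sequence of smooth $[0,1]$-valued functions $g^n \uparrow \mathbf{1}_{(\Lambda,\infty)}$. Standard Markovian FBSDE theory gives, for each $n$, a unique solution $(P,E^n,Y^n,Z^n)$ of the form $Y_t^n = u^n(t,P_t,E_t^n)$, with $u^n$ nondecreasing in $e$ and Lipschitz in $(p,e)$ uniformly on $[0,T-\delta]\times\RR^2$ for every $\delta > 0$ --- the last property coming from the same ODE argument for the gradient flow as in the proof of Theorem \ref{thmsec3}. Monotone comparison yields $u^n \uparrow u$ and $E^n \downarrow E$; one identifies the limit $E$ as the strong solution of $dE_t = (P_t - u(t,P_t,E_t))\,dt$ on $[0,T)$, extended by continuity to $T$, and $Y_t := u(t,P_t,E_t)$ closed by its martingale limit at $T$ provides a solution. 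The terminal relation \eqref{eq:0:1} follows from a case analysis on $\{E_T>\Lambda\}$, $\{E_T<\Lambda\}$, $\{E_T=\Lambda\}$ in the limit of $Y^n_T = g^n(E^n_T)$. Uniqueness follows from the weighted comparison argument of Theorem \ref{thmsec3}: \eqref{eq:0:1} forces $(E_T-E'_T)(Y_T-Y'_T)\geq 0$ for two solutions, after which the strict monotonicity of the forward drift in $Y$ yields $Y=Y'$ and hence $E=E'$.

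\textbf{Absolute continuity of $E_t$ for $t < T$.} Apply Malliavin calculus. Since $u$ is Lipschitz in $(p,e)$ uniformly on $[0,T-\delta]\times\RR^2$, the map $W \mapsto E_t$ is Lipschitz in the Cameron--Martin sense on $\{t\leq T-\delta\}$, so $E_t$ is Malliavin differentiable and the derivative satisfies, in an almost-everywhere sense,
\begin{equation*}
D_r E_t = \int_r^t \bigl(1 - \partial_p u(s,P_s,E_s)\bigr) \exp\Bigl(-\int_s^t \partial_e u(v,P_v,E_v)\,dv\Bigr)\,ds,
\qquad r \leq t < T.
\end{equation*}
The exponential factor is positive and bounded below on compacta of $[0,T)$, and $\partial_p u$ is bounded strictly below $1$ near the diagonal (since $\partial_p u^n(T,\cdot)\equiv 0$ at the approximation level and this transfers to $u$ for $t$ close to $T$). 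Thus $D_r E_t > 0$ on a set of positive Lebesgue measure in $r$, and the Bouleau--Hirsch criterion gives absolute continuity of the law of $E_t$.

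\textbf{Point mass at $\Lambda$.} This is the novel content and the main obstacle. The approximating value function solves the Burgers-type HJB equation $\partial_t u^n + \tfrac12 \partial_{pp} u^n + (p - u^n) \partial_e u^n = 0$, whose first-order part in $e$ has characteristic speed $p - u^n$. As $n\to\infty$ the terminal data develops a jump of height one at $e=\Lambda$, and for $p \in (0,1)$ the speed is positive just below $\Lambda$ and negative just above, so the shock location acts as an attractor. My plan is to turn this heuristic into a quantitative trapping estimate: fix $p_0 \in (0,1)$ and a small $\eta>0$, and exhibit an event $A_n \subset \Omega$, of probability bounded below in $n$, on which $P_s$ remains in $(p_0-\eta,p_0+\eta) \subset (0,1)$ throughout $[T-\eta,T]$ and on which $E^n_s$ is forced into, and stays inside, the $O(1/n)$-thick transition layer of $u^n$ around $\Lambda$. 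Small-ball estimates for the Brownian motion $P$ combined with the absolute continuity of $E^n_{T-\eta}$ provide the lower bound on $\PP(A_n)$, and passing to the limit using the $\mathbb{S}^2$-convergence $E^n \to E$ together with \eqref{eq:0:1} then delivers $\PP(E_T=\Lambda)>0$. The principal difficulty is controlling the trapping uniformly in $n$: this requires sharp $n$-dependent gradient bounds on $u^n$ near the singularity, obtainable either by interior Schauder estimates tailored to the layer or by a probabilistic representation exploiting the attractive characteristic field.
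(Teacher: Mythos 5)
Your proposal correctly identifies the mechanism (shock formation for the Burgers-type first-order part) but it does not prove the central claim of the theorem, namely $\PP\{E_T=\Lambda\}>0$: you yourself flag the uniform-in-$n$ trapping estimate as ``the principal difficulty,'' and that is precisely the step that constitutes the proof. The paper closes this gap by a route you do not take. First it applies the change of variables $\bar E_t=E_t+(T-t)P_t$, which collapses the two-dimensional hypoelliptic system into the scalar equation $d\bar E_t=-Y_tdt+(T-t)dW_t$; all the quantitative work is then done on the \emph{limit} function $v$, not on the approximations $u^n$. The boundary estimates of Proposition \ref{prop:1:3} and Corollary \ref{corol:1:3} give the explicit bound $|v(t,e)-\psi((e-\Lambda)/(T-t))|\le C(T-t)^{1/4}$ with $\psi(x)=1\wedge x^+$ the inviscid Burgers profile, and $\bar E$ is then sandwiched between solutions of $dX^{\pm}_t=-\psi((X^{\pm}_t-\Lambda)/(T-t))dt\pm C(T-t)^{1/4}dt+(T-t)dW_t$. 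Inside the layer these linearize to $dZ^\pm_t=-\frac{Z^\pm_t-\Lambda}{T-t}dt\pm C(T-t)^{1/4}dt+(T-t)dW_t$, which are \emph{explicitly solvable} and satisfy $Z^\pm_T=\Lambda$ deterministically; on the event $\{\sup_t|W_t-W_{t_0}|\le 1/8\}$ (of positive probability) the paths stay in the layer, so $\bar E_T=\Lambda$ there. No $n$-dependent gradient bounds in the transition layer are ever needed. Your plan of tracking an $O(1/n)$ layer of $u^n$ uniformly in $n$ is a genuinely harder programme and you give no way to carry it out.

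There is a second, smaller but real, gap in your absolute-continuity argument. Your Malliavin formula is correct (it is \reff{eq:24:01:10} in the transformed variables, with $\partial_pu=(T-t)\partial_ev$), but positivity of $D_rE_t$ hinges on $\partial_pu<1$, i.e.\ $(T-t)\partial_ev(t,e)<1$. Your justification --- that $\partial_pu^n(T,\cdot)\equiv0$ ``transfers'' to $t$ near $T$ --- is not valid: the a priori bound is only $(T-t)\partial_ev\le1$, and this bound is \emph{saturated} by the limiting Burgers profile in the whole cone of characteristics, so equality cannot be excluded by soft arguments. The paper devotes Proposition \ref{prop:1:7:b} to the strict inequality; its proof writes $(T-t_0)\partial_ev(t_0,e)\le1-\EE[\exp(-\lim_h\sup_{|u|\le h}\int_{t_0}^T\partial_ev(s,\bar E^{t_0,e+u}_s)ds)]$ and shows the integral is finite with positive probability, using the off-diagonal gradient bound \reff{eq:19:4:6} and the fact that the path can escape to a region far from $\Lambda$. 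You need this (or an equivalent) argument. On the positive side, your uniqueness sketch is correct and in fact simpler than the paper's: the relaxed terminal condition \reff{eq:0:1} does imply $(E_T-E'_T)(Y_T-Y'_T)\ge0$ for any two solutions, and since $\delta E$ has no martingale part the identity $\EE[\delta E_T\delta Y_T]=-\EE\int_0^T(\delta Y_t)^2dt$ forces $Y=Y'$ directly, whereas the paper proceeds through the two comparison lemmas (Lemma \ref{lem:1} and the conservation-law Lemma \ref{lem:2}) to identify every solution with $v(t,\bar E_t)$. Your monotone-approximation existence scheme can also be made to work (the sandwich $\EE_t[{\mathbf 1}_{(\Lambda,\infty)}(E_T)]\le Y_t\le\EE_t[{\mathbf 1}_{[\Lambda,\infty)}(E_T)]$ yields \reff{eq:0:1} in the limit $t\nearrow T$), though note the paper deliberately replaces monotonicity by compactness here, and the solvability and uniform Lipschitz bounds for the approximating degenerate FBSDEs require Delarue's induction rather than the non-degenerate theory invoked in Theorem \ref{thmsec3}.
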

In particular, $(P_t,E_t,Y_t)_{0 \leq t \leq T}$ may not
satisfy the terminal condition 
$\PP\{Y_T =  {\mathbf 1}_{[\Lambda,\infty)}(E_T)\}=1$.
 However, the weaker form \reff{eq:0:1} of  terminal condition
is sufficient to guarantee uniqueness. 

\vskip 4pt
Before we engage in the technicalities of the proof we notice that the transformation 
\begin{equation}
\label{eq:25:08:1}
(P_t,E_t)_{0 \leq t \leq T} \hookrightarrow (\bar{E}_t = E_t + (T-t) P_t)_{0 \leq t \leq T}
\end{equation}
maps the original FBSDE \reff{eq:1} into the simpler one
\begin{equation}
\label{eq:1:b}
\begin{cases}
&d \bar{E}_t = - Y_t dt+ (T-t) dW_t,\\
&d Y_t = Z_t dW_t,
\end{cases}
\end{equation}
with the same terminal condition $Y_T={\mathbf 1}_{[\Lambda,\infty)}(\bar{E}_T)$.
Moreover, the dynamics of $(E_t)_{0 \leq t \leq T}$ can be recovered from those of
$(\bar{E}_t)_{0 \leq t \leq T}$ since $(P_t)_{0 \leq t \leq T}$ in \reff{eq:1} is purely autonomous.
In particular, except for the proof of the absolute continuity of $E_t$ for $t  < T$, we restrict our analysis to the
proof of Theorem \ref{thm:1}, for $\bar E$ solution of \reff{eq:1:b} since $E$ and $\bar E$ have the same terminal values at 
time $T$.
\vskip 4pt

We emphasize that system \reff{eq:1:b} is doubly singular at maturity time $T$: the diffusion coefficient of the forward equation vanishes as $t$ tends to $T$ and the boundary condition of the backward equation is discontinuous at $\Lambda$. Together, both singularities make the emission process accumulate a non-zero mass at $\Lambda$ at time $T$. 
This phenomenon must be seen as a \emph{stochastic residual} of the shock wave observed 
in the inviscid Burgers equation 
\begin{equation}
\label{eq:21:08:1}
\partial_t v(t,e) - v(t,e) \partial_e v(t,e) = 0, \quad t \in [0,T), \ e \in \RR,
\end{equation}
with $v(T,e)={\mathbf 1}_{[\Lambda,+\infty)}(e)$ as boundary condition. 
As explained below, equation \reff{eq:21:08:1} is the first-order version 
of the second-order equation associated with \reff{eq:1:b}.
\vskip 4pt

Indeed, it is well-known that the characteristics of \reff{eq:21:08:1} may meet at time $T$ and at point $\Lambda$. By analogy, the trajectories of the forward process in \eqref{eq:1:b} may hit $\Lambda$ at time $T$ with a non-zero probability, then producing a Dirac mass. In other words, the shock phenomenon behaves like a trap
into which the process $({E}_t)_{0 \leq t \leq T}$ (or equivalently the process
$(\bar{E}_t)_{0 \leq t \leq T}$) may fall with a non-zero probability. 
It is then well-understood that the noise plugged into
the forward process $(\bar{E}_t)_{0 \leq t \leq T}$ may help it to escape the
trap. For example, we saw in Section \ref{se:equilibrium} that the emission process did not see the trap when it was strongly elliptic in the neighborhood of the singularity. In the current framework, the diffusion coefficient vanishes in a linear way as time tends to maturity: it decays too fast to prevent almost every realization of the process from falling into the trap. 
\vskip 4pt

As before, we prove existence of a solution to \reff{eq:1:b} by first smoothing
the singularity in the terminal condition, solving the problem for a smooth
terminal condition, and obtaining  a solution to the original problem by a limiting argument.
However, in order to prove the existence of a limit, we will use PDE a priori estimates and compactness arguments instead of comparison and monotonicity arguments.
We call \emph{mollified equation} the system \reff{eq:1:b} with a terminal condition
\begin{equation}
\label{eq:1:1}
Y_T = \phi(\bar{E}_T),
\end{equation}
given by a Lipschitz non-decreasing function $\phi$ from $\RR$ to $[0,1]$ which we view as an approximation of the indicator function
appearing in the terminal condition \reff{fo:terminal}.

\subsection{Lipschitz Regularity in Space}

\begin{proposition}
\label{prop:1:1}
Assume that the terminal condition in \reff{eq:1:b} is given by \reff{eq:1:1} with a Lipschitz non-decreasing function $\phi$ with values in $[0,1]$. Then, for each $(t_0,e)\in [0,T]\times\RR$,
\reff{eq:1:b} admits a unique solution $(\bar{E}_t^{t_0,e},Y_t^{t_0,e},Z_t^{t_0,e})_{t_0 \le t \le T}$ satisfying $\bar{E}_{t_0}^{t_0,e}=e$ and $Y_T^{t_0,e}=\phi(\bar{E}_T^{t_0,e})$ . Moreover, the mapping
\begin{equation*}
(t,e) \hookrightarrow v(t,e) = Y_t^{t,e}
\end{equation*}
is $[0,1]$-valued,
is of class ${\mathcal C}^{1,2}$ on $[0,T) \times \RR$
and has H\"older continuous first-order derivative in time and first and second-order
derivatives in space.

Finally, the H\"older norms of $v$, $\partial_e v$, $\partial_{e,e}^2 v$
and $\partial_t v$
on a given compact subset of $[0,T) \times \RR$ do not depend upon 
the smoothness of $\phi$ provided $\phi$ is $[0,1]$-valued and
non-decreasing. Specifically, the first-order derivative in space satisfies
\begin{equation}
\label{eq:23:08:1}
0 \leq 
\partial_ev(t,e) \leq \frac{1}{T-t}, \quad t \in [0,T).
\end{equation}
In particular, $e \hookrightarrow v(t,e)$ is non-decreasing for any $t \in [0,T)$.

Finally, for a given initial condition $(t_0,e)$, the processes $(Y_t^{t_0,e})_{t_0 \leq t \leq T}$ and 
$(Z_t^{t_0,e})_{t_0 \leq t < T}$, solution to the backward equation in \reff{eq:1:b} (with $\phi$ as boundary condition),
are given by:
\begin{equation}
\label{eq:1:1:2}
Y_t^{t_0,e} = v(t,\bar{E}_t^{t_0,e}), \ t_0 \leq t \leq T \ ; \quad 
Z_t^{t_0,e} = (T-t) \partial_e v(t,\bar{E}_t^{t_0,e}), \ t_0 \leq t < T.
\end{equation}
\end{proposition}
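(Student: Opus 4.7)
The plan is to turn \reff{eq:1:b} into the associated semilinear parabolic PDE for the conjectured decoupling field $v(t,e) = Y_t^{t,e}$, to solve that PDE for smooth approximations of $\phi$, and to obtain both the FBSDE solution and the announced regularity \emph{uniformly} in the smoothness of $\phi$ by combining interior Schauder estimates with an Oleinik-type gradient bound. Substituting the ansatz $Y_t = v(t,\bar E_t)$ into \reff{eq:1:b} and applying It\^o's formula, cancellation of the bounded-variation part forces
\begin{equation*}
\partial_t v(t,e) + \frac{(T-t)^2}{2} \partial_{ee}^2 v(t,e) - v(t,e)\, \partial_e v(t,e) = 0, \quad (t,e) \in [0,T) \times \RR,
\end{equation*}
together with $v(T,\cdot) = \phi$. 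I first replace $\phi$ by a smooth, non-decreasing, $[0,1]$-valued mollification $\phi^n$. On any slab $[0,T-\delta] \times \RR$ the diffusion coefficient $(T-t)^2/2$ is bounded below, so by the classical theory of semilinear parabolic PDEs (or equivalently by the four-step scheme applied to the Markovian FBSDE with Lipschitz data), there is a unique bounded classical solution $v^n \in C^{1,2}([0,T) \times \RR)$, which in turn produces an adapted solution of \reff{eq:1:b} with terminal datum $\phi^n$ through $Y^n_t = v^n(t,\bar E^n_t)$ and $Z^n_t = (T-t)\,\partial_e v^n(t,\bar E^n_t)$. A maximum-principle argument combined with the monotonicity of $\phi^n$ shows $0 \le v^n \le 1$ and that $v^n(t,\cdot)$ is non-decreasing.

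The heart of the argument is the sharp gradient estimate $0 \le w^n := \partial_e v^n \le 1/(T-t)$, uniform in $n$. Differentiating the PDE in $e$ yields
\begin{equation*}
\partial_t w^n + \frac{(T-t)^2}{2} \partial_{ee}^2 w^n - v^n \partial_e w^n - (w^n)^2 = 0, \qquad w^n(T,\cdot) = (\phi^n)' \ge 0.
\end{equation*}
The lower bound $w^n \ge 0$ is immediate from a standard maximum principle applied after freezing $w^n$ as a coefficient in the quadratic term. For the upper bound I compare $w^n$ with the space-independent Oleinik supersolution $\psi_M(t) := 1/(T-t+1/M)$, where $M := \|(\phi^n)'\|_{\infty}$; a direct computation gives $\partial_t \psi_M = \psi_M^2$, so $\psi_M$ solves the same equation as $w^n$. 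The difference $w^n-\psi_M$ then obeys a linear parabolic equation with non-positive zero-order coefficient $-(w^n+\psi_M)$ and non-positive terminal data $(\phi^n)'-M \le 0$; the parabolic maximum principle read in reverse time yields $w^n \le \psi_M \le 1/(T-t)$, the last inequality being independent of $M$.

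With this gradient bound in hand, the PDE is uniformly parabolic on every compact subset of $[0,T) \times \RR$ with coefficients bounded independently of $n$, so interior Schauder estimates (applied after the rewriting $v^n \partial_e v^n = \tfrac{1}{2}\partial_e((v^n)^2)$) provide uniform local H\"older bounds on $v^n$, $\partial_t v^n$, $\partial_e v^n$, and $\partial_{ee}^2 v^n$. An Arzel\`a--Ascoli diagonal extraction then produces a limit $v \in C^{1,2}([0,T) \times \RR)$ inheriting the same bounds, solving the PDE classically, and satisfying $v(T,\cdot)=\phi$ (using the uniform convergence of $\phi^n$ and continuity up to $t=T$, which is available since $\phi$ is Lipschitz). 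Defining $\bar E$ as the strong solution of the forward SDE in \reff{eq:1:b} with drift $-v(t,\bar E_t)$ (well-posed since $v$ is locally Lipschitz in $e$) and applying It\^o's formula to $v(t,\bar E_t)$ yields the FBSDE solution together with the representation \reff{eq:1:1:2}; uniqueness follows by exactly the comparison argument used in step (i) of the proof of Theorem~\ref{thmsec3}, since $\phi$ is non-decreasing. The main obstacle is precisely the Oleinik bound: the fact that it is \emph{uniform} in the Lipschitz constant of $\phi$ is what will allow the discontinuous indicator terminal condition to be treated in Theorem \ref{thm:1}, and making the maximum principle rigorous on the unbounded spatial domain with a viscosity that degenerates like $(T-t)^2$ requires some care.
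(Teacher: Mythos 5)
Your proposal reaches the right statement by a genuinely different route on the two substantive points, and the core of it is correct. For the key estimate \reff{eq:23:08:1} the paper argues probabilistically: it differentiates the stochastic flow to get $\partial_e \bar{E}_t^{t_0,e} = \exp\bigl(-\int_{t_0}^t \partial_e v(s,\bar{E}_s^{t_0,e})\,ds\bigr) \in [0,1]$, observes that $\bigl((T-t)Y_t^{t_0,e}-\bar{E}_t^{t_0,e}\bigr)_t$ is a martingale, and reads off $(T-t_0)\partial_e v(t_0,e) = 1 - \EE\bigl[\partial_e\bar{E}_T^{t_0,e}\bigr] \le 1$. Your Oleinik comparison is a legitimate analytic substitute: $\psi_M(t)=1/(T-t+1/M)$ is indeed an exact space-independent solution of the differentiated equation and dominates $(\phi^n)'$ at $t=T$, and it even delivers the strict bound $(T-t)\partial_e v<1$ that the paper only obtains later, by a harder argument, in Proposition \ref{prop:1:7:b}. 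What the paper's version buys is that it never differentiates the PDE, so it only needs $v\in\mathcal{C}^{1,2}$ on $[0,T)\times\RR$; yours needs $w^n=\partial_e v^n$ to be a classical solution of the differentiated equation, continuous and bounded up to $t=T$, which for smooth $\phi^n$ is available but is an additional regularity input. Two smaller points: before invoking Schauder you also need a uniform-in-$n$ H\"older-in-time modulus for the drift $-v^n$ (the gradient bound only controls its spatial regularity); the paper derives the $1/2$-H\"older continuity in time probabilistically from $Z^n\in[0,1]$, and you should either do the same or interpolate via Krylov--Safonov. Your uniqueness argument (the monotonicity computation of step (i) of Theorem \ref{thmsec3}) transfers verbatim and is equivalent to the paper's appeal to Peng and Wu.

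The one genuine gap is the construction of $v^n$ itself. You solve the PDE ``on any slab $[0,T-\delta]\times\RR$'' where it is uniformly parabolic, but the terminal datum for that slab is $v^n(T-\delta,\cdot)$, which you can only produce by first solving on $[T-\delta,T]$, where the diffusion coefficient $(T-t)^2/2$ degenerates and neither classical uniformly parabolic theory nor the four-step scheme applies. This is precisely the step the paper handles with the small-time solvability theory for Lipschitz FBSDEs (Theorem 1.1 and Corollary 1.5 of \cite{Delarue02}), which requires no non-degeneracy, yields a Lipschitz decoupling field on $[T-\delta,T]$, and is then patched to the non-degenerate solution on $[0,T-\delta]$. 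You must either import that argument or replace it by a purely analytic one, e.g.\ solve on $[0,T-1/m]$ with terminal datum $\phi^n$ imposed at time $T-1/m$, establish your Oleinik bound and a uniform modulus of continuity up to the terminal time there, and let $m\to\infty$. As written, the existence of $v^n$ on all of $[0,T)$, its continuity up to $t=T$ with value $\phi^n$, and the boundedness of $\partial_e v^n$ near $t=T$ --- needed both to run the comparison for $w^n$ and to keep the zero-order coefficient $-(w^n+\psi_M)$ bounded --- are asserted rather than proved.
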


\begin{proof}
The problem is to solve the system 
\begin{equation}
\label{eq:2}
\begin{cases}
&d\bar{E}_t = -Y_t dt + (T-t) dW_t,\\
&dY_t = Z_t dW_t,
\end{cases}
\end{equation}
with $\xi=\phi(\bar E_T)$ as terminal condition and $(t_0,e)$ as initial condition.
The drift in the first equation, i.e. $(t,y) \in [0,T] \times \RR \hookrightarrow -y$, is decreasing in $y$, and Lipschitz continuous, uniformly in $t$. 
By Theorem 2.2 in Peng and Wu \cite{peng:wu} (with $G=1$, $\beta_1=0$ and $\beta_2=1$ therein), we know that equation \reff{eq:2} admits at most one solution.
Unfortunately, Theorem 2.6 in Peng and Wu (see also Remark 2.8 therein) does not apply to prove existence directly. 

To prove existence, we use a variation of the induction method in Delarue \cite{Delarue02}. In the whole argument, 
$t_{0}$ stands for the generic initial time at which the process $\bar{E}$ starts. The proof consists in extending the local solvability property of Lipschitz forward-backward SDEs as the distance $T-t_{0}$ increases, so that the value of $t_{0}$ will vary in the proof. Recall indeed from Theorem 1.1 in \cite{Delarue02} that existence and uniqueness hold in small time. Specifically, we can find some small positive real number $\delta$, possibly depending on the Lipschitz constant of $\phi$, such that \reff{eq:2} admits a unique solution when $t_{0}$ belongs to the interval $[T-\delta,T]$. Remember that the initial condition is $\bar{E}_{t_0}=e$. As a consequence, we can define the \emph{value function}
$v : [T-\delta,T] \times \RR\ni (t_0,e)  \hookrightarrow Y_{t_0}^{t_0,e}$. By Corollary 1.5 in \cite{Delarue02}, it is known to be Lipschitz in space uniformly in time as long as the initial time parameter $t_{0}$ remains in $[T-\delta,T]$. The diffusion coefficient $T-t$ in \eqref{eq:2} being uniformly bounded away from $0$ on the interval
$[0,T-\delta]$, by Theorem 2.6 in \cite{Delarue02},  \reff{eq:2}
admits a unique solution on $[t_0,T-\delta]$ when $t_{0}$ is assumed to be in $[0,T-\delta)$. Therefore, we can construct a solution to \reff{eq:2} in two steps when $t_{0} < T-\delta$: we first solve  \reff{eq:2} on $[t_0,T-\delta]$ with $\bar{E}_{t_0}=e$ as initial condition and $v(T-\delta,\cdot)$ as giving the terminal condition, the solution being denoted by $(\bar{E}_t,Y_t,Z_t)_{t_0 \leq t \leq T-\delta}$; then, we solve \reff{eq:2} on $[T-\delta,T]$ with the previous $\bar{E}_{T-\delta}$ as initial condition and with $\phi$ as giving the terminal condition, the solution being denoted by $(\bar{E}_t',Y_t',Z_t')_{T-\delta \leq t \leq T}$. We already know that 
$\bar{E}_{T-\delta}'$ matches $\bar{E}_{T-\delta}$. To patch $(\bar{E}_t,Y_t,Z_t)_{t_0 \leq t \leq T-\delta}$ and 
$(\bar{E}_t',Y_t',Z_t')_{T-\delta \leq t \leq T}$ into a single solution over the whole time interval $[t_0,T]$, it is sufficient to check the continuity property 
$Y_{T-\delta}=Y_{T-\delta}'$ as done in Delarue \cite{Delarue02}. This continuity property is a straightforward consequence of Corollary 1.5 in \cite{Delarue02}: 
on $[T-\delta,T]$, $(Y_t')_{T-\delta \leq t \leq T}$ has the form 
$Y_t'=v(t,\bar{E}_t')$. In particular, $Y_{T-\delta}'=v(T-\delta,\bar{E}_{T-\delta}')=v(T-\delta,\bar{E}_{T-\delta})=Y_{T-\delta}$. This proves the existence of a solution to \reff{eq:2} with $\bar{E}_{t_0}=e$ as initial condition.

We conclude that, for any $(t_0,e)$, \reff{eq:2} admits a unique solution $(\bar{E}_t^{t_0,e},Y_t^{t_0,e},Z_t^{t_0,e})_{t_0 \leq t \leq T}$ satisfying $\bar{E}_{t_0}^{t_0,e}=e$ and $Y_T^{t_0,e} = \phi(\bar{E}_T^{t_0,e})$. 
In particular, the value function $v : (t_0,e) \hookrightarrow Y_{t_0}^{t_0,e}$ (i.e. the value at time $t_0$ of the solution $(Y_t)_{t_0 \leq t \leq T}$ under the initial condition  $\bar{E}_{t_0}=e$) can be defined on the whole $[0,T] \times \RR$.

From Corollary 1.5 in \cite{Delarue02} and the discussion above, we know that
the mapping $e \hookrightarrow v(t,e)$ is Lipschitz continuous 
when $T-t$ is less than $\delta$ and that, for any $t_0 \in [0,T]$,
$Y_t^{t_0,e}$ has the form $Y_t^{t_0,e} = v(t,\bar{E}_t^{t_0,e})$
when $T-t$ is less than $\delta$. In particular, on any $[0,T-\delta']$, $\delta'$ being less than $\delta$,  \reff{eq:2} may be seen as a uniformly elliptic FBSDE with a Lipschitz boundary condition.
By Theorem 2.1 in Delarue and Guatteri \cite{DelarueGuatteri06} (together with the discussion in Section 8 therein), we deduce that $v$ belongs to  ${\mathcal C}^{0}([0,T] \times \RR) \cap
{\mathcal C}^{1,2}([0,T) \times \RR)$, that $t \hookrightarrow 
\|\partial_{e} v(t,\cdot)\|_{\infty}$ is bounded on the whole $[0,T]$ and that $t \hookrightarrow \|\partial_{ee}^2 v(t,\cdot)\|_{\infty}$
is bounded on every compact subset of $[0,T)$\footnote{Specifically, Theorem 2.1 in \cite{DelarueGuatteri06} says that $v$ belongs to ${\mathcal C}^0([0,T) \times \RR)$ and that
$t \hookrightarrow \|\partial_{e} v(t,\cdot)\|_{\infty}$ is bounded on every compact subset of $[0,T)$.
In fact, by Corollary 1.5 in Delarue \cite{Delarue02}, we know that
$v$ belongs to ${\mathcal C}^0([T-\delta,T] \times \RR)$ and that
$t \hookrightarrow \|\partial_{e} v(t,\cdot)\|_{\infty}$ is bounded on $[T-\delta,T]$ for
$\delta$ small enough.}. Moreover, 
\reff{eq:1:1:2} holds.

By the martingale property of $(Y_t^{t_0,e})_{t_0 \leq t \leq T}$, it is well-seen that
$v$ is $[0,1]$-valued. To prove that it is non-decreasing (with respect to $e$), we follow the proof of Theorem \ref{thmsec3}. We notice that $(\bar{E}_t^{t_0,e})_{t_0 \leq t \leq T}$ satisfies the SDE:
\begin{equation*}
d \bar{E}_t^{t_0,e} = - v(t,\bar{E}_t^{t_0,e}) dt + (T-t) dW_t, \quad t_0 \leq t \leq T,
\end{equation*}
which has a Lipschitz drift with respect to the space variable. In particular, for $e \leq e'$, $\bar{E}_T^{t_0,e} \leq \bar{E}_T^{t_0,e'}$, so that 
$v(t_0,e) = {\mathbb E}\phi(\bar{E}_T^{t_0,e}) \leq 
{\mathbb E}\phi(\bar{E}_T^{t_0,e'}) = v(t_0,e')$.

\vskip 2pt
We now establish \reff{eq:23:08:1}.
For $t_0 \leq t \leq T$, the forward equation in \reff{eq:2} has the form
\begin{equation}
\label{eq:25:08:2}
\bar{E}_t^{t_0,e} = e - \int_{t_0}^t v(s,\bar{E}_s^{t_0,e}) ds + \int_{t_0}^t 
(T-s) dW_s.
\end{equation}
Since $v$ is ${\mathcal C}^1$ in space on $[0,T) \times \RR$ with bounded Lipschitz first-order derivative, we
can apply standard results on the differentiability of stochastic flows 
(see for example Kunita's monograph \cite{Kunita.book}). 
We deduce that, for almost every realization of the randomness and
for any $t \in [t_0,T)$, the mapping $e \hookrightarrow \bar{E}_t^{t_0,e}$ is differentiable and
\begin{equation}
\label{fo:partialE}
\partial_e \bar{E}_t^{t_0,e} = 1 - \int_{t_0}^t \partial_e v(s,\bar{E}_s^{t_0,e})
\partial_e \bar{E}_s^{t_0,e} ds.
\end{equation}
In particular,
\begin{equation}
\label{eq:19:04:100}
\partial_e \bar{E}_t^{t_0,e} = \exp \biggl( - \int_{t_0}^t
\partial_e v(s,\bar{E}_s^{t_0,e}) ds \biggr).
\end{equation}
Since $v$ is non-decreasing, we know that $\partial_e v \geq 0$
on $[0,T) \times \RR$
so that $\partial_e \bar{E}_t^{t_0,e}$ belongs to $[0,1]$.
Since $\partial_e v$ is also bounded on the whole $[0,T) \times \RR$, we deduce
by differentiating the right-hand side in \reff{eq:25:08:2} with $t=T$ that $\partial_e \bar{E}_T^{t_0,e}$ exists as well and that $\partial_e \bar{E}_T^{t_0,e}
= \lim_{t \rightarrow T} \partial_e \bar{E}_t^{t_0,e} \in [0,1]$.
To complete the proof of \reff{eq:23:08:1}, we then notice that for any $t \in [t_0,T]$,
\begin{equation*}
d \bigl[ (T-t) Y_t^{t_0,e} - \bar{E}_t^{t_0,e} \bigr]
= (T-t) dY_t^{t_0,e} - (T-t) dW_t = (T-t) [Z_t^{t_0,e} - 1] dW_t,
\end{equation*}
so that taking expectations we get:
 \begin{equation*}
 (T-t_0) v(t_0,e) - e = - {\mathbb E} \bigl[\bar{E}_T^{t_0,e} \bigr].
 \end{equation*}
Now, differentiating with respect to $e$, we have:
 \begin{equation*}
 (T-t_0) \partial_e v(t_0,e) = 1 - {\mathbb E}
  \bigl[\partial_e \bar{E}_T^{t_0,e} \bigr] \leq 1,
  \end{equation*}
which concludes the proof of \reff{eq:23:08:1}. 

It now remains to investigate the H\"older norms (both in time and space) of $v$, $\partial_e v$, $\partial_{ee}^2 v$ and $\partial_t v$.
\noindent
We first deal with $v$ itself. For $0 < t < s < T$,
\begin{equation*}
\begin{split}
v(s,e) - v(t,e) &= v(s,e) - v(s,\bar{E}_s^{t,e}) + v(s,\bar{E}_s^{t,e})
- v(t,e)
\\
&= v(s,e) - v(s,\bar{E}_s^{t,e}) + {Y}_s^{t,e} - {Y}_t^{t,e}
\\
&= v(s,e) - v(s,\bar{E}_s^{t,e}) + \int_t^s {Z}_r^{t,e} dB_r.
\end{split}
\end{equation*}
From \reff{eq:23:08:1}, we deduce
\begin{equation*}
\begin{split}
|v(s,e) - v(t,e)| &\leq \frac{1}{T-s} {\mathbb E}\bigl| \bar{E}_s^{t,e} - e \bigr|
+ {\mathbb E} \biggl| \int_t^s {Z}_r^{t,e} dB_r \biggr|
\\
&\leq \frac{1}{T-s} \biggl[ s-t + \biggl(\int_t^s (T-r)^2 dr \biggr)^{1/2} \biggr]
 + {\mathbb E} \biggl[ \int_t^s |{Z}_r^{t,e}|^2 dr \biggr]^{1/2}
\\
&\leq \frac{1}{T-s} \biggl[ s-t + \biggl(\int_t^s (T-r)^2 dr \biggr)^{1/2} \biggr]
+ (s-t)^{1/2},
\end{split}
\end{equation*}
since ${Z}_r^{t,e} = (T-r) \partial_e v(r,\bar{E}_r^{t,e}) \in [0,1]$.
So for $\epsilon>0$, $v$ is 1/2-H\"older continuous in time
$t \in [0,T-\epsilon]$, uniformly in space and in the smoothness of $\phi$.

Now, by Theorem 2.1 in Delarue and Guatteri \cite{DelarueGuatteri06}, we know that
$v$ satisfies the PDE
\begin{equation}
\label{eq:3}
\begin{split}
&\partial_t v (t,e) + \frac{(T-t)^2}{2} \partial^2_{ee}v(t,e) - v(t,e)  \partial_ev(t,e)=0, \quad t \in [0,T),  \ e \in \RR,
\end{split}
\end{equation}
with $\phi$ as boundary condition.  
On $[0,T-\epsilon] \times \RR$, $\epsilon >0$, equation \reff{eq:3} is a non-degenerate second-order PDE of dimension 1 with 
 $- v$ as drift, this drift being ${\mathcal C}^{1/2,1}$-continuous independently of the smoothness of $\phi$. By well-known results in 
 PDEs (so called Schauder estimates, see for example Theorem 8.11.1 in Krylov \cite{Krylov}), for any small $\eta>0$, the ${\mathcal C}^{(3-\eta)/2,3-\eta}$-norm of $v$ on $[0,T-\epsilon] \times \RR$ is independent of the smoothness of  $\phi$.  \end{proof}

\begin{remark}
\label{rem:24:08:1}
As announced, equation \reff{eq:3} is of Burgers type. In particular, it has the same first-order part as equation \reff{eq:21:08:1}.
\end{remark}

\subsection{Boundary Behavior}
Still in the framework of a terminal condition given by a smooth (i.e. 
non-decreasing Lipschitz) function with values in $[0,1]$, we investigate the shape of the solution as $t$ approaches $T$.

\begin{proposition} 
\label{prop:1:3}
Assume that there exists some real $\Lambda^+$ such that $\phi(e)=1$ on $[\Lambda^+,+\infty)$. Then, there exists a universal constant $c>0$ such that for any $\delta >0$
\begin{equation}
\label{eq:22:3:1}
v\bigl(t,\Lambda^++T-t+\delta\bigr) \geq 1- 
  \exp \bigl( - c \frac{\delta^2}{(T-t)^3} \bigr), \quad 0 \leq t < T.
\end{equation}
In particular, $v(t,e) \rightarrow 1$ as $t
\nearrow T$ uniformly in $e$ in compact subsets of 
$(\Lambda^+,+\infty)$.

Similarly, assume that there exists an interval $(-\infty,\Lambda^-]$ such that $\phi(e)=0$ on $(-\infty,\Lambda^-]$. Then, for any $\delta >0$,
\begin{equation}
\label{eq:21}
v(t,\Lambda^--\delta) \leq 
  \exp \bigl( - c \frac{\delta^2}{(T-t)^3} \bigr).
\end{equation}
In particular, $v(t,e) \rightarrow 0$ as $t \nearrow T$ uniformly in $e$ in compact subsets of 
$(-\infty,\Lambda^-)$.
\end{proposition}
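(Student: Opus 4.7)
The plan is to exploit the probabilistic representation $v(t_0,e) = \EE[\phi(\bar{E}_T^{t_0,e})]$ coming from Proposition \ref{prop:1:1}, combined with the crucial fact already established there that $v$, and hence the integrand $Y_s$ appearing in the forward equation, takes values in $[0,1]$. Writing the forward equation in integral form,
\[
\bar{E}_T^{t_0,e} = e - \int_{t_0}^T v(s,\bar{E}_s^{t_0,e}) \, ds + M, \qquad M := \int_{t_0}^T (T-s) \, dW_s,
\]
and bounding the drift term $\int_{t_0}^T v \, ds$ between $0$ and $T-t_0$, I obtain the deterministic sandwich
\[
e - (T-t_0) + M \le \bar{E}_T^{t_0,e} \le e + M,
\]
in which only the centered Gaussian $M$, with variance $(T-t_0)^3/3$, is random.

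For the lower bound \reff{eq:22:3:1}, I would choose $e = \Lambda^+ + T - t_0 + \delta$, so that the sandwich yields $\bar{E}_T^{t_0,e} \ge \Lambda^+ + \delta + M$. On the event $\{M \ge -\delta\}$ this quantity lies in $[\Lambda^+,+\infty)$, and the hypothesis on $\phi$ together with its monotonicity forces $\phi(\bar{E}_T^{t_0,e}) = 1$; on the complementary event I would simply use $\phi \ge 0$. A standard Gaussian tail bound,
\[
\PP(M < -\delta) = \PP(M > \delta) \le \exp\Bigl( - \frac{3 \delta^2}{2 (T-t_0)^3} \Bigr),
\]
then delivers \reff{eq:22:3:1} with universal constant $c = 3/2$. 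The upper bound \reff{eq:21} is entirely symmetric: taking $e = \Lambda^- - \delta$, the upper side of the sandwich gives $\bar{E}_T^{t_0,e} \le \Lambda^- - \delta + M$, and on $\{M \le \delta\}$ this lies in $(-\infty,\Lambda^-]$ where $\phi$ vanishes, so only $\PP(M > \delta)$ survives in the expectation.

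The uniform convergence claims follow immediately from the explicit bounds: for $e$ ranging in a compact set $[a,b] \subset (\Lambda^+, +\infty)$ and $T-t$ small enough, one has $\delta := e - \Lambda^+ - (T-t) \ge (a - \Lambda^+)/2$ uniformly in $e$, and the exponential factor in \reff{eq:22:3:1} then drives $1 - v(t,e)$ to zero as $t \nearrow T$; the symmetric argument handles compact subsets of $(-\infty, \Lambda^-)$. I do not foresee any significant obstacle here; the essential observation is that the a priori bound $v \in [0,1]$ from Proposition \ref{prop:1:1} is strong enough to control the drift of the forward equation path-by-path, reducing the whole estimate to a pair of one-line Gaussian tail computations.
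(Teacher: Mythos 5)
Your proof is correct and follows essentially the same route as the paper's: both arguments rest on the a priori bound $Y_t = v(t,\bar E_t)\in[0,1]$, which traps $\bar E_T^{t_0,e}$ between $e-(T-t_0)+M$ and $e+M$ with $M=\int_{t_0}^T(T-s)\,dW_s$ of variance $(T-t_0)^3/3$, and then reduce \reff{eq:22:3:1} and \reff{eq:21} to a Gaussian tail estimate. The only cosmetic difference is that the paper packages the drift bound as a comparison with an auxiliary process $E^-$ of constant drift $-1$, whereas you bound the drift integral pathwise directly in the integral equation.
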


\begin{proof} We only prove \reff{eq:22:3:1}, the proof of \reff{eq:21} being similar. To do so, we fix 
$(t_0,e) \in [0,T) \times \RR$ and consider the following system
\begin{equation*}
\begin{cases}
&dE_t^- = - dt + (T-t) dW_t\\
&dY_t^{-} = Z_t^{-} dW_t, \quad t_0 \leq t \leq T,
\end{cases}
\end{equation*}
with $E_{t_0}^- = e$ as initial condition for the forward equation
and $Y_T^- = \phi(E_T^-)$ as terminal condition for the backward part. 
The solution $(\bar{E}_t^{t_0,e},Y_t^{t_0,e},Z_t^{t_0,e})_{t_0 \leq t \leq T}$
given by Proposition \ref{prop:1:1} with $\bar{E}_{t_0}^{t_0,e}=e$ and $Y_T^{t_0,e} = \phi(\bar{E}_T^{t_0,e})$ satisfies $Y_t^{t_0,e} \in [0,1]$ for any $t \in [t_0,T]$ so that 
$E_t^- \leq \bar{E}_t^{t_0,e}$ almost surely for $t \in [t_0,T]$.
Now, since $\phi$ is non-decreasing, 
$\phi(E_T^-) \leq \phi(\bar{E}_T^{t_0,e})$ almost surely, namely
$Y_{t_0}^- \leq Y_{t_0}^{t_0,e}$. Setting $v^-(t_0,e) =Y_{t_0}^{-}$, recall that $Y_{t_0}^{-}$ is deterministic, we see that:
\begin{equation}
\label{eq:1:1:1}
v^-(t_0,e) \leq v(t_0,e) \leq 1.
\end{equation}
Now, since
\begin{equation*}
v^-(t_0,e) = {\mathbb E} \phi(E_T^-) =  {\mathbb E} \phi\biggl( e- (T-t_0) + \int_{t_0}^T (T-s) dW_s \biggr)
\end{equation*}
with $\phi \geq {\mathbf 1}_{[\Lambda^+,+\infty)}$,
by choosing $e= \Lambda^++(T-t_0) + \delta$ as in the statement of Proposition 
\ref{prop:1:3} we get:
\begin{equation*}
\begin{split}
{\mathbb E} \phi(E_T^-) &=
{\mathbb E}  \phi \biggl( \Lambda^+ + \delta + \int_{t_0}^T (T-s) dW_s\biggr) 
\\
&\geq {\mathbb P} \biggl[ \Lambda^+ + \delta + \int_{t_0}^T (T-s) dW_s\geq \Lambda^+ \biggr]
\\
&=  \PP \biggl[ \int_{t_0}^T (T-s) dW_s  \geq -\delta \biggr]= 1 - \PP \biggl[ \int_{t_0}^T (T-s) dW_s  \leq -\delta \biggr]
\end{split}
\end{equation*}
and we complete the proof by applying standard estimates for the decay of the cumulative distribution function 
of a Gaussian random variable. Note indeed that
$\text{var}(\int_{t_0}^T (T-s) dW_s)=(T-t_0)^3/3$ if we use the notation $\text{var}(\xi)$ for the variance of 
a random variable $\xi$.
\end{proof}

The following corollary elucidates the boundary behavior between $\Lambda^-$ and
$\Lambda^+ + (T-t)$ with $\Lambda^-$ and $\Lambda^+$ as above.

\begin{corollary}
\label{corol:1:3}
Choose $\phi$ as in Proposition \ref{prop:1:3}. If 
there exists an interval $[\Lambda^+,+ \infty)$ on which $\phi(e)=1$, then
for $\alpha>0$ and $e < \Lambda^+ + (T-t) + (T-t)^{1+\alpha}$ we have:
\begin{equation}
\label{eq:01:04:1}
v(t,e) \geq \frac{e-\Lambda^+}{T-t} -  \exp \bigl( - \frac{c}{(T-t)^{1-2\alpha}} \bigr) - 
 (T-t)^{\alpha},
\end{equation}
for the same $c$ as in the statement of Proposition \ref{prop:1:3}.

Similarly, 
if there exists an interval $(-\infty,\Lambda^-]$ on which $\phi(e)=0$, then
for $\alpha>0$ and $e > \Lambda^-  - (T-t)^{1+\alpha}$ we have:
\begin{equation}
\label{eq:01:04:2}
v(t,e) \leq  \frac{e-\Lambda^-}{T-t} +  \exp \bigl( - \frac{c}{(T-t)^{1-2\alpha}} \bigr) + 
 (T-t)^{\alpha}.
\end{equation}
\end{corollary}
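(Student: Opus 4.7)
The plan is to combine two ingredients we already have: the uniform Lipschitz bound
$0 \leq \partial_{e} v(t,e) \leq 1/(T-t)$ from Proposition \ref{prop:1:1}, and the
pointwise boundary estimate \reff{eq:22:3:1} (resp. \reff{eq:21}) from Proposition \ref{prop:1:3}.
The Lipschitz bound forces $v(t,\cdot)$ to increase at rate at most $1/(T-t)$, so a lower bound for $v(t,\cdot)$ at one specific point propagates to a lower bound at every nearby point, and it is exactly this propagation that produces the affine term $(e-\Lambda^{+})/(T-t)$ appearing in \reff{eq:01:04:1}.

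For the lower bound \reff{eq:01:04:1}, I would set $\delta := (T-t)^{1+\alpha}$ in
\reff{eq:22:3:1}, which gives
\begin{equation*}
v\bigl(t,\Lambda^{+}+(T-t)+(T-t)^{1+\alpha}\bigr) \;\geq\; 1-\exp\Bigl(-\frac{c}{(T-t)^{1-2\alpha}}\Bigr).
\end{equation*}
Then for any $e<\Lambda^{+}+(T-t)+(T-t)^{1+\alpha}$, the Lipschitz bound
$\partial_{e} v \leq 1/(T-t)$ yields
\begin{equation*}
v\bigl(t,\Lambda^{+}+(T-t)+(T-t)^{1+\alpha}\bigr) - v(t,e) \;\leq\;
\frac{\Lambda^{+}+(T-t)+(T-t)^{1+\alpha}-e}{T-t}.
\end{equation*}
Rearranging, the right-hand side becomes $1+(T-t)^{\alpha}-(e-\Lambda^{+})/(T-t)$, and substituting the lower bound on $v$ at the reference point gives precisely \reff{eq:01:04:1}.

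For the upper bound \reff{eq:01:04:2}, I would run the symmetric argument: apply \reff{eq:21} with $\delta=(T-t)^{1+\alpha}$ to get an exponentially small upper bound for $v$ at the point $\Lambda^{-}-(T-t)^{1+\alpha}$, then use $\partial_{e}v \leq 1/(T-t)$ in the reverse direction to propagate this upper bound to any $e>\Lambda^{-}-(T-t)^{1+\alpha}$.

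There is no real obstacle here; the content of the corollary is essentially algebraic, and I would expect the only thing to check carefully is that the constant $c$ appearing in \reff{eq:01:04:1}--\reff{eq:01:04:2} is indeed the same as in Proposition \ref{prop:1:3}, which is automatic from the substitution $\delta = (T-t)^{1+\alpha}$. The monotonicity of $v(t,\cdot)$ established in Proposition \ref{prop:1:1} is implicitly used to make sure the Lipschitz estimate gives a one-sided inequality of the correct sign in each case.
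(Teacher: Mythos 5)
Your plan is correct and coincides with the paper's own proof: the paper likewise applies \reff{eq:22:3:1} (resp. \reff{eq:21}) at the reference point $\Lambda^{+}+(T-t)+(T-t)^{1+\alpha}$ (resp. $\Lambda^{-}-(T-t)^{1+\alpha}$), i.e. with $\delta=(T-t)^{1+\alpha}$, and propagates the bound via the $1/(T-t)$-Lipschitz estimate \reff{eq:23:08:1}, the exponent $c\delta^{2}/(T-t)^{3}=c/(T-t)^{1-2\alpha}$ coming out automatically. The only cosmetic remark is that monotonicity of $v(t,\cdot)$ is not actually needed for the sign of the inequality; the two-sided Lipschitz bound already suffices.
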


\begin{proof} We first prove \reff{eq:01:04:1}. 
Since $v(t,\cdot)$ is $1/(T-t)$ Lipschitz continuous, we have:
\begin{equation*}
\begin{split}
v\bigl(t,\Lambda^++(T-t)+(T-t)^{1+\alpha} \bigr) - v(t,e) &\leq \frac{\Lambda^+ -e + (T-t)+ (T-t)^{1+\alpha}}{T-t} 
\\
&=  \frac{\Lambda^+  - e}{T-t} + 1 + (T-t)^{\alpha}.
\end{split}
\end{equation*}
Therefore,
\begin{equation*}
v(t,e) \geq v\bigl(t,\Lambda^++(T-t)+(T-t)^{1+\alpha} \bigr) - 1 - (T-t)^{\alpha}
- \frac{\Lambda^+ - e}{T-t},
\end{equation*}
and applying \reff{eq:22:3:1} 
\begin{equation*}
v(t,e) \geq \frac{e-\Lambda^+}{T-t} - \exp \bigl(- c(T-t)^{2\alpha-1} \bigr) - (T-t)^{\alpha}.
\end{equation*}
For the upper bound, we use the same strategy. We start from
\begin{equation*}
v(t,e) - v\bigl(t,\Lambda^--(T-t)^{1+\alpha} \bigr) \leq
\frac{e-\Lambda^-}{T-t} + (T-t)^{\alpha},
\end{equation*}
so that
\begin{equation*}
v(t,e) \leq \frac{e-\Lambda^-}{T-t} + \exp \bigl(- c(T-t)^{2\alpha-1} \bigr) + (T-t)^{\alpha}.
\end{equation*}
\end{proof}

\noindent

\subsection{Existence of a Solution}
We now establish the existence of a solution 
to \reff{eq:1:b} with the original terminal condition.
We use a compactness argument giving the existence of a \emph{value function} for the problem.

\begin{proposition}
\label{prop:1:6}
There exists a continuous function $v : [0,T) \times \RR \hookrightarrow [0,1]$ satisfying
\begin{enumerate}
\item $v$ belongs to ${\mathcal C}^{1,2}([0,T) \times \RR)$ and solves \reff{eq:3}, 
\item $v(t,\cdot)$ is non-decreasing and $1/(T-t)$-Lipschitz continuous 
for any $t \in [0,T)$,
\item $v$ satisfies \reff{eq:22:3:1} and \reff{eq:21} with $\Lambda^-=\Lambda^+=\Lambda$,
\item $v$ satisfies \reff{eq:01:04:1} and \reff{eq:01:04:2} with $\Lambda^-=\Lambda^+=\Lambda$,
\end{enumerate}
and for any initial condition $(t_0,e) \in [0,T) \times \RR$,
the strong solution $(\bar{E}_t^{t_0,e})_{t_0 \leq t < T}$ of
\begin{equation}
\label{eq:05:04:1}
\bar{E}_t = e - \int_{t_0}^t v(s,\bar{E}_s) ds + \int_{t_0}^t (T-s) dW_s, \quad
t_0 \leq t < T,
\end{equation}
is such that $(v(t,\bar{E}_t^{t_0,e}))_{t_0 \leq t < T}$ is a martingale
with respect to the filtration generated by $W$.
\end{proposition}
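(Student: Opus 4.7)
My plan is to construct $v$ as the limit of the smooth value functions $v_{n}$ produced by Proposition \ref{prop:1:1} when the terminal condition is taken to be a smooth non-decreasing approximation of ${\mathbf 1}_{[\Lambda,\infty)}$, and then to read off the listed properties (1)--(4) and the martingale property from the estimates already in hand. Concretely, I would pick a sequence $(\phi_{n})_{n \geq 1}$ of smooth non-decreasing functions with values in $[0,1]$ such that $\phi_{n} \equiv 0$ on $(-\infty,\Lambda-1/n]$ and $\phi_{n} \equiv 1$ on $[\Lambda+1/n,+\infty)$, and invoke Proposition \ref{prop:1:1} to obtain, for each $n$, a ${\mathcal C}^{1,2}$ value function $v_{n}$ which is $[0,1]$-valued, non-decreasing, $1/(T-t)$-Lipschitz in space, and which solves the Burgers-type PDE \eqref{eq:3}.

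The crucial input is the last part of Proposition \ref{prop:1:1}: the H\"older norms of $v_{n}$, $\partial_{e}v_{n}$, $\partial_{ee}^{2}v_{n}$ and $\partial_{t}v_{n}$ on every compact subset of $[0,T) \times \RR$ are independent of the smoothness of $\phi_{n}$, hence independent of $n$. By the Arzel\`a--Ascoli theorem and a diagonal extraction over an exhausting sequence of compact sets, I would obtain a subsequence, still denoted $(v_{n})_{n \geq 1}$, converging in ${\mathcal C}^{1,2}_{\rm loc}([0,T) \times \RR)$ to some function $v$. This limit automatically inherits the uniform $[0,1]$ bound, the monotonicity in $e$, the bound $0 \leq \partial_{e}v \leq 1/(T-t)$, and solves \eqref{eq:3} pointwise on $[0,T) \times \RR$, which handles items (1) and (2). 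For (3) and (4), I would fix $\delta>0$ and, for all $n$ with $1/n < \delta/2$, apply Proposition \ref{prop:1:3} and Corollary \ref{corol:1:3} to $v_{n}$ with the thresholds $\Lambda^{+} = \Lambda + 1/n$ and $\Lambda^{-} = \Lambda - 1/n$, then pass to the limit to obtain the stated estimates with $\Lambda^{-}=\Lambda^{+}=\Lambda$.

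For the SDE \eqref{eq:05:04:1} and the martingale property, I observe that on any interval $[t_{0},T-\varepsilon]$ the drift $-v(t,\cdot)$ is bounded by $1$ and $1/\varepsilon$-Lipschitz in $e$ uniformly in $t$, so a unique strong solution $(\bar{E}_{t}^{t_{0},e})_{t_{0} \leq t \leq T - \varepsilon}$ exists; patching as $\varepsilon \searrow 0$ extends it to $[t_{0},T)$. To identify $(v(t,\bar{E}_{t}^{t_{0},e}))_{t_{0} \leq t < T}$ as a martingale, I would let $\bar{E}^{n}$ denote the strong solution of \eqref{eq:05:04:1} with $v$ replaced by $v_{n}$. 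By Proposition \ref{prop:1:1}, the process $(v_{n}(t,\bar{E}_{t}^{n}))_{t_{0} \leq t \leq T}$ is a true martingale with respect to $W$. Standard stability for SDEs with Lipschitz coefficients, combined with the local uniform convergence $v_{n} \to v$, yields $\sup_{t_{0} \leq t \leq T-\varepsilon} | \bar{E}_{t}^{n} - \bar{E}_{t}^{t_{0},e}| \to 0$ in probability, and since all $v_{n}$ are $[0,1]$-valued, dominated convergence promotes this to $L^{1}$ convergence of $v_{n}(t,\bar{E}_{t}^{n})$ to $v(t,\bar{E}_{t}^{t_{0},e})$ for each $t \in [t_{0},T-\varepsilon]$. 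The martingale identity passes to the limit on $[t_{0},T-\varepsilon]$, and $\varepsilon$ being arbitrary, the martingale property holds on $[t_{0},T)$.

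The genuine difficulty, in my view, is not the compactness argument but the fact that it rests entirely on the uniformity in $n$ of the interior Schauder-type estimates near the singular boundary $\{T\} \times \RR$, where the diffusion coefficient $(T-t)$ degenerates; that uniformity is precisely what the last paragraph of Proposition \ref{prop:1:1} provides and what I would have to use as a black box. A secondary delicate point is the passage to the limit in the martingale property up to times arbitrarily close to $T$, which is why I localize on $[t_{0},T-\varepsilon]$ and only then let $\varepsilon \searrow 0$.
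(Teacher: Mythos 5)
Your construction is essentially the paper's: mollify the indicator by smooth non-decreasing $\phi_n$ supported as you describe, use the $n$-independent interior H\"older estimates of Proposition \ref{prop:1:1} to extract a locally convergent subsequence, transfer (1)--(4) from Propositions \ref{prop:1:1}, \ref{prop:1:3} and Corollary \ref{corol:1:3}, and solve \reff{eq:05:04:1} by local Lipschitz continuity of the drift on $[t_0,T-\varepsilon]$. The only place you diverge is the martingale property: the paper simply applies It\^o's formula to $v(t,\bar{E}_t^{t_0,e})$, using that the limit $v$ is itself ${\mathcal C}^{1,2}$ and solves \reff{eq:3}, so the process is a bounded local martingale and hence a true martingale; your route through the approximating processes $\bar{E}^n$ and SDE stability also works but is more labor than needed, since the direct It\^o argument is available once (1) is established.
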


\begin{proof} 
Choose a sequence of $[0,1]$-valued smooth non-decreasing functions $(\phi^n)_{n \geq 1}$ 
such that $\phi^n(e)=0$ for $e\leq \Lambda-1/n$ and $\phi^n(e)=1$
for $e \geq \Lambda+1/n$, $n \geq 1$, and denote by $(v^n)_{n \geq 1}$
the corresponding sequence of functions given by Proposition \ref{prop:1:1}.
By Proposition \ref{prop:1:1}, we can extract a subsequence, which we will still index
by $n$, converging uniformly on compact subsets
of $[0,T) \times \RR$. We denote by $v$ such a limit. Clearly, $v$ satisfies (1) in the statement
of Proposition \ref{prop:1:6}.
Moreover, it also satisfies (2) because of Proposition \ref{prop:1:1}, (3) by Proposition \ref{prop:1:3}, 
and (4) by Corollary \ref{corol:1:3}. Having Lipschitz coefficients, the stochastic differential equation \reff{eq:05:04:1} has a unique strong solution on $[t_0,T)$ for any initial condition $\bar E_{t_0}=e$. If we denote the solution by $(\bar{E}_t^{t_0,e})_{t_0 \leq t <T}$, It\^o's formula and \reff{eq:3}, imply that the process
$(v(t,\bar{E}_t^{t_0,e}))_{t_0 \leq t < T}$ is a local martingale. Since it is bounded, it is a bona fide martingale.
\end{proof}

\noindent
We finally obtain the desired solution to the FBSDE in the sense of Theorem \ref{thm:1}.

\begin{proposition}
\label{prop:1:7}
$v$ and $(\bar{E}_t^{t_0,e})_{t_0 \leq t <T}$ being as above and setting
\begin{equation*}
Y_t^{t_0,e} = v(t,\bar{E}_t^{t_0,e}), \ 
Z_t^{t_0,e} = (T-t) \partial_e v(t,\bar{E}_t^{t_0,e}), \quad t_0 \leq t < T,
\end{equation*}
the process $(\bar{E}_t^{t_0,e})_{t_0 \leq t < T}$ has an a.s. limit
$\bar{E}_T^{t_0,e}$ as $t$ tends to $T$. Similarly, the process $(Y_t^{t_0,e})_{t_0 \leq t < T}$ has an a.s. limit
$Y_T^{t_0,e}$ as $t$ tends to $T$, and the extended process $(Y_t^{t_0,e})_{t_0 \leq t \leq T}$
is a martingale with respect to the filtration generated by $W$. Morever, $\PP$-a.s., we have:
\begin{equation}
\label{eq:05:04:2}
{\mathbf 1}_{(\Lambda,\infty)}(\bar{E}_T^{t_0,e}) \leq Y_T^{t_0,e} \leq {\mathbf 1}_{[\Lambda,\infty)}(\bar{E}_T^{t_0,e}).
\end{equation}
and
\begin{equation}
\label{eq:05:04:3}
Y_T^{t_0,e} = Y_{t_0}^{t_0,e} + \int_{t_0}^T Z_t^{t_0,e} dW_t,
\end{equation}
\end{proposition}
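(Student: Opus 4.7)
The plan is to extract the terminal values as almost sure limits, identify them using the quantitative boundary behavior from Proposition \ref{prop:1:6}, and derive the stochastic integral representation via It\^o's formula up to $T-\varepsilon$ followed by dominated convergence. Existence of the limit $\bar E_T^{t_0,e}$ is immediate: the drift in \reff{eq:05:04:1} is bounded since $v\in[0,1]$, so $\int_{t_0}^t v(s,\bar E_s)ds$ has an a.s.\ limit as $t\nearrow T$, while the Wiener integral $M_t=\int_{t_0}^t(T-s)dW_s$ is an $L^2$-bounded continuous martingale on $[t_0,T]$ because $\int_{t_0}^T(T-s)^2 ds=(T-t_0)^3/3<\infty$, so converges a.s.\ and in $L^2$. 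By Proposition \ref{prop:1:6}, $(Y_t^{t_0,e})_{t_0\le t<T}=(v(t,\bar E_t^{t_0,e}))_{t_0\le t<T}$ is a $[0,1]$-valued continuous martingale, and Doob's convergence theorem furnishes an a.s.\ and $L^2$ limit $Y_T^{t_0,e}\in[0,1]$; boundedness gives uniform integrability, so $(Y_t^{t_0,e})_{t_0\le t\le T}$ is a martingale on the closed interval.

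For \reff{eq:05:04:2}, I split according to $\{\bar E_T^{t_0,e}>\Lambda\}$, $\{\bar E_T^{t_0,e}<\Lambda\}$ and $\{\bar E_T^{t_0,e}=\Lambda\}$. On $\{\bar E_T^{t_0,e}>\Lambda\}$, choose (random) $n$ with $\bar E_T^{t_0,e}>\Lambda+2/n$; since $\bar E_t^{t_0,e}\to\bar E_T^{t_0,e}$ a.s., eventually $\bar E_t^{t_0,e}>\Lambda+1/n$, and in particular $\bar E_t^{t_0,e}>\Lambda+(T-t)+1/(2n)$ as soon as $T-t<1/(2n)$. Using the monotonicity of $v(t,\cdot)$ together with the quantitative estimate \reff{eq:22:3:1} from Proposition \ref{prop:1:6}(3) (with $\Lambda^+=\Lambda$ and $\delta=1/(2n)$), one obtains $v(t,\bar E_t^{t_0,e})\ge 1-\exp(-c/(4n^2(T-t)^3))\to 1$, hence $Y_T^{t_0,e}=1$ on this event. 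The symmetric argument based on \reff{eq:21} yields $Y_T^{t_0,e}=0$ on $\{\bar E_T^{t_0,e}<\Lambda\}$, and on $\{\bar E_T^{t_0,e}=\Lambda\}$ no further constraint is required since $Y_T^{t_0,e}\in[0,1]$ is automatic. Together these give \reff{eq:05:04:2}.

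For \reff{eq:05:04:3}, since $v\in{\mathcal C}^{1,2}([0,T)\times\RR)$ satisfies the Burgers-type PDE \reff{eq:3}, It\^o's formula applied to $v(t,\bar E_t^{t_0,e})$ on $[t_0,T-\varepsilon]$ eliminates the finite-variation drift identically, leaving
$$Y_{T-\varepsilon}^{t_0,e} = Y_{t_0}^{t_0,e} + \int_{t_0}^{T-\varepsilon}(T-t)\partial_e v(t,\bar E_t^{t_0,e})dW_t = Y_{t_0}^{t_0,e} + \int_{t_0}^{T-\varepsilon} Z_t^{t_0,e}dW_t.$$
The Lipschitz bound \reff{eq:23:08:1} yields $Z_t^{t_0,e}\in[0,1]$, so $Z^{t_0,e}$ is bounded; It\^o isometry and dominated convergence give $\int_{t_0}^{T-\varepsilon}Z_s^{t_0,e}dW_s\to\int_{t_0}^T Z_s^{t_0,e}dW_s$ in $L^2$ as $\varepsilon\downarrow 0$, and combined with the a.s.\ convergence $Y_{T-\varepsilon}^{t_0,e}\to Y_T^{t_0,e}$ from the first step, \reff{eq:05:04:3} follows.

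The main obstacle is the middle step: the limits in Proposition \ref{prop:1:6}(3) are only pointwise in $e$ and uniform on compact subsets of $(\Lambda,\infty)$ or $(-\infty,\Lambda)$, so transferring them to the random endpoint $\bar E_T^{t_0,e}$ requires the explicit decay rates \reff{eq:22:3:1}--\reff{eq:21} together with the a.s.\ continuity of $t\mapsto \bar E_t^{t_0,e}$ at $T$ to achieve path-by-path convergence of $v(t,\bar E_t^{t_0,e})$ on the relevant events.
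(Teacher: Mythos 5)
Your proposal is correct and follows essentially the same route as the paper: monotone/bounded drift plus an $L^2$-bounded Wiener integral for the limit of $\bar{E}$, Doob's convergence theorem for the bounded martingale $Y$, the quantitative bounds \reff{eq:22:3:1} and \reff{eq:21} applied along the converging path to identify $Y_T^{t_0,e}$ on $\{\bar{E}_T^{t_0,e}\gtrless\Lambda\}$, and It\^o's formula with the bound $Z_t^{t_0,e}\in[0,1]$ for the stochastic integral representation. The only stylistic difference is that you spell out the $\varepsilon\downarrow 0$ limit passage via It\^o isometry, which the paper leaves implicit.
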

\noindent
Notice that $Z_t^{t_0,e}$ is not defined for $t=T$.

\begin{proof}
The proof is straightforward now that we have collected all the necessary ingredients.
We start with the extension of $(\bar{E}_t^{t_0,e})_{t_0 \leq t < T}$ up to time $T$.
The only problem is to extend the drift part in \reff{eq:05:04:1}, but since $v$ is non-negative and bounded, it is clear that the process
$$
\biggl(\int_{t_0}^t v(s,\bar{E}_s^{t_0,e})ds \biggr)_{t_0 \leq t < T}
$$
is almost-surely increasing in $t$, so that the limit exists.
The extension of $(Y_t^{t_0,e})_{t_0 \leq t < T}$ up to time $T$ follows
from the almost-sure convergence theorem for positive martingales. 

To prove \reff{eq:05:04:2}, we apply (3) in the statement of Proposition 
\ref{prop:1:6}. If $\bar{E}_T^{t_0,e} = \lim_{t \rightarrow T} \bar{E}_t^{t_0,e} >\Lambda$, then
we can find some $\delta >0$ such that
$\bar{E}_t^{t_0,e} > \Lambda + (T-t) + \delta$ for $t$ close to $T$, so that
$Y_t^{t_0,e} = v(t,\bar{E}_t^{t_0,e}) \geq 1 - \exp[-c \delta^2/(T-t)^3]$
for $t$ close to $T$, i.e. $Y_T^{t_0,e} \geq 1$. Since $Y_T^{t_0,e} \leq 1$, we deduce that
\begin{equation*}
\bar{E}_T^{t_0,e} > \Lambda \Rightarrow Y_T^{t_0,e} = 1.
\end{equation*}
In the same way,
\begin{equation*}
\bar{E}_T^{t_0,e} < \Lambda \Rightarrow Y_T^{t_0,e} = 0.
\end{equation*}
This proves 
\reff{eq:05:04:2}. Finally \reff{eq:05:04:3} follows from It\^o's formula. Indeed,
by It\^o's formula and \reff{eq:3},
\begin{equation*}
Y_t^{t_0,e} = Y_{t_0}^{t_0,e} + \int_{t_0}^t Z_s^{t_0,e} dW_s, \quad t_0 \leq t < T.
\end{equation*}
By definition, $Z_s^{t_0,e} = (T-s) \partial_e v(s,\bar{E}_s^{t_0,e})$, $t_0 \leq s < T$.
By part (2) in the statement of Proposition \ref{prop:1:6}, it is in $[0,1]$. Therefore, the It\^o integral
\begin{equation*}
\int_{t_0}^T Z_s^{t_0,e} dW_s
\end{equation*}
makes sense as an element of $L^2(\Omega,\PP)$. This proves
\reff{eq:05:04:3}. 
\end{proof}

\subsection{Improved Gradient Estimates}
Using again standard results on the differentiability of stochastic flows 
(see again Kunita's monograph \cite{Kunita.book}) we see that formulae \reff{fo:partialE} and \reff{eq:19:04:100}
still hold in the present situation of a discontinuous terminal condition.
We also prove a representation for the gradient of $v$ of Malliavin-Bismut type.
\begin{proposition}
\label{prop:26:08:1}
For $t_0 \in [0,T)$, $\partial_e v(t_0,e)$ admits the representation
\begin{equation}
\label{eq:19:4:7}
\partial_e v(t_0,e)
= 2 (T-t_0)^{-2} {\mathbb E} \biggl[ \lim_{\delta \rightarrow 0} v\bigl(T-\delta,\bar{E}_{T-\delta}^{t_0,e}\bigr)
\int_{t_0}^{T} 
\partial_e \bar{E}_t^{t_0,e} dW_t \biggr].
\end{equation}
In particular, there exists some constant $A>0$ such that
\begin{equation}
\label{eq:19:4:6}
\sup_{|e| >A} \sup_{0 \leq t \leq T} \partial_e v(t,e) < + \infty.
\end{equation}
\end{proposition}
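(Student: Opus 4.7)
The plan is to derive \eqref{eq:19:4:7} via a Bismut-Elworthy-Li type integration by parts performed on the smoothed problem and passed to the limit, and then use the resulting formula together with a centering trick and Gaussian tail estimates to obtain \eqref{eq:19:4:6}. Write $(\phi^n,v^n,\bar{E}^{n,t_0,e})$ for the smooth approximations from the proof of Proposition \ref{prop:1:6}. By the Schauder-type estimates used in Proposition \ref{prop:1:1}, $v^n \to v$ in $\mathcal{C}^{1,2}_{\mathrm{loc}}([0,T) \times \RR)$; in particular, the first variation $\partial_e \bar{E}_s^{n,t_0,e} = \exp(-\int_{t_0}^s \partial_e v^n(r,\bar{E}_r^{n,t_0,e})dr)$ is $[0,1]$-valued and converges locally uniformly to $\partial_e \bar{E}_s^{t_0,e}$ on $[t_0,T-\delta]$ for every $\delta>0$.

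Fix $\delta>0$. Starting from the martingale identity $v^n(t_0,e)=\EE[v^n(T-\delta,\bar{E}_{T-\delta}^{n,t_0,e})]$, differentiation in $e$ gives $\partial_e v^n(t_0,e)=\EE[\partial_e v^n(T-\delta,\bar{E}_{T-\delta}^{n,t_0,e})\partial_e \bar{E}_{T-\delta}^{n,t_0,e}]$. Using the Malliavin derivative $D_s \bar{E}_{T-\delta}^{n,t_0,e}=(T-s)\partial_e \bar{E}_{T-\delta}^{n,t_0,e}/\partial_e \bar{E}_s^{n,t_0,e}$ for $s \leq T-\delta$ and choosing the adapted weight $h_s^n := 2\partial_e \bar{E}_s^{n,t_0,e}/[(T-t_0)^2-\delta^2]$, one verifies $\int_{t_0}^{T-\delta} h_s^n(T-s)/\partial_e \bar{E}_s^{n,t_0,e}\,ds=1$, so Malliavin integration by parts applied to $F=v^n(T-\delta,\bar{E}_{T-\delta}^{n,t_0,e})$ produces
\begin{equation*}
\partial_e v^n(t_0,e) = \frac{2}{(T-t_0)^2-\delta^2}\,\EE\biggl[v^n(T-\delta,\bar{E}_{T-\delta}^{n,t_0,e}) \int_{t_0}^{T-\delta}\partial_e \bar{E}_s^{n,t_0,e}\,dW_s\biggr].
\end{equation*}
I would then pass to the limit $n\to\infty$ at fixed $\delta$ using the $\mathcal{C}^{1,2}_{\mathrm{loc}}$ convergence and the It\^o isometry, and then let $\delta\to 0$: the prefactor tends to $2(T-t_0)^{-2}$, the bounded martingale $(v(t,\bar{E}_t^{t_0,e}))_{t_0 \leq t<T}$ converges almost surely to $Y_T^{t_0,e}$ by Proposition \ref{prop:1:7}, and $\int_{t_0}^{T-\delta}\partial_e \bar{E}_s^{t_0,e}\,dW_s \to \int_{t_0}^T \partial_e \bar{E}_s^{t_0,e}\,dW_s$ in $L^2$ since $\partial_e \bar{E}_s^{t_0,e}\in[0,1]$ on the closed interval $[t_0,T]$. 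A Cauchy-Schwarz dominated convergence argument (the first factor is bounded by $1$, the stochastic integral is $L^2$-convergent) delivers \eqref{eq:19:4:7}.

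For \eqref{eq:19:4:6}, the key observation is that $\EE[c\int_{t_0}^T\partial_e \bar{E}_s^{t_0,e}\,dW_s]=0$ for any constant $c$ by the martingale property, so $Y_T^{t_0,e}$ may be replaced by $Y_T^{t_0,e}-c$ in the representation. Set $A:=|\Lambda|+2T$. For $e\geq A$, take $c=1$: since \eqref{eq:05:04:2} forces $Y_T^{t_0,e}=1$ on $\{\bar{E}_T^{t_0,e}>\Lambda\}$, one has $|Y_T^{t_0,e}-1|\leq\mathbf{1}_{\{\bar{E}_T^{t_0,e}\leq\Lambda\}}$. Using $v\leq 1$, the forward equation gives $\bar{E}_T^{t_0,e}\geq e-(T-t_0)+\int_{t_0}^T(T-s)\,dW_s$; since $\int_{t_0}^T(T-s)\,dW_s$ is centered Gaussian of variance $(T-t_0)^3/3$ and $e-\Lambda-(T-t_0)\geq T$ for $e\geq A$, a standard Gaussian tail estimate yields $\PP[\bar{E}_T^{t_0,e}\leq\Lambda]\leq\exp(-3T^2/[2(T-t_0)^3])$. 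Cauchy-Schwarz together with $\EE[(\int_{t_0}^T\partial_e \bar{E}_s^{t_0,e}\,dW_s)^2]\leq T$ gives
\begin{equation*}
0\leq\partial_e v(t_0,e)\leq\frac{2\sqrt{T}}{(T-t_0)^2}\exp\biggl(-\frac{3T^2}{4(T-t_0)^3}\biggr),
\end{equation*}
a function uniformly bounded on $t_0\in[0,T)$ since the super-exponential decay in $(T-t_0)^{-3}$ absorbs the $(T-t_0)^{-2}$ singularity. The symmetric choice $c=0$ with $e\leq-A$ and the upper bound $\bar{E}_T^{t_0,e}\leq e+\int_{t_0}^T(T-s)\,dW_s$ (since $v\geq 0$) completes the large-negative case.

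The principal technical obstacle is the joint passage $n\to\infty$ and $\delta\to 0$ across the singular maturity: I must control the It\^o integral $\int_{t_0}^{T-\delta}\partial_e \bar{E}_s^{t_0,e}\,dW_s$ uniformly in $\delta$ up to the closed interval $[t_0,T]$, which is precisely what the uniform bound $\partial_e \bar{E}_s^{t_0,e}\in[0,1]$ from Proposition \ref{prop:1:6}(2) provides. Everything else is standard once this is in hand.
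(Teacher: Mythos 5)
Your proof is correct, and for the tail bound \reff{eq:19:4:6} it is essentially identical to the paper's: both arguments reduce to Cauchy--Schwarz applied to the representation (after recentering the terminal value, which is exactly the paper's remark that \reff{eq:19:4:7} ``also holds for $v-1$''), followed by a Gaussian estimate on $\int_{t_0}^T(T-s)\,dW_s$ of order $\exp(-c/(T-t_0)^3)$, which absorbs the $(T-t_0)^{-2}$ prefactor; you use the terminal value $\bar E_T^{t_0,e}$ and a plain tail bound where the paper uses the running supremum and the maximal inequality, an immaterial difference. For the representation formula itself your route is mildly different: you derive it as a Bismut--Elworthy--Li integration by parts on the mollified problem and then take a double limit $n\to\infty$, $\delta\to 0$, whereas the paper works directly with the limit solution, writing $v(T-\delta,\bar E_{T-\delta}^{t_0,e})-v(t_0,e)=\int_{t_0}^{T-\delta}Z_t^{t_0,e}\,dW_t$ with $Z_t^{t_0,e}=(T-t)\partial_e v(t,\bar E_t^{t_0,e})$ (Proposition \ref{prop:1:7}), expanding the product of the two stochastic integrals by the It\^o isometry, and using the martingale property of $(v(t,\bar E_t^{t_0,e}))_t$ to pull out $\partial_e v(t_0,e)\int_{t_0}^{T-\delta}(T-t)\,dt$. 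The two computations are dual to one another; the paper's version buys a shorter argument by exploiting the structure already established for the limit object and thereby avoiding the $n$-limit (which in your version requires upgrading the uniform-on-compacts convergence of $v^n$ to ${\mathcal C}^{1,2}_{\mathrm{loc}}$ convergence via the Schauder bounds, plus stability of $\bar E^n$ and of the first variation process --- all doable, but extra work), while your version has the advantage of producing the formula constructively rather than verifying it a posteriori. The $\delta\to 0$ passage is handled the same way in both (boundedness of $v$, $L^2$-convergence of the stochastic integral since $\partial_e\bar E\in[0,1]$, dominated convergence).
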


\begin{proof} 
For $\delta>0$, Proposition \ref{prop:1:7} yields
\begin{equation*}
\begin{split}
{\mathbb E} \biggl[  v\bigl(T-\delta,\bar{E}_{T-\delta}^{t_0,e}\bigr)
\int_{t_0}^{T} 
\partial_e \bar{E}_t^{t_0,e} dW_t \biggr]
&= {\mathbb E} \biggl[  \int_{t_0}^{T-\delta}  Z_{t}^{t_0,e} dW_t
\int_{t_0}^{T} 
\partial_e \bar{E}_t^{t_0,e} dW_t \biggr]
\\
&= {\mathbb E} \biggl[  \int_{t_0}^{T-\delta}  (T-t) \partial_e v\bigl(t,\bar{E}_{t}^{t_0,e}\bigr) \partial_e \bar{E}_t^{t_0,e} dt\biggr].
\end{split}
\end{equation*}
The bounds we have on $\partial_e v$ and 
$(\partial_e \bar{E}_t^{t_0,e})_{t_0 \leq t < T}$ justify the exchange of the expectation and integral signs. We obtain:
\begin{equation*}
{\mathbb E} \biggl[  v\bigl(T-\delta,\bar{E}_{T-\delta}^{t_0,e}\bigr)
\int_{t_0}^{T} 
\partial_e \bar{E}_t^{t_0,e} dW_t \biggr]
= \int_{t_0}^{T-\delta} (T-t) {\mathbb E} \bigl[ \partial_e
\bigl[ v\bigl(t,\bar{E}_{t}^{t_0,e}\bigr) \bigr] \bigr] dt.
\end{equation*}
Similarly, we can exchange the expectation and the partial derivative so that
\begin{equation*}
{\mathbb E} \biggl[  v\bigl(T-\delta,\bar{E}_{T-\delta}^{t_0,e}\bigr)
\int_{t_0}^{T} 
\partial_e \bar{E}_t^{t_0,e} dW_t \biggr]
= \int_{t_0}^{T-\delta} (T-t) \partial_e \bigl[ {\mathbb E}  v\bigl(t,\bar{E}_{t}^{t_0,e}\bigr) \bigr]  dt.
\end{equation*}
Since $(v(t,\bar{E}_t^{t_0,e}))_{t_0 \leq t \leq T-\delta}$ is a martingale, 
we deduce:
\begin{eqnarray*}
{\mathbb E} \biggl[  v\bigl(T-\delta,\bar{E}_{T-\delta}^{t_0,e}\bigr)
\int_{t_0}^{T} 
\partial_e \bar{E}_t^{t_0,e} dW_t \biggr]
&=& \partial_e v(t_0,e) \int_{t_0}^{T-\delta} (T-t)  dt\\
&=& \frac{1}{2} (T-\delta-t_0)(T+\delta-t_0) \partial_e v(t_0,e).
\end{eqnarray*}
Letting $\delta$ tend to zero and applying dominated convergence, we 
complete the proof of the representation formula of the gradient.

To derive the bound \reff{eq:19:4:6}, we emphasize that, 
for $e$ away
from $\Lambda$ (say for example $e\ll \Lambda$), the probability that 
$(\bar{E}_t^{t_0,e})_{t_0 \leq t \leq T}$ hits $\Lambda$ is very small and decays
exponentially fast as $T-t_0$ tends to $0$. On the complement, i.e. for
$\sup_{t_0 \leq t \leq T} \bar{E}_t^{t_0,e} < \Lambda$, we know that
$v(t,\bar{E}_t^{t_0,e})$ tends to $0$ as $t$ tends to $T$. 
 Specifically,
following the proof of Proposition \ref{prop:1:3}, there exists 
a universal constant $c'>0$ such that
for any $e \leq \Lambda-1$ and $t_0 \in [0,T)$
\begin{equation*}
\begin{split}
(T-t_0)^2 \partial_e v(t_0,e) &\leq 2 (T-t_0)^{1/2} \PP^{1/2} \bigl[
\sup_{t_0 \leq t \leq T} \bar{E}_t^{t_0,e} \geq \Lambda \bigr]
\\
&\leq 2 (T-t_0)^{1/2} \PP^{1/2} \bigl[ \Lambda - 1  + \sup_{t_0 \leq t \leq T}
\int_{t_0}^t (T-s) dW_s \geq \Lambda \bigr]
\\
&\leq 2 (T-t_0)^{1/2} \PP^{1/2} \bigl[  \sup_{t_0 \leq t \leq T}
\int_{t_0}^t (T-s) dW_s \geq 1 \bigr]
\\
&\leq 2 (T-t_0)^{1/2} \exp \bigl( - \frac{c'}{(T-t_0)^3} \bigr),
\end{split}
\end{equation*}
the last line following from maximal inequality (IV.37.12) in Rogers and Williams
\cite{RogersWilliams}.

The same argument holds for $e>\Lambda+2$ by noting that  \reff{eq:19:4:7}
also holds for $v-1$. 
\end{proof}

\begin{remark}
\label{rem:24:08:2}
The stochastic integral in the
Malliavin-Bismut formula \reff{eq:19:4:7} is at most of order $(T-t_0)^{1/2}$.
Therefore, the typical resulting bound for $\partial_e v(t,e)$ in the neighborhood of $(T,\Lambda)$ is 
$(T-t)^{-3/2}$. Obviously, it is less accurate than the bound given by Propositions \ref{prop:1:1} and \ref{prop:1:6}.
This says that the Lipschitz smoothing of the singularity of the boundary condition
obtained in Propositions \ref{prop:1:1} and \ref{prop:1:6}, namely $\partial_e v(t,e) \leq (T-t)^{-1}$, follows from the first-order Burgers structure of the PDE \reff{eq:3} and that the diffusion term plays no role in it. 
This is a clue to understand why the diffusion process $\bar{E}$ feels the trap made by the boundary condition. On the opposite, the typical bound for $\partial_e v(t,e)$ we would obtain in the uniformly elliptic case by applying a Malliavin-Bismut formula (see Exercice 2.3.5 in Nualart \cite{Nualart}) is of order $(T-t)^{-1/2}$, which is much better
than $(T-t)^{-1}$.

Nevertheless, the following proposition shows that
the diffusion term permits to improve the bound obtained in 
Propositions \ref{prop:1:1} and \ref{prop:1:6}. Because of the noise plugged into 
$\bar{E}$, the bound $(T-t)^{-1}$ cannot be achieved. 
This makes a real difference with the inviscid Burgers equation
\reff{eq:21:08:1} which admits 
$$(t,e) \in [0,T) \times \RR 
\hookrightarrow \psi\bigl( \frac{e-\Lambda}{T-t} \bigr),$$
as solution, with $\psi(e)=1\wedge e^+$ for $e\in\RR$. (See 
for example (10.12') in 
Lax \cite{Lax}.)
\end{remark}

We thus prove the following stronger version of  Propositions \ref{prop:1:1} and \ref{prop:1:6}:

\begin{proposition}
\label{prop:1:7:b}
For any $(t_0,e) \in [0,T) \times \RR$, it holds $(T-t_0) \partial_e v(t_0,e) < 1$. 
\end{proposition}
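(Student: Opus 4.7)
The strategy is to sharpen the non-strict bound of Propositions \ref{prop:1:1} and \ref{prop:1:6} by revisiting the identity derived inside the proof of Proposition \ref{prop:1:1}:
\begin{equation*}
(T-t_0) \partial_e v(t_0,e) = 1 - \EE \bigl[\partial_e \bar E_T^{t_0,e}\bigr].
\end{equation*}
This identity extends to the singular limit $v$ from Proposition \ref{prop:1:6}: applying It\^o's formula to $(T-t)v(t,\bar E_t^{t_0,e}) - \bar E_t^{t_0,e}$ on $[t_0, T-\delta]$, where $v$ is $\cC^{1,2}$ and $dv(t,\bar E_t) = Z_t dW_t$ with $Z_t = (T-t)\partial_e v(t,\bar E_t) \in [0,1]$, the stochastic integrand $(T-t)(Z_t-1)$ is bounded by $T$, making the process a true martingale. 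Letting $\delta \downarrow 0$ yields $(T-t_0)v(t_0,e) - e = -\EE[\bar E_T^{t_0,e}]$, and differentiating in $e$ is justified by dominated convergence using $\partial_e\bar E_t^{t_0,e} = \exp(-\int_{t_0}^t \partial_e v(s,\bar E_s^{t_0,e}) ds)\in[0,1]$ from \reff{eq:19:04:100}. The strict inequality thus reduces to showing that the integral $\int_{t_0}^T \partial_e v(s,\bar E_s^{t_0,e}) ds$ is finite on a set of positive probability.

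The essential input is Proposition \ref{prop:26:08:1}, which furnishes constants $A,M>0$ such that $\partial_e v(s,e') \le M$ whenever $|e'|>A$ and $s \in [0,T)$. Fixing $R > A+1$ and a small $\tau \in (0, T-t_0)$, consider
\begin{equation*}
B := \bigl\{\bar E_s^{t_0,e} > R \text{ for all } s \in [T-\tau, T]\bigr\}.
\end{equation*}
On $B$, combining $\partial_e v(s,\cdot) \le 1/(T-s) \le 1/\tau$ on $[t_0, T-\tau]$ with $\partial_e v(s,\bar E_s^{t_0,e}) \le M$ on $[T-\tau,T]$ gives
\begin{equation*}
\int_{t_0}^T \partial_e v(s,\bar E_s^{t_0,e}) ds \le \frac{T-\tau-t_0}{\tau} + M\tau < \infty,
\end{equation*}
so $\partial_e\bar E_T^{t_0,e} > 0$ on $B$.

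It remains to verify $\PP(B)>0$. Write $B \supset B_1 \cap B_2$ with $B_1 := \{\bar E_{T-\tau}^{t_0,e} > R+1\}$ and $B_2 := \{\sup_{s\in[T-\tau,T]}|\bar E_s - \bar E_{T-\tau}| < 1\}$. On $[t_0, T-\tau]$, the diffusion coefficient $(T-s)$ stays above $\tau$ and the drift $-v$ is bounded, so a Girsanov change of measure under which $\bar E_{T-\tau}^{t_0,e}$ is a non-degenerate Gaussian of mean $e$, combined with equivalence of measures, yields $\PP(B_1) > 0$. By the Markov property, $\PP(B_2 \mid \cF_{T-\tau}) > 0$ almost surely: the $L^2$-bound $\EE[\sup_{s \in [T-\tau,T]} |\int_{T-\tau}^s (T-r) dW_r|^2] \le C\tau^3$ together with $|v|\le 1$ keeps the conditional fluctuations of $\bar E$ over $[T-\tau, T]$ small enough for the supremum to stay below $1$ with positive probability. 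Hence $\PP(B) \ge \EE[\bone_{B_1} \PP(B_2\mid\cF_{T-\tau})] > 0$. The main obstacle is the delicate choice of $\tau$: small enough for the $[T-\tau, T]$ fluctuations to be tightly controlled and for the Girsanov argument on $[t_0,T-\tau]$ to apply with a usable variance, yet satisfying $\tau < T-t_0$ so that the earlier interval remains non-degenerate; any $\tau$ below a suitable threshold works.
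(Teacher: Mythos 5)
Your overall strategy coincides with the paper's: start from the identity $(T-t_0)v(t_0,e)-e=-\EE[\bar E_T^{t_0,e}]$, reduce the strict inequality to the finiteness, on a set of positive probability, of $\int_{t_0}^T\partial_e v(s,\bar E_s^{t_0,e})\,ds$, and obtain that finiteness by combining the bound $\partial_e v\le (T-s)^{-1}$ away from $T$ with the uniform bound \reff{eq:19:4:6} of Proposition \ref{prop:26:08:1} on the event that the path stays above $A$ near maturity. The construction of the positive-probability event (non-degeneracy before $T-\tau$, then a small-fluctuation estimate for $\int(T-r)dW_r$ on $[T-\tau,T]$) is also essentially the paper's argument.

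There is, however, one genuine gap: the step "differentiating in $e$ is justified by dominated convergence using $\partial_e\bar E_t^{t_0,e}\in[0,1]$" does not work as stated. Dominated convergence lets you pass the limit $h\to0$ inside the expectation only once you know that the difference quotients $h^{-1}(\bar E_T^{t_0,e+h}-\bar E_T^{t_0,e})$ converge almost surely, i.e.\ that $\bar E_T^{t_0,\cdot}$ is differentiable at $e$ — and this is precisely what is \emph{not} known here (the paper says so explicitly): the candidate derivative $\exp(-\int_{t_0}^T\partial_e v(s,\bar E_s^{t_0,e})\,ds)$ involves an integral that may diverge with positive probability, and its continuity in the starting point is unclear, so the interchange of $\lim_{t\nearrow T}$ and $\lim_{h\to0}$ is not free. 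The paper circumvents this by keeping the difference quotients, bounding them from below by $\inf_{|u|\le h}\EE[\partial_e\bar E_t^{t_0,e+u}]$, and applying Fatou's lemma, which yields only the one-sided inequality
\begin{equation*}
(T-t_0)\,\partial_e v(t_0,e)\;\le\;1-\EE\Bigl[\exp\Bigl(-\lim_{h\to0}\sup_{|u|\le h}\int_{t_0}^T\partial_e v\bigl(s,\bar E_s^{t_0,e+u}\bigr)\,ds\Bigr)\Bigr],
\end{equation*}
but that is all one needs. The price of this detour is that the positive-probability event must control \emph{all} paths started from a neighborhood of $e$, i.e.\ one must show $\PP[\inf_{|u|\le h}\inf_{T-\tau\le t\le T}\bar E_t^{t_0,e+u}>A]>0$; your event $B$ controls only the single path from $e$, which is not sufficient once the equality is replaced by the Fatou inequality. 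Both repairs are routine (the uniform control over $|u|\le h$ follows from the same drift bound and Gaussian estimate you use), but as written the proof asserts a differentiability that is exactly the delicate point of this proposition.
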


\begin{proof}
Given $(t_0,e) \in [0,T) \times \RR$, we consider $(\bar{E}_t^{t_0,e},Y_t^{t_0,e},Z_t^{t_0,e})_{t_0 \leq t \leq T}$ as in the statement of Proposition \ref{prop:1:7}.
As in the proof of Proposition \ref{prop:1:1}, we start from
\begin{equation*}
d \bigl[ (T-t) Y_t^{t_0,e} - \bar{E}_t^{t_0,e} \bigr]
= (T-t) dY_t^{t_0,e} - (T-t) dW_t = (T-t) [Z_t^{t_0,e} - 1] dW_t, \quad t_0 \leq t  <T.
\end{equation*}
Therefore,  for any initial condition $(t_0,e)$,
\begin{equation*}
 (T-t_0) v(t_0,e) - e = - {\mathbb E} \bigl[\bar{E}_T^{t_0,e} \bigr].
\end{equation*}
Unfortunately, we do not know whether $\bar{E}^{t_0,e}_T$
 is differentiable with respect to $e$. However,
\begin{equation*}
\begin{split}
 (T-t_0) \partial_e v(t_0,e) &= 1 - 
 \lim_{h \rightarrow 0} h^{-1} {\mathbb E}
  \bigl[\bar{E}_T^{t_0,e+h} - \bar{E}_T^{t_0,e} \bigr]
  \\
  &= 1 - 
 \lim_{h \rightarrow 0} h^{-1} \lim_{t \nearrow T} {\mathbb E}
  \bigl[\bar{E}_t^{t_0,e+h} - \bar{E}_t^{t_0,e} \bigr]
  \leq 1 - \lim_{h \rightarrow 0}  \lim_{t \nearrow T}
  \inf_{|u| \leq h} {\mathbb E} \bigl[ \partial_e \bar{E}_t^{t_0,e+ u} \bigr]
\end{split}
\end{equation*}
Using \reff{eq:19:04:100}, the non-negativity of $\partial_e v$ and
Fatou's lemma,
\begin{equation*}
\begin{split}
 (T-t_0) \partial_e v(t_0,e) &\leq 1 -  \lim_{h \rightarrow 0}  
 \lim_{t \nearrow T}
   \inf_{|u| \leq h} {\mathbb E} \biggl[ \exp \biggl( - \int_{t_0}^t \partial_e v(s,\bar{E}_s^{t_0,e+u}) ds \biggr) \biggr]
  \\
  &\leq 1 -  \lim_{h \rightarrow 0}  
   \inf_{|u| \leq h} {\mathbb E}  \biggl[ \exp \biggl( - \int_{t_0}^T \partial_e v(s,\bar{E}_s^{t_0,e+u}) ds \biggr) \biggr]
   \\
   &\leq 1 -   {\mathbb E}  \biggl[ \exp \biggl( - 
   \lim_{h \rightarrow 0}  
   \sup_{|u| \leq h}
   \int_{t_0}^T \partial_e v(s,\bar{E}_s^{t_0,e+u}) ds \biggr) \biggr].
\end{split}
\end{equation*}
Consequently, in order to prove that $(T-t_0) \partial_e v(t_0,e) <1$, it is enough to prove that:
\begin{equation}
 \label{eq:19:04:8}
\lim_{h \rightarrow 0} \sup_{|u| \leq h} \int_{t_0}^T \partial_e v(t,\bar{E}_t^{t_0,e+u}) dt
 \end{equation}
 is finite with non-zero probability. To do so, the Lipschitz bound given by Proposition \ref{prop:1:1} is not sufficient
 since the integral of the bound is divergent.
 To overcome this difficulty, we use \reff{eq:19:4:6}: with non-zero probability, the values of the process
 $(\bar{E}_t)_{t_0 \leq t \leq T}$ at the neighborhood of $T$ may be made
 as large as desired. Precisely, for $A$ as in Proposition \ref{prop:26:08:1}, it is sufficient to prove that there exists $\delta>0$ small enough such that
 ${\mathbb P}[ \inf_{|h| \leq 1} \inf_{T-\delta \leq t \leq T}
 \bar{E}_t^{t_0,e+h} > A ]>0$. For $\delta >0$, we deduce 
 from the boundedness
 of the drift in \reff{eq:05:04:1} that
 \begin{equation*}
 {\mathbb P} \bigl[ \inf_{|h| \leq 1} \inf_{T-\delta \leq t \leq T}
 \bar{E}_t^{t_0,e+h} > A  \bigr]
 \geq {\mathbb P} \biggl[ e- 1 - (T-t_0) + \inf_{T-\delta \leq t \leq T}
\int_{t_0}^t (T-s) dW_s > A  \biggr].
 \end{equation*}
 By independence of the increments of the Wiener integral, we get
 \begin{equation*}
 \begin{split}
&{\mathbb P} \bigl[ \inf_{|h| \leq 1} \inf_{T-\delta \leq t \leq T}
 \bar{E}_t^{t_0,e+h} > A  \bigr]
  \\
 &\geq {\mathbb P} \biggl[ e-1 - (T-t_0) + 
\int_{t_0}^{T-\delta} (T-s) dW_s 
 > 2 A  \biggr] {\mathbb P} \biggl[ \inf_{T-\delta \leq t \leq T} \int_{T-\delta}^t (T-s) dW_s > - A \biggr].
 \end{split}
 \end{equation*}
 The first probability in the above right-hand side is clearly positive for $T-\delta>t_0$. The second one is equal to  
 \begin{equation*}
 {\mathbb P} \biggl[ \inf_{T-\delta \leq t \leq T} \int_{T-\delta}^t (T-s) dW_s > - A \biggr] = 1 -  {\mathbb P} \biggl[ \sup_{T-\delta \leq t \leq T} \int_{T-\delta}^t (T-s) dW_s \geq A \biggr].
 \end{equation*}
 Using maximal inequality (IV.37.12) in Rogers and Williams \cite{RogersWilliams},
the above right hand-side is always positive.
 By \reff{eq:19:4:6}, we deduce that,
 with non-zero probability, the limsup in \reff{eq:19:04:8} is finite. 
\end{proof}

\subsection{Distribution of $\bar{E}_t$ for $t_0 \leq t \leq T$.}

We finally claim:

\begin{proposition}
\label{prop:1:8}
Keep the notation of Propositions \ref{prop:1:6} and \ref{prop:1:7}
and choose some starting point $(t_0,e) \in [0,T) \times \RR$ 
and some $p \in \RR$.
Then, for every $t \in [t_0,T)$, the law of the variable 
\begin{equation*}
E_t^{t_0,e,p} = \bar{E}_t^{t_0,e} - (T-t) P_t^p = 
\bar{E}_t^{t_0,e} - (T-t) \bigl[ p+ W_t \bigr],
\end{equation*}
obtained by transformation
\reff{eq:25:08:1},
is absolutely continuous with respect to the Lebesgue measure. 
At time $t=T$, it has a Dirac mass at $\Lambda$.
\end{proposition}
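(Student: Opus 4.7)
The claim splits into an absolute continuity statement for $t<T$ and the existence of a Dirac mass at $\Lambda$ when $t=T$; I treat them separately.

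For $t\in[t_0,T)$, the plan is to apply Nualart's Malliavin criterion (Theorem 2.1.3 in \cite{Nualart}) to $E_t^{t_0,e,p}=\bar E_t^{t_0,e}-(T-t)(p+W_t-W_{t_0})$. Since \reff{eq:05:04:1} has diffusion coefficient $(T-s)$ which is bounded and bounded away from $0$ on $[t_0,t]$, and drift $-v$ which, by Proposition \ref{prop:1:6}(2), is Lipschitz in space uniformly on this interval, standard Malliavin calculus for SDEs gives $\bar E_t^{t_0,e}\in\mathbb{D}^{1,2}$ with
\[ D_r\bar E_t^{t_0,e}=(T-r)\exp\Bigl(-\int_r^t\partial_e v(s,\bar E_s^{t_0,e})\,ds\Bigr),\qquad r\in[t_0,t]. \]
Since $D_r(p+W_t-W_{t_0})=1$ on $[t_0,t]$, this yields
\[ D_r E_t^{t_0,e,p}=(T-r)\exp\Bigl(-\int_r^t\partial_e v(s,\bar E_s^{t_0,e})\,ds\Bigr)-(T-t). \]
Differentiating in $r$ gives $\exp(\cdots)\bigl[(T-r)\partial_e v(r,\bar E_r^{t_0,e})-1\bigr]$, which is strictly negative almost surely by Proposition \ref{prop:1:7:b}. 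Since $D_t E_t^{t_0,e,p}=0$, this forces $D_r E_t^{t_0,e,p}>0$ on $[t_0,t)$ a.s., whence $\|DE_t^{t_0,e,p}\|_{L^2([t_0,t])}>0$ a.s., and the density follows.

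For $t=T$, since $E_T^{t_0,e,p}=\bar E_T^{t_0,e}$, it suffices to show $\PP[\bar E_T^{t_0,e}=\Lambda]>0$. The plan exploits the shock structure anticipated in Remark \ref{rem:24:08:2}: inside the triangular region $\mathcal R_t:=(\Lambda,\Lambda+(T-t))$, Corollary \ref{corol:1:3} (applied with $\Lambda^-=\Lambda^+=\Lambda$) sandwiches $v(t,e)$ between $(e-\Lambda)/(T-t)\pm\bigl[(T-t)^\alpha+\exp(-c(T-t)^{2\alpha-1})\bigr]$ for any $\alpha\in(0,1/2)$. Replacing $v$ by this linear profile in \reff{eq:05:04:1} gives the exactly solvable linear SDE
\[ dX_t=-\frac{X_t}{T-t}\,dt+(T-t)\,dW_t, \]
with explicit solution $X_t=(T-t)\bigl[X_{t_1}/(T-t_1)+W_t-W_{t_1}\bigr]$, converging almost surely to $0$ as $t\to T$. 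The argument then proceeds in two steps. First, since on $[t_0,t_1]$ the SDE for $\bar E$ is uniformly elliptic with Lipschitz drift, standard heat kernel estimates show that the density of $\bar E_{t_1}^{t_0,e}$ is strictly positive everywhere, so $\PP\bigl[\bar E_{t_1}^{t_0,e}\in[\Lambda+(T-t_1)/3,\Lambda+2(T-t_1)/3]\bigr]>0$ for every $t_1\in(t_0,T)$. Second, on this event and provided $t_1$ is taken close enough to $T$, I show that $\bar E_t\in\mathcal R_t$ for all $t\in[t_1,T]$ with positive probability, forcing $\bar E_T=\Lambda$. The key estimate writes $\bar E_t-\Lambda$ as the linear reference $(T-t)[(\bar E_{t_1}-\Lambda)/(T-t_1)+W_t-W_{t_1}]$ plus an error controlled by $\int_{t_1}^t |R(s,\bar E_s)|\,ds$, where $R$ is the discrepancy in Corollary \ref{corol:1:3}, and then combines maximal Gaussian inequalities for $\int_{t_1}^t(T-s)\,dW_s$ (whose variance $\leq(T-t_1)^3/3$ is asymptotically negligible compared with the shock width $T-t_1$) to rule out exit from $\mathcal R_t$ with positive probability. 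The same type of maximal inequality is used in Proposition \ref{prop:26:08:1}, via \cite{RogersWilliams}.

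The main obstacle is the second step of part (2): the error term $R$ is only $O((T-s)^\alpha)$ in sup-norm and has to be integrated over the singular horizon $[t_1,T]$, while simultaneously the trajectory must not escape through either boundary of $\mathcal R_t$. The delicate point is to combine the gradient bound $(T-t)\partial_e v\leq1$ of Proposition \ref{prop:1:1} with the strict inequality of Proposition \ref{prop:1:7:b} in a Gronwall-type argument, so that the contractive effect of the linear drift $-X/(T-t)$ dominates the error propagation and drives $\bar E_t$ to $\Lambda$ with positive probability. A secondary, essentially technical, difficulty is to justify that the set of trajectories satisfying all these constraints has strictly positive measure; this should follow from explicit Gaussian computations on the Wiener integral $\int_{t_1}^t(T-s)dW_s$.
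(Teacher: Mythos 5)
Your treatment of the case $t<T$ is correct and is essentially the paper's own argument: your closed form
\begin{equation*}
D_r E_t^{t_0,e,p}=(T-r)\exp\Bigl(-\int_r^t\partial_e v(s,\bar E_s^{t_0,e})\,ds\Bigr)-(T-t)
\end{equation*}
is exactly the integral expression \reff{eq:24:01:10} used in the paper (integrate your formula for the $r$-derivative back from $r$ to $t$ to recover it), and both arguments conclude by Proposition \ref{prop:1:7:b} together with Theorem 2.1.3 of \cite{Nualart}.

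For $t=T$ your overall strategy is also the paper's (comparison of $v$ with the Burgers profile $\psi((e-\Lambda)/(T-t))$, reduction to an explicitly solvable linear SDE inside the shock region, then the Markov property and the positivity of the density of $\bar E_{t_1}^{t_0,e}$ at an intermediate time $t_1$), but the step you yourself label ``the main obstacle'' is precisely the step that must be carried out, and the route you sketch for it does not close as written. Two concrete issues. First, controlling the deviation from the linear reference by $\int_{t_1}^t|R(s,\bar E_s)|\,ds$ is too crude: with $|R|\le C(T-s)^{1/4}$ this is of order $(T-t_1)^{5/4}$, which is \emph{not} small compared with the width $T-t$ of $\mathcal R_t$ as $t\to T$. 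What makes the argument work is the variation-of-constants form of the solution, in which the error enters as $(T-t)\int_{t_1}^t R(s,\bar E_s)(T-s)^{-1}\,ds=O\bigl((T-t)(T-t_1)^{1/4}\bigr)$, i.e.\ smaller than the width of the region by the factor $(T-t_1)^{1/4}$; no Gronwall argument, and no combination of the gradient bounds of Propositions \ref{prop:1:1} and \ref{prop:1:7:b}, is needed at this point. Second, the representation $v(s,x)=(x-\Lambda)/(T-s)+R(s,x)$ is only available while the trajectory sits in the region, so invoking it along the path of $\bar E$ is circular unless you decouple the two. The paper does this by first assembling \reff{eq:22:3:1}, \reff{eq:21}, \reff{eq:01:04:1} and \reff{eq:01:04:2} into the \emph{global} estimate $|v(t,e)-\psi((e-\Lambda)/(T-t))|\le C(T-t)^{1/4}$ valid for all $e$ (this is \reff{eq:rev:F:1}), which yields the pathwise sandwich $X^-_t\le\bar E_t\le X^+_t$ with barrier diffusions $X^\pm$ of drift $-\psi((x-\Lambda)/(T-t))\pm C(T-t)^{1/4}$; it then suffices to check that the explicitly solvable linearizations $Z^\pm$ of $X^\pm$ stay in $[\Lambda,\Lambda+(T-t)]$ on the event $\{\sup_{t_1\le t\le T}|W_t-W_{t_1}|\le 1/8\}$ for suitable initial points, which forces $X^\pm_T=\Lambda$ and hence $\bar E_T=\Lambda$ there. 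Replacing your Gronwall plan by this barrier and explicit-solution argument completes the proof.
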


\begin{proof}
Obviously, we can assume $p=0$, so that $P_t=W_t$. (For simplicity, we will write $E_t^{t_0,e}$ for $E_t^{t_0,e,p}$.) We start with the absolute continuity of $E_t^{t_0,e}$ at time $t<T$. 
Since $v$ is smooth away from $T$, we can compute the Malliavin derivative
of $E_t^{t_0,e}$. (See Theorem 2.2.1 in Nualart \cite{Nualart}.) It satisfies
\begin{equation*}
D_s E_t^{t_0,e} = t-s - \int_s^t \partial_e v \bigl(r,E_r^{t_0,e} +(T-r) W_r \bigr)
D_s E_r^{t_0,e} dr - \int_s^t (T-r) \partial_e v\bigl(r,E_r^{t_0,e} +(T-r) W_r \bigr) dr,
\end{equation*}
for $t_0 \leq s \leq t$. In particular,
\begin{equation}
\label{eq:24:01:10}
\begin{split}
D_s E_t^{t_0,e} &= \int_s^t \biggl[ \bigl[1 - 
(T-r) \partial_e v \bigl(r,E_r^{t_0,e} +(T-r) W_r \bigr)\bigr]
\\
&\hspace{45pt} \times
\exp \biggl( -\int_r^t \partial_e v \bigl(u,E_u^{t_0,e} +(T-u) W_u \bigr) du \biggr) \biggr] dr.
\end{split}
\end{equation}
By Proposition \ref{prop:1:7:b}, we deduce that $D_s E_t^{t_0,e} >0$ for any
$t_0 \leq s \leq t$. By Theorem 2.1.3 in Nualart \cite{Nualart}, we deduce that the law of
$E_t^{t_0,e}$ has a density with respect to the Lebesgue measure.

\vskip 2pt
To prove the existence of a point mass at time $T$, it is enough to
focus on $\bar{E}_T^{t_0,e}$ since the latter is equal to $E_T^{t_0,e}$.  We prove the desired result by comparing the stochastic dynamics of $\bar{E}_T^{t_0,e}$
to the time evolution of solutions of simpler stochastic differential equations. 
With the notation used so far, $\bar{E}_t^{t_0,e}$ is a solution of the stochastic differential equation:
\begin{equation}
\label{fo:sde4E}
d\bar{E}_t=-v(t,\bar{E}_t)dt + (T-t)dW_t
\end{equation}
so it is natural to compare the solution of this equation to solutions of stochastic differential equations with comparable drifts.
Following Remark \ref{rem:24:08:1}, we are going to do so
by comparing $v$ with the solution of the inviscid Burgers equation \reff{eq:21:08:1}.
To this effect we use once more the function $\psi$ defined by $\psi(e) = 1\wedge e^+$ introduced earlier. As said in Remark \ref{rem:24:08:2}, the function $\psi((e-\Lambda)/(T-t))$ is a solution of the Burgers equation 
\reff{eq:21:08:1}
which, up
to the diffusion term (which decreases to $0$ like $(T-t)^2$ when $t\nearrow T$), is the same as the partial differential equation satisfied by $v$.
Using \reff{eq:22:3:1} and \reff{eq:21} with $\Lambda^-=\Lambda^+=\Lambda$ and $\delta=(T-t)^{5/4}$, 
we infer that $v(t,e)$ and $\psi( e-\Lambda/(T-t))$ are exponentially close as $T-t$ tends to $0$
when $e \leq -(T-t)^{5/4}$ or $e \geq T-t + (T-t)^{5/4}$; using \reff{eq:01:04:1} and \reff{eq:01:04:2} with $\Lambda^-=\Lambda^+=\Lambda$ and $\alpha=1/4$, we conclude that the distance between $v(t,e)$ and $\psi((e-\Lambda)/(T-t))$ is at most of order $5/4$ with respect to $T-t$ as $T-t$ tends to $0$
when $-(T-t)^{5/4} < e < T-t + (T-t)^{5/4}$. In any case,  
we have
\begin{equation}
\label{eq:rev:F:1}
\forall e \in \RR, \quad \bigl|v(t,e) - \psi \bigl( \frac{e-\Lambda}{T-t} \bigr) \bigr| \leq C (T-t)^{1/4},
\end{equation}
for some universal constant $C$. We now compare \reff{fo:sde4E} with
\begin{equation}
\label{eq:28:03:1}
d X_t^{\pm} = - \psi \bigl( \frac{X_t^{\pm}-\Lambda}{T-t} \bigr) dt \pm 
C(T-t)^{1/4} dt+ (T-t) dW_t, \quad t_0 \leq t < T,
\end{equation}
with $X_{t_0}^{\pm} = e$ as initial conditions.
Clearly,
\begin{equation}
\label{eq:19:04:11}
X^-_t \leq \bar{E}_t^{e,t_0} \leq X^+_t, \quad t_0 \leq t < T.
\end{equation}
Knowing that $\psi(x)=x$ when $0\le x\le 1$, we anticipate that 
scenarios satisfying $0\le X^\pm_t-\Lambda\le T-t$ can be viewed as solving the stochastic differential equations:
\begin{equation*}
dZ_t^\pm = - \frac{Z_t^\pm-\Lambda}{T-t} dt \pm C (T-t)^{1/4} dt+ (T-t) dW_t,
\end{equation*}
with $Z_{t_0}^\pm=e$ as initial conditions. This remark is useful because these equations have explicit solutions:
\begin{equation}
\label{eq:19:4:10}
Z_t^\pm = \Lambda + (T-t) \bigl[ W_t - W_{t_0} \mp  4C (T-t)^{1/4}
\pm 4C (T-t_0)^{1/4} + \frac{e-\Lambda}{T-t_0} \bigr], \qquad t_0 \leq t \leq T.
\end{equation}
We define the event $F$ by:
$$
F=\left\{ \sup_{t_0\le t\le T}|W_t-W_{t_0}| \le \frac18  \right\}
$$
and we introduce the quantities  $\underline{e}(t_0)$ and $\bar{e}(t_0)$ defined by
$$
\underline{e}(t_0)=\Lambda+\frac14 (T-t_0)\qquad\text{and}\qquad \bar{e}(t_0)=\Lambda+\frac34 (T-t_0)
$$
so that 
$$
\frac14\le \frac{e-\Lambda}{T-t_0}\le \frac34
$$
whenever $\underline{e}(t_0)\le e\le \bar{e}(t_0)$. For such a choice of $e$,  since
$$
\frac{Z^\pm-\Lambda}{T-t} = W_t-W_{t_0}\mp4C(T-t)^{1/4}\pm4C(T-t_0)^{1/4}+\frac{e-\Lambda}{T-t_0},
$$
it is easy to see that if we choose $t_0$ such that $T-t_0$ is small enough for $32C(T-t_0)^{1/4}<1$ to hold, then
\begin{equation*}
\forall t \in [t_0,T], \quad 0\le  \frac{Z_t^--\Lambda}{T-t} \leq \frac{Z_t^+-\Lambda}{T-t} \le 1.
\end{equation*}
on the event $F$. 
This implies that $(X^\pm_t)_{t_0\le t <T}$ and $(Z^\pm_t)_{t_0\le t <T}$ coincide on $F$, and consequently that 
$X_T^+=X_T^-=\Lambda$ and hence $\bar{E}_T^{t_0,e}=\Lambda$ on $F$ by \reff{eq:19:04:11}.
This completes the proof for these particular choices of $t_0$ and $e$. In fact, the result holds for any $e$ and any $t_0 \in [0,T)$. 
Indeed, since $\bar{E}_t^{t_0,e}$ has a strictly positive  
density at any time $t \in (t_0,T)$, if we choose $t_1 \in (t_0,T)$ so that $32C(T-t_1)^{1/4}<1$, then using the Markov property we get
\begin{equation*}
\PP\bigl\{\bar{E}_T^{t_0,e} = \Lambda \bigr\} \geq \int_{\underline{e}(t_1)}^{\bar{e}(t_1)}
\PP\bigl\{\bar{E}_T^{t_1,e'} = \Lambda \bigr\} \PP\bigl\{\bar{E}_{t_1}^{t_0,e} \in de' \bigr\}>0
\end{equation*}
which completes the proof in the general case.
\end{proof}
 
\begin{remark}
We emphasize that the expression for $D_s E_t^{t_0,e}$ given in 
\reff{eq:24:01:10} can vanish with a non-zero probability when replacing $t$ by $T$. Indeed,
the integral 
\begin{equation*}
\int_{r}^T \partial_e v\bigl(u,E_u^{t_0,e} +(T-u) W_u \bigr) du
\end{equation*}
may explode with a non-zero probability since the derivative $\partial_e v(u,e)$ is expected to behave 
like $(T-u)^{-1}$ as $u$ tends to $T$ and $e$ to $\Lambda$. 
Indeed, $v$ is known to behave like the solution of the Burgers equation when close
to the boundary, see \eqref{eq:rev:F:1}. As a consequence, we expect $\partial_e v$ to behave like the gradient of the solution of the Burgers equation. The latter is singular in the neighborhood of the final discontinuity and explodes 
like $(T-u)^{-1}$ in the cone formed by the characteristics of the equation.

However,  in the uniformly elliptic case, the integral above is always bounded since $\partial_e v(u,\cdot)$ is at most of order $(T-u)^{-1/2}$ as explained in Remark \ref{rem:24:08:2}. 
\end{remark}

\subsection{Uniqueness}
Our proof of uniqueness is based on a couple of comparison lemmas.
\begin{lemma}
\label{lem:1}
Let $\phi$ be a non-decreasing smooth function with values in $[0,1]$
greater than ${\mathbf 1}_{[\Lambda,+\infty)}$, and $w$ be the solution of the PDE \reff{eq:3} with $\phi$ as terminal condition. Then,  any solution 
$(\bar{E}_t',Y_t',Z_t')_{t_0 \leq t \leq T}$ of \reff{eq:1:b} starting from $\bar{E}_{t_0}'=e$ 
and satisfying ${\mathbf 1}_{(\Lambda,+\infty)}(\bar{E}_T') \leq Y_T' \leq {\mathbf 1}_{[\Lambda,+\infty)}(\bar{E}_T')$ also satisfies  
\begin{equation*}
w(t,\bar{E}_t') \geq Y_t', \quad t_0 \leq t \leq T.
\end{equation*}
Similarly, if $\phi$ is less than ${\mathbf 1}_{(\Lambda,+\infty)}$, then
\begin{equation*}
w(t,\bar{E}_t') \leq Y_t', \quad t_0 \leq t \leq T.
\end{equation*}
\end{lemma}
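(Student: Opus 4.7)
The plan is to use $w$ as a smooth test function against which to compare $Y'$ along the trajectory of $\bar{E}'$, turning the comparison into a linear BSDE whose sign can be controlled by an exponential change of measure (or, equivalently, an integrating factor).

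Concretely, set $\Delta_t := w(t,\bar{E}'_t) - Y'_t$ for $t \in [t_0,T]$. Since $\phi$ is smooth and non-decreasing with values in $[0,1]$, Proposition \ref{prop:1:1} applies to $w$: it is ${\mathcal C}^{1,2}$ on $[0,T)\times\RR$, extends continuously to $[0,T]\times\RR$, has $\partial_e w$ bounded on the whole of $[0,T]\times\RR$ by the Lipschitz constant of $\phi$, and solves the PDE \reff{eq:3} with terminal datum $\phi$. Applying It\^o's formula to $w(t,\bar{E}'_t)$ along \reff{eq:1:b} and using the PDE to cancel the terms $\partial_t w + \tfrac{(T-t)^2}{2}\partial^2_{ee}w = w\,\partial_e w$, we get
\begin{equation*}
dw(t,\bar{E}'_t) = \partial_e w(t,\bar{E}'_t)\bigl[w(t,\bar{E}'_t) - Y'_t\bigr]dt + (T-t)\,\partial_e w(t,\bar{E}'_t)\,dW_t,
\end{equation*}
so that subtracting $dY'_t = Z'_t\,dW_t$ yields
\begin{equation*}
d\Delta_t = \partial_e w(t,\bar{E}'_t)\,\Delta_t\,dt + \bigl[(T-t)\partial_e w(t,\bar{E}'_t) - Z'_t\bigr]dW_t.
\end{equation*}

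I then kill the drift with the integrating factor
\begin{equation*}
\mathcal{E}_t := \exp\Bigl(-\int_{t_0}^t \partial_e w(s,\bar{E}'_s)\,ds\Bigr),
\end{equation*}
which is well defined and bounded above and below by strictly positive constants on $[t_0,T]$ because $\partial_e w$ is bounded on $[0,T]\times\RR$. By It\^o, $d(\mathcal{E}_t\Delta_t) = \mathcal{E}_t\bigl[(T-t)\partial_e w(t,\bar{E}'_t) - Z'_t\bigr]dW_t$, so $(\mathcal{E}_t\Delta_t)_{t_0 \le t \le T}$ is a local martingale. Since $w$ is $[0,1]$-valued and $Y'$ is a martingale with $Y'_T \in [0,1]$ (so itself $[0,1]$-valued), $\Delta$ is bounded; combined with the boundedness of $\mathcal{E}$, the local martingale is a true, bounded martingale.

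The terminal value is now decisive: under the hypothesis $\phi \ge {\mathbf 1}_{[\Lambda,+\infty)}$ and $Y'_T \le {\mathbf 1}_{[\Lambda,+\infty)}(\bar{E}'_T)$, we have $\Delta_T = \phi(\bar{E}'_T) - Y'_T \ge 0$ almost surely. Taking conditional expectations gives $\mathcal{E}_t\Delta_t = \EE[\mathcal{E}_T\Delta_T\mid\cF_t] \ge 0$, and dividing by the positive factor $\mathcal{E}_t$ yields $w(t,\bar{E}'_t) \ge Y'_t$ on $[t_0,T]$. The reverse inequality of the lemma is obtained by the exact same argument with the inequalities reversed: if $\phi \le {\mathbf 1}_{(\Lambda,+\infty)}$ and $Y'_T \ge {\mathbf 1}_{(\Lambda,+\infty)}(\bar{E}'_T)$, then $\Delta_T \le 0$, giving $w(t,\bar{E}'_t) \le Y'_t$.

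The main obstacle I expect is the justification that $(\mathcal{E}_t\Delta_t)$ is a genuine martingale rather than merely a local one. This rests on two facts which deserve care: first, the uniform-in-time boundedness of $\partial_e w$ on $[0,T]\times\RR$ for smooth $\phi$ (as opposed to the singular bound $1/(T-t)$ of \reff{eq:23:08:1}); and second, the boundedness of $Y'$, which follows from $Y'_T \in [0,1]$ and the martingale structure of the backward part of \reff{eq:1:b}. Neither is deep, but both are necessary to turn the formal comparison into a rigorous one; once they are in hand, the conclusion is immediate.
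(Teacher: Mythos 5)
Your proof is correct and follows essentially the same route as the paper's: It\^o's formula applied to $w(t,\bar{E}_t')$, cancellation via the PDE \reff{eq:3}, the integrating factor $\exp(-\int_{t_0}^t \partial_e w(s,\bar{E}_s')\,ds)$, and a sign argument on the terminal value. Your added justification that the resulting local martingale is a true martingale (boundedness of $\partial_e w$ for smooth $\phi$ and of $\Delta$) is a point the paper leaves implicit, but it is not a different argument.
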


\begin{proof}
Applying It\^o's formula to $(w(t,\bar{E}_t')_{t_0 \leq t \leq T}$, we obtain
\begin{equation*}
d \bigl[ w(t,\bar{E}_t') - Y_t' \bigr] = \bigl(  w(t,\bar{E}_t') - Y_t' \bigr)
 \partial_e w(t,\bar{E}_t') dt + \bigl[(T-t)\partial_e w(t,\bar{E}_t')
- Z_t' \bigr] dW_t.
\end{equation*}
Therefore,
\begin{eqnarray*}
&&d \biggl[ \bigl[ w(t,\bar{E}_t') - Y_t' \bigr] \exp \biggl( -\int_{t_0}^t
\partial_e w(s,\bar{E}_s') ds \biggr) \biggr]\\
&&\phantom{????????}
= \exp \biggl( -\int_{t_0}^t \partial_e w(s,\bar{E}_s') ds \biggr)\bigl[(T-t)\partial_e w(t,\bar{E}_t')
- Z_t' \bigr] dW_t.
\end{eqnarray*}
In particular,
\begin{equation*}
w(t,\bar{E}_t') - Y_t' = 
\exp \biggl( \int_{t_0}^t
\partial_e w(s,\bar{E}_s') ds \biggr) {\mathbb E}
\biggl[ \exp \biggl( - \int_{t_0}^T
\partial_e w(s,\bar{E}_s') ds \biggr) \bigl[ w(T,\bar{E}_T') - Y_T' \bigr]
| {\mathcal F}_t \biggr],
\end{equation*}
which completes the proof.
\end{proof}

The next lemma can be viewed as a form of conservation law.

\begin{lemma}
\label{lem:2}
Let $(\chi^n)_{n \geq 1}$ be a non-increasing sequence of non-decreasing smooth functions 
matching $0$ on some intervals $(-\infty,\Lambda^{-,n})_{n \geq 1}$ and $1$ on some intervals $(\Lambda^{+,n},+\infty)_{n \geq 1}$
and
converging towards ${\mathbf 1}_{[\Lambda,+\infty)}$, then the associated
solutions $(w^n)_{n \geq 1}$, given by Proposition \ref{prop:1:1} converge
towards $v$ constructed in Proposition \ref{prop:1:6}. 

The conclusion remains true if  $(\chi^n)_{n \geq 1}$ is a non-decreasing sequence
converging towards ${\mathbf 1}_{(\Lambda,+\infty)}$.
\end{lemma}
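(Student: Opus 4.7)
\emph{Proof plan.} I focus on the first case (the second is analogous), where the sequence $(\chi^n)$ is non-increasing and each $\chi^n$ dominates the indicator ${\mathbf 1}_{[\Lambda,+\infty)}$ pointwise. The plan is in three stages: extract a candidate limit $\tilde v$ of $(w^n)$, pin it from below by $v$ via Lemma \ref{lem:1}, and pin it from above by a tailored maximum principle for the linearised PDE.

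First, by Proposition \ref{prop:1:1} the $w^n$ are uniformly bounded in $[0,1]$ and their $\mathcal{C}^{1,2}$ norms on each compact subset of $[0,T)\times \RR$ are bounded independently of $n$. Monotonicity of the terminal data transfers to the solutions (standard PDE comparison, cf.\ Theorem 8.6 in \cite{MaWuZhangZhang}), so $w^n\downarrow \tilde v$ pointwise, and Arzelà–Ascoli upgrades this to convergence in $\mathcal{C}^{1,2}_{\text{loc}}([0,T)\times\RR)$. In particular $\tilde v$ is $[0,1]$-valued, non-decreasing and $1/(T-t)$-Lipschitz in $e$, solves \reff{eq:3}, and since $\Lambda^{-,n},\Lambda^{+,n}\to\Lambda$ inherits, via Proposition \ref{prop:1:3}, the same terminal behavior as $v$: for every $\delta>0$, $\tilde v(t,e)\to 1$ (resp.\ $0$) uniformly in $e\ge\Lambda+\delta$ (resp.\ $e\le\Lambda-\delta$) as $t\nearrow T$. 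Applying Lemma \ref{lem:1} with $\phi=\chi^n$ and the FBSDE solution of Proposition \ref{prop:1:7} (which satisfies the sandwich condition), one obtains $w^n(t_0,e)\ge v(t_0,e)$ for every $n$ and $(t_0,e)$, hence $\tilde v\ge v$.

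For the reverse inequality I note that $w:=\tilde v-v$ satisfies the linear equation
\begin{equation*}
\partial_t w+\tfrac{1}{2}(T-t)^2\partial^2_{ee}w-\tilde v\,\partial_e w-(\partial_e v)\,w=0,
\end{equation*}
whose zeroth-order coefficient $-\partial_e v\le 0$ by monotonicity of $v$. Introducing the penalty $w_\varepsilon(t,e):=w(t,e)-\varepsilon/(T-t)^2$ and using $\partial_e v\le 1/(T-t)$ from Proposition \ref{prop:1:1}, a direct computation shows that $w_\varepsilon$ is a strict subsolution, so it cannot attain a positive interior maximum. On the parabolic boundary of $[0,T-\delta]\times[-R,R]$: at $t=T-\delta$ the penalty $\varepsilon/\delta^2$ dominates $|w|\le 1$ provided $\varepsilon\ge\delta^2$; at $|e|=R$ the exponential spatial decay of Proposition \ref{prop:1:3} applied to both $\tilde v$ and $v$ makes $|w|$ arbitrarily small uniformly in $t\in[0,T-\delta]$ for $R$ large enough, so $w_\varepsilon\le 0$ there too. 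The parabolic maximum principle then gives $w_\varepsilon\le 0$ on the rectangle, hence $w\le \varepsilon/(T-t)^2$, and letting $\varepsilon\to 0$ (with $\delta\to 0$ and $R\to\infty$ adjusted accordingly) yields $w(t,e)\le 0$ pointwise on $[0,T)\times\RR$, i.e.\ $\tilde v\le v$. Combined with the previous stage, $\tilde v=v$.

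The main difficulty is the last step: executing a maximum principle despite the degeneracy of the diffusion $(T-t)^2/2$ at the terminal time and the blow-up of $\partial_e v$ as $t\nearrow T$. The penalty $\varepsilon/(T-t)^2$ is calibrated precisely to dominate that blow-up, which is why the sharp gradient bound $\partial_e v\le 1/(T-t)$ of Proposition \ref{prop:1:1} plays a crucial role. The second case, in which $(\chi^n)$ is non-decreasing and converges to ${\mathbf 1}_{(\Lambda,+\infty)}$, is handled by the same strategy, using the second part of Lemma \ref{lem:1} in place of the first to obtain $w^n\le v$, and then the same maximum principle argument for the reverse inequality.
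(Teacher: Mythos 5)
Your first two stages are sound and essentially track the paper: the uniform interior Schauder bounds of Proposition \ref{prop:1:1} give a locally ${\mathcal C}^{1,2}$ limit $\tilde v$ solving \reff{eq:3} and inheriting properties (2)--(4) of Proposition \ref{prop:1:6}, and Lemma \ref{lem:1} applied with the solution of Proposition \ref{prop:1:7} yields $w^n\ge v$, hence $\tilde v\ge v$. The gap is in the third stage, and as written it is fatal: the maximum principle is invoked with the wrong sign. Write $L=\partial_t+\tfrac{(T-t)^2}{2}\partial_{ee}^2-\tilde v\,\partial_e-\partial_e v$ for your (backward) linearized operator, so that $Lw=0$. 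For $\psi(t)=\varepsilon(T-t)^{-2}$ one computes $L\psi=2\varepsilon(T-t)^{-3}-(\partial_e v)\,\varepsilon(T-t)^{-2}\ge\varepsilon(T-t)^{-3}>0$ by \reff{eq:23:08:1}, hence $Lw_\varepsilon=-L\psi<0$. But for a backward operator with terminal data, the condition that forbids a positive interior maximum is $Lw_\varepsilon>0$: at an interior positive maximum one has $\partial_tw_\varepsilon\le 0$, $\partial_{ee}^2w_\varepsilon\le 0$, $\partial_ew_\varepsilon=0$ and $-(\partial_e v)w_\varepsilon\le 0$, so $Lw_\varepsilon\le 0$, which is perfectly compatible with $Lw_\varepsilon<0$. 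Your $w_\varepsilon$ is a strict \emph{supersolution}, for which the minimum principle holds, not the maximum principle (for instance $h(t,x)=\sin x$ satisfies $\partial_t h+\tfrac12\partial_{xx}h<0$ near $x=\pi/2$ and has an interior positive maximum there). The defect is structural rather than a matter of recalibrating the exponent: any time-only penalty that blows up at $t=T$ --- which you need in order to dominate $|w|\le 1$ on the slice $t=T-\delta$ --- has $\partial_t\psi>0$ and therefore $L\psi>0$ wherever $\partial_e v$ is small, so it always pushes the inequality the wrong way.

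Two correct ways to close the gap. The paper's route avoids the terminal layer altogether via a conservation law: \reff{eq:3} is in divergence form, so $\int_\RR(w^n-v^m)(t,e)\,de=\int_\RR(\chi^n-\phi^m)(e)\,de$ for every $t<T$ (the integrals converge by Proposition \ref{prop:1:3}); letting $m\to\infty$ and then $n\to\infty$ the right-hand side tends to $0$, and since $w^n-v\ge 0$ by Lemma \ref{lem:1} this gives $L^1$ convergence of $w^n(t,\cdot)$ to $v(t,\cdot)$, upgraded to pointwise convergence by equicontinuity. Alternatively, your maximum-principle strategy can be repaired by replacing the penalty with an actual smallness estimate on the terminal slice: items (3)--(4) of Proposition \ref{prop:1:6}, which both $v$ and $\tilde v$ satisfy, give $\sup_e|\tilde v-v|(T-\delta,e)\le C\delta^{1/4}$ (cf.\ \reff{eq:rev:F:1}); the plain maximum principle for $Lw=0$ (zeroth-order coefficient $-\partial_e v\le 0$, coefficients bounded on $[0,T-\delta]\times\RR$, $w$ bounded) then yields $\sup_{[0,T-\delta]\times\RR}w^+\le C\delta^{1/4}\to 0$.
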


\begin{proof} 
Each $w^n$ is a solution of the conservative partial differential equation \reff{eq:3}. Considering
$v^n$ as in the proof of Proposition \ref{prop:1:6}, we have for any $n,m \geq 1$
\begin{equation*}
\int_{\RR} (w^n-v^m)(t,e) de = \int_{\RR} (\chi^n-\phi^m)(e) de, \quad t \in [0,T).
\end{equation*}
Notice that the integrals are well-defined because of Proposition \ref{prop:1:3}.
Since $\phi^m(e) \rightarrow {\mathbf 1}_{[\Lambda,+\infty)}(e)$ as $m
\rightarrow + \infty$ for $e \not = \Lambda$, we deduce that
\begin{equation*}
\int_{\RR} (w^n-v)(t,e) de = 
\int_{\RR} \bigl[\chi^n(e) - {\mathbf 1}_{[\Lambda,+\infty)}(e)\bigr] de, \quad t \in [0,T).
\end{equation*}
Since the right hand side converges towards 0 as $n$ tends to $+\infty$, so does the left hand side,
but since $w^n(t,e) \geq v(t,e)$ by Lemma \ref{lem:1} (choosing $(\bar{E}',Y',Z')=(\bar{E}^{t_0,e},Y^{t_0,e},Z^{t_0,e})$), we must also have:
\begin{equation*}
\lim_{n \rightarrow + \infty} \int_{\RR} |w^n(t,e)-v(t,e)| de = 0.
\end{equation*}
Since $(w^n(t,\cdot))_{n \geq 1}$ is equicontinuous (by Proposition 
\ref{prop:1:1}), we conclude that $w^n(t,e) \rightarrow v(t,e)$. 
The proof is similar if $\chi^n \nearrow {\mathbf 1}_{(\Lambda,+\infty)}$.
\end{proof}

\vskip 6pt
To complete the proof of uniqueness, consider a sequence $(\chi^n)_{n \geq 1}$
as in the statement of Lemma \ref{lem:2}. For any solution
$(\bar{E}_t',Y_t',Z_t')_{t_0 \leq t \leq T}$ of \reff{eq:1:b} with $\bar{E}_{t_0}'=e$, Lemma \ref{lem:1} yields
\begin{equation*}
w^n(t,\bar{E}_t') \geq Y_t', \quad t \in [t_0,T).
\end{equation*}
Passing to the limit, we conclude that
\begin{equation*}
v(t,\bar{E}_t') \geq Y_t', \quad t \in [t_0,T).
\end{equation*}
Choosing a non-decreasing sequence $(\chi^n)_{n \geq 1}$, instead, we obtain the reverse inequality, and hence, we conclude that $Y_t' = v(t,\bar{E}_t')$ 
for $t \in [t_0,T)$. By uniqueness
to \reff{eq:05:04:1}, we deduce that $\bar{E}_t'= \bar{E}_t^{t_0,e}$, so that
$Y_t'=Y_t^{t_0,e}$. We easily deduce that $Z_t'=Z_t^{t_0,e}$ as well. 
\vskip 2pt

\begin{remark}
We conjecture that the analysis performed in this section can be extended to more general conservation laws than Burgers 
equation.  The Burgers case is
the simplest one since the corresponding forward - backward stochastic differential equation is purely linear.
\end{remark}

\section{\textbf{Option Pricing and Small Abatement Asymptotics}}
\label{se:option}
In this section, we consider the problem of option pricing in the framework of the first equilibrium model introduced in this paper. 

\subsection{PDE Characterization}

Back to the risk neutral dynamics of the (perceived) emissions given by \reff{fo:RNfbsde}, we assume that the emissions of the business as usual scenario are modeled by a geometric Brownian motion, so that $b(t,e)=b e$ and $\sigma(t,e)=\sigma e$. As explained in the introduction, this model has been used in most of the early reduced form analyses of emissions allowance forward contracts and option prices (see \cite{ChesneyTaschini} and \cite{CarmonaHinz} for example). The main thrust of this section is to include the impact of the allowance price $Y$ on the dynamics of the cumulative emissions. As we already saw in the previous section, this feedback  $f(Y_s)$ is the source of a nonlinearity in the PDE whose solution determines the price of an allowance. Throughout this section, we assume that under the pricing measure (martingale spot measure) the cumulative emissions and the price of a forward contract on an emission allowance satisfy the forward-backward system:
 \begin{equation}
 \label{fo:BMfbsde}
 \begin{cases}
&\displaystyle E_t= E_0+\int_0^t(bE_s-f(Y_s))ds+\int_0^t \sigma E_s d W_s \\
&\displaystyle Y_t = \lambda\;\mathbf{1}_{[\Lambda,\infty)}(E_T) -\int_t^T Z_td W_t,
 \end{cases}
 \end{equation}
with $f$ as in \eqref{f-Niz} with $f(0)=0$ and $\lambda,\Lambda >0$. For notational convenience, the martingale measure is denoted by $\PP$ instead of ${\mathbb Q}$ as in Section \ref{se:equilibrium} and the associated Brownian motion by $(W_{t})_{0 \leq t \leq T}$
instead of $(\tW_{t})_{0 \leq t \leq T}$.

Theorem \ref{thmsec3} directly applies here, so that equation \eqref{fo:BMfbsde} is uniquely solvable given the initial condition 
$E_{0}$. In particular, we know from the proof of Theorem \ref{thmsec3} that the solution $(Y_t)_{0 \leq t \leq T}$ of the backward equation is constructed as a function $(Y_t=u(t,E_t))_{0 \leq t \leq T}$ of the solution of the forward equation. 
Moreover, since we are assuming that $f(0)=0$, it follows from Proposition \ref{prop:sign} that the process $E$ takes positive values.
 
Referring to \cite{PardouxPeng}, we notice that the function $u$ is the right candidate for being the viscosity solution to the PDE 
 \begin{equation}
 \label{fo:bsde->pde}
 \begin{cases}
 &\partial_t u(t,e) + (be-f(u(t,e)))\partial_eu(t,e) + \frac12\sigma^2e^2 \partial_{ee}^2u(t,e) =0,\qquad (t,e)\in[0,T)\times\RR_+\\
& u(T,.)=\lambda\mathbf{1}_{[\Lambda,\infty)}
\end{cases}
\end{equation}
Having this connection in mind, we consider next the price at time $t<\tau$ of a European call option with maturity $\tau<T$ and strike $K$ on an allowance forward contract maturing at time $T$. It is given by the expectation
 $$
\EE\{(Y_\tau^{t,e}-K)^+\}=\EE\{(u(\tau,E_\tau^{t,e})-K)^+\},
 $$ 
which can as before, be written as a function $U(t,E_t^{t,e})$ of the current value of the cumulative emissions, where the notation 
$(t,e)$ in superscript indicates that $E_t=e$. Once the function $u$ is known and/or computed, for exactly the same reasons as above, the function $U$ appears as the viscosity solution of the linear partial differential equation:
 \begin{equation}
 \label{fo:BSpde}
\begin{cases}
& \partial_t U(t,e) + (be-f(u(t,e)))\partial_eU(t,e) + \frac12\sigma^2e^2 \partial_{ee}^2U(t,e) =0,\quad (t,e)\in[0,\tau)\times\RR_+\\
&U(\tau,.)=(u(\tau,.)-K)^+, 
\end{cases}
\end{equation}
which, given the knowledge of $u$, is a linear partial differential equation. Notice that in the case $f\equiv 0$ of infinite abatement costs, except for the fact that the coefficients of the geometric Brownian motion were assumed to be time dependent, the above option price is the same as the one derived in \cite{CarmonaHinz}.

%is a {\color{red} viscosity} solution of the nonlinear partial differential equation} 
% \begin{equation}
% \label{fo:bsde->pde}
% \begin{cases}
% &\partial_t v(t,e) + (be-f(v(t,e)))\partial_ev(t,e) + \frac12\sigma^2e^2 \partial_{ee}^2v(t,e) =0,\qquad (t,e)\in[0,T)\times\RR\\
%& v(T,.)=\lambda\mathbf{1}_{[\Lambda,\infty)}.
%\end{cases}
%\end{equation}
%{\color{red} Notice that, for the sake of simplicity, we are not imposing that the  emissions process $E$ is non-negative, see %Proposition \ref{prop:sign}.}

\subsection{Small Abatement Asymptotics}

Examining the PDEs \reff{fo:bsde->pde} and \reff{fo:BSpde}, we see that there are two main differences with the classical Black-Scholes framework. First, the underlying contract price is determined by the nonlinear PDE \reff{fo:bsde->pde}. Second, the option pricing PDE \reff{fo:BSpde} involves the nonlinear term $f(u(t,e))$, while still being linear in terms of the unknown function $U$. Because the function $u$ is determined by the first PDE \reff{fo:bsde->pde}, this nonlinearity is inherent to the model, and one cannot simply reduce the PDE to the Black-Scholes equation.

\vskip 4pt
In order to understand the departure of the option prices from those of the Black-Scholes model, we introduce a small parmater $\epsilon\ge 0$, and take the abatement rate to be of the form $f=\epsilon f_0$ for some fixed non-zero increasing continuous function $f_0$. We denote by $u^\epsilon$ and $U^\epsilon$ the corresponding prices of the allowance forward contract and the option.  Here, what we call \emph{Black-Scholes model} corresponds to the case $f\equiv 0$. Indeed, in this case, both \reff{fo:bsde->pde} and \reff{fo:BSpde} reduce to the linear Black-Scholes PDE, differing only through their boundary conditions. This model was one of the models used in \cite{CarmonaHinz} for the purpose of pricing options on emission allowances based on price data exhibiting no implied volatility smile.

For $\epsilon=0$, the nonlinear feedback given by the abatement rate disappears and we easily compute that, for $e>0$,
 \begin{eqnarray}
 u^0(t,e) &=& \lambda\EE\left[{\mathbf 1}_{[\Lambda,\infty)}(E^{0,t,e}_T)\right]
 \;=\; \lambda\Phi\left(\frac{\ln[e \exp(b(T-t))/\Lambda ]}{\sigma\sqrt{T-t}} - \frac{\sigma\sqrt{T-t}}{2}\right)
 \label{v0}\\
 U^0(t,e) &=& \EE\left[(u^0(\tau,E^{0,t,e}_\tau)-K)^+\right],\qquad 0\le t\le \tau,
 \label{V0}
 \end{eqnarray}
where $E^{0,t,e}$ is the geometric Brownian motion:
 \begin{equation}
 \label{fo:gbm}
 dE_s^{0,t,e}=E^{0,t,e}_s [ b ds + \sigma d W_s], \quad s \geq t,
 \end{equation}
used as a proxy for the cumulative emissions in business as usual, with the initial condition $E_{t}^{t,e}=e$.
See for example \cite{CarmonaHinz} for details and complements. The main technical result of this section is the following first order Taylor expansion of the option price.
 
\begin{proposition}\label{propTaylor1}
Let $f$ satisfy \eqref{f-Niz} and $(t,e) \in [0,\tau) \times (0,+\infty)$. Then, as $\epsilon\to 0$, we have
\begin{equation*}
\begin{split}
 &U^\epsilon(t,e) = U^0(t,e)
 \\
 &\hspace{5pt}  - \epsilon\;\EE\left[\mathbf{1}_{[\Lambda,\infty)}(u^0(\tau,E^{0,t,e}_\tau))
                       \int_t^T f_0(u^0(s,E^{0,t,e}_s))
                                \partial_e u^0(s\vee\tau,E^{0,t,e}_{s\vee\tau})
                                \frac{E^{0,t,e}_{s\vee\tau}}
                                     {E^{0,t,e}_s}
                                ds
                \right]
 +o(\epsilon),
 \end{split}
 \end{equation*}
where $\epsilon^{-1}o(\epsilon)\longrightarrow 0$ as $\epsilon\to 0$.
\end{proposition}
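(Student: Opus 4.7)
The plan is to decouple the $\epsilon$-dependence in $U^\epsilon$ into two contributions: one coming from the perturbation of the pricing function $u^\epsilon$ of the underlying forward, and one coming from the perturbation of the forward emission process $E^{\epsilon,t,e}$. Writing
\begin{equation*}
U^\epsilon(t,e)=\EE\bigl[(u^\epsilon(\tau,E^{\epsilon,t,e}_\tau)-K)^+\bigr],
\end{equation*}
I would Taylor expand both $u^\epsilon$ and $E^{\epsilon,t,e}$ to first order in $\epsilon$ around $\epsilon=0$, substitute the expansions, differentiate the payoff $(\cdot-K)^+$ formally, and finally rearrange using the Markov property of $E^{0,t,e}$ and the tower property to produce the single integral appearing in the statement. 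Throughout, I would use that $u^0$ is smooth on $[0,\tau]\times(0,\infty)$ since its terminal condition at time $\tau<T$ is given by \eqref{V0} and is a smooth function thanks to the explicit Gaussian formula \eqref{v0}.

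\textbf{Expansion of $u^\epsilon$.} The function $u^\epsilon$ solves the quasilinear PDE \eqref{fo:bsde->pde} (with $f=\epsilon f_0$), and $u^0$ solves the associated linear PDE with $\epsilon=0$. Setting $w:=\lim_{\epsilon \downarrow 0}\epsilon^{-1}(u^\epsilon-u^0)$ formally, $w$ should satisfy the linear inhomogeneous equation
\begin{equation*}
\partial_t w(t,e) + be\,\partial_e w(t,e) + \tfrac12\sigma^2 e^2\,\partial^2_{ee} w(t,e) = f_0\bigl(u^0(t,e)\bigr)\partial_e u^0(t,e),
\end{equation*}
with zero terminal condition, so that by Feynman--Kac along the geometric Brownian motion $E^{0,t,e}$ of \eqref{fo:gbm},
\begin{equation*}
w(t,e) = -\EE\Bigl[\int_t^T f_0\bigl(u^0(s,E^{0,t,e}_s)\bigr)\partial_e u^0\bigl(s,E^{0,t,e}_s\bigr)\,ds\Bigr].
\end{equation*}
Rigorously, I would obtain this by a perturbation argument on the PDE or, equivalently, by differentiating in $\epsilon$ the representation $u^\epsilon(t,e)=\lambda\PP[E^{\epsilon,t,e}_T\ge\Lambda]$ using a Girsanov change of measure that turns $E^{\epsilon,t,e}$ back into $E^{0,t,e}$; Lipschitz continuity of $f_0$ and \eqref{f-Niz} provide the needed stability estimates and control on the Girsanov density to get uniform-in-$\epsilon$ bounds and the $o(\epsilon)$ remainder.

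\textbf{Expansion of $E^{\epsilon,t,e}$ and assembly.} Writing $E^{\epsilon,t,e}_s = E^{0,t,e}_s + \epsilon D_s + o(\epsilon)$, the process $D$ satisfies the linear SDE $dD_s = b D_s\,ds - f_0(u^0(s,E^{0,t,e}_s))\,ds + \sigma D_s\,dW_s$ with $D_t=0$, whose explicit solution, via variation of parameters against the stochastic exponential defining $E^{0,t,e}/E^{0,t,e}_t$, is
\begin{equation*}
D_s = -\int_t^s \frac{E^{0,t,e}_s}{E^{0,t,e}_r}\,f_0\bigl(u^0(r,E^{0,t,e}_r)\bigr)\,dr.
\end{equation*}
Combining with the expansion of $u^\epsilon$, a formal first-order Taylor expansion gives
\begin{equation*}
U^\epsilon(t,e) - U^0(t,e) = \epsilon\,\EE\Bigl[\mathbf{1}_{\{u^0(\tau,E^{0,t,e}_\tau)\ge K\}}\bigl(w(\tau,E^{0,t,e}_\tau) + \partial_e u^0(\tau,E^{0,t,e}_\tau)\,D_\tau\bigr)\Bigr] + o(\epsilon).
\end{equation*}
Plugging in the Feynman--Kac formula for $w(\tau,\cdot)$ (and applying the tower property and the strong Markov property of $E^{0,t,e}$ at time $\tau$ to replace $E^{0,\tau,E^{0,t,e}_\tau}_s$ by $E^{0,t,e}_s$ for $s\ge\tau$) and the explicit form of $D_\tau$, the two contributions merge into the single integral in the statement by using the indicator convention $s\vee\tau$ (so that $E^{0,t,e}_{s\vee\tau}/E^{0,t,e}_s$ equals $E^{0,t,e}_\tau/E^{0,t,e}_s$ for $s\le\tau$ and $1$ for $s\ge\tau$, and similarly for $\partial_e u^0$).

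\textbf{Main obstacle.} The delicate point is justifying the differentiation of $(\cdot-K)^+$ at the point where $\Xi^0:=u^0(\tau,E^{0,t,e}_\tau)-K$ vanishes, since this payoff is merely Lipschitz. The standard way around this is to prove that the law of $\Xi^0$ is absolutely continuous with respect to Lebesgue measure on $\RR$, so that $\PP\{\Xi^0=0\}=0$ and the Dirac contribution disappears; this is immediate here because $E^{0,t,e}_\tau$ admits a density (being lognormal) and $e\mapsto u^0(\tau,e)$ is smooth and strictly monotone on $(0,\infty)$ by the explicit formula \eqref{v0}. With this in hand, a dominated-convergence argument applied to the difference quotient $\epsilon^{-1}[(u^\epsilon(\tau,E^{\epsilon,t,e}_\tau)-K)^+ - (u^0(\tau,E^{0,t,e}_\tau)-K)^+]$, combined with the Lipschitz bound $|x^+ - y^+|\le|x-y|$ and the uniform-in-$\epsilon$ $L^2$ control of the difference quotients $\epsilon^{-1}(u^\epsilon-u^0)$ and $\epsilon^{-1}(E^{\epsilon,t,e}-E^{0,t,e})$, delivers the $o(\epsilon)$ remainder and completes the proof.
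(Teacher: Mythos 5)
Your strategy is sound and your formal computation lands exactly on the stated formula (note in passing that the indicator in the statement should read $\mathbf{1}_{[K,\infty)}$, as your derivation and the paper's proof both show), but it is organized quite differently from the paper's argument. You expand both the value function ($u^\epsilon=u^0+\epsilon w+o(\epsilon)$, with $w$ solving the linearized PDE) and the forward process ($E^{\epsilon}=E^{0}+\epsilon D+o(\epsilon)$), then Taylor-expand the payoff and reassemble via the Markov property. The paper instead never differentiates anything with respect to $\epsilon$: it applies It\^o's formula to the \emph{unperturbed} functions $u^0$ and $U^0$ evaluated along the \emph{perturbed} process $E^{\epsilon,t,e}$, and the PDEs satisfied by $u^0$ and $U^0$ turn all surviving drift terms into \emph{exact} identities such as $u^0(t,e)-u^\epsilon(t,e)=\epsilon\,\EE\int_t^T f_0(u^\epsilon(s,E^{\epsilon,t,e}_s))\partial_e u^0(s,E^{\epsilon,t,e}_s)\,ds$; the replacement of $u^\epsilon$ by $u^0$ and $E^\epsilon$ by $E^0$ is then done only inside terms already carrying a factor $\epsilon$, where it costs only $o(\epsilon)$, and the two time ranges are merged through the identity $\partial_e U^0(t,e)=\EE[(E^{0,t,e}_\tau/E^{0,t,e}_t)\mathbf{1}_{[K,\infty)}(u^0(\tau,E^{0,t,e}_\tau))\partial_e u^0(\tau,E^{0,t,e}_\tau)]$. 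What the paper's device buys is precisely the avoidance of the hardest and least developed part of your plan: the rigorous justification of the two first-order expansions, which you defer to "a perturbation argument on the PDE or a Girsanov change of measure." That step is not free: the Girsanov kernel $\epsilon f_0(u^\epsilon(s,E_s))/(\sigma E_s)$ must be controlled near $E_s=0$ (this works, but only after showing $u^\epsilon\le u^0$ and exploiting the rapid decay of $u^0(s,e)$ as $e\downarrow 0$), and the expansion of $E^\epsilon$ needs the bound $\partial_e u^0(s,\cdot)\le C(T-s)^{-1/2}$ — which the paper establishes from the explicit Gaussian formula and which is the crucial integrable singularity making all your integrals and Gronwall estimates close. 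Your treatment of the kink of $(\cdot-K)^+$ via absolute continuity of the law of $u^0(\tau,E^{0,t,e}_\tau)$ near $K$ is exactly the paper's step (iii), so that part is fully aligned. In short: same analytic ingredients, genuinely different assembly; yours is the more standard sensitivity-analysis route but carries a heavier (and here only sketched) burden of justification, while the paper's exact-It\^o bookkeeping is slicker and self-contained.
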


\begin{proof} 
The proof divided into four parts.
\vskip2pt

(i) We first prove that the functions $u^0$ and $U^0$, with $u^0\equiv 0$ and $U^0 \equiv 0$ on 
$[0,T] \times \RR_{-}$ and $[0,\tau] \times \RR_{-}$ respectively, belongs to ${\mathcal C}^{1,2}([0,T) \times \RR)$ and 
${\mathcal C}^{1,2}([0,\tau) \times \RR)$ respectively. 

By \eqref{v0}, we know that $u^0$ is ${\mathcal C}^{1,2}$ on $[0,T) \times \RR_{+}^*$. Obviously $u^0 \equiv 0$
on the whole $[0,T] \times \{0\}$ since $\Lambda>0$. Using the bound
\begin{equation*}
\int_{-\infty}^{-x} \exp \bigl( - \frac{v^2}{2}\bigr) \frac{dv}{\sqrt{2 \pi}} \leq 
\sqrt{\frac{2}{\pi}} x^{-1} \exp\bigl( - \frac{x^2}{2} \bigr), \quad x >0,
\end{equation*}
we deduce that 
\begin{equation*}
u^0(t,e) \leq \sqrt{\frac{2}{\pi}} \lambda \frac{\sigma (T-t)^{1/2}}{\vert \ln(e \exp(b(T-t))/\Lambda) \vert}
\exp \bigl( - \frac{\vert \ln(e \exp(b(T-t))/\Lambda) \vert^2}{2 \sigma^2 (T-t)} \bigr),
\end{equation*}
for $0 < e  \ll 1, \ t \in [0,T)$. This shows that $u^0(t,e)$ decays towards $0$ faster than any polynomial. In particular
$\partial_{e} u^0(t,0)=\partial_{ee}^2 u^0(t,0)=0$. Differentiating \eqref{v0} with respect to $e$, we conclude by the same argument that 
$\partial_{e} u^0(t,e)$ and $\partial_{ee}^2 u^0(t,e)$ decay towards $0$ faster than any polynomial, so that 
the first and second-order derivatives in space are continuous on $[0,T) \times \RR_{+}$. Obviously, $\partial_{t} u^0(t,0)=0$ for any $t \in [0,T)$ and, by differentiating \eqref{v0} with respect to $t$, we can also prove that $\partial_{t} u^0$ is continuous on $[0,T) \times \RR_{+}$. Since $u^0 \equiv 0$ on $[0,T] \times \RR_{-}^*$, we deduce that $u^0$ is of class 
${\mathcal C}^{1,2}$ on $[0,T) \times \RR$.

All in all, the computation of the first-order derivatives yields
\begin{equation*}
\partial_{e} u^0(t,e) = \frac{\lambda}{\sqrt{2\pi} e \sigma (T-t)^{1/2}}
\exp \bigl( - \frac{\vert \ln(e \exp(b(T-t))/\Lambda)  - \sigma^2 (T-t)/2\vert^2}{2 \sigma^2 (T-t)} \bigr), 
\end{equation*}
for $e >0$ and $t \in [0,T)$. The above right-hand side is less than $C/(T-t)^{1/2}$ for $e$ away from $0$, the constant $C$ being independent of $t$. 
When $e$ is close to $0$, 
\begin{equation*}
\vert \partial_{e} u^0(t,e) \vert \leq 
\frac{\lambda}{\sqrt{2\pi} e \sigma (T-t)^{1/2}}
\exp \bigl( - \frac{\vert \ln(e \exp(b(T-t))/\Lambda) \vert^2}{2 \sigma^2 (T-t)} \bigr),
\end{equation*}
so that the bound 
\begin{equation}
\label{eq:6:12:1}
\vert \partial_{e} u^0(t,e) \vert \leq C (T-t)^{-1/2}, \quad t \in [0,T), \ e \in \RR,
\end{equation}
is always true. As a by-product, we deduce that 
$u^0(\tau,E_{\tau}^0) \leq C(T-\tau)^{-1/2} \vert E_{\tau}^0 \vert$,
so that 
\begin{equation*}
(u^0(\tau,E_{\tau}^0)-K)^+ = (u^0(\tau,E_{\tau}^0)-K)^+ {\mathbf 1}_{\{
\vert E_{\tau}^0 \vert \geq (T-\tau)^{1/2}K/C \}}.
\end{equation*}
In particular,
\begin{equation*}
U^0(t,e) = {\mathbb E} \bigl[(u^0(\tau,E_{\tau}^{0,t,e})-K)^+ {\mathbf 1}_{\{
\vert E_{\tau}^{0,t,e} \vert \geq (T-\tau)^{1/2}K/C \}} \bigr]. 
\end{equation*}
By the same argument as the one used for $u^0$, we see that $U^0$ and its partial derivatives with respect to $t$ and $e$ decay towards $0$ as $e$ tends to $0$, at a faster rate than any polynomial one. In particular, setting $U^0(t,e)=0$ for $t \in [0,\tau]$ and 
$e \in \RR$, we deduce that $U^0$ belongs to ${\mathcal C}^{1,2}([0,\tau) \times \RR)$.

(ii) We use the smoothness of $u^0$ and apply It\^o's formula to $(u^0(s,E_{s}^{\epsilon,t,e}))_{t \leq s \leq T}$, 
where $E^{\epsilon,t,e}$ denotes the forward process in \eqref{fo:BMfbsde}, when $f \equiv \epsilon f_{0}$ and 
under the initial condition $E^{\epsilon,t,e}_{t}=e>0$, $t \in [0,\tau)$. Using the fact that $u^0$ belongs to ${\mathcal C}^{1,2}([0,T) \times \RR)$ together with (\ref{eq:6:12:1}), we deduce that, for any $t \leq S <T$,
\begin{equation*}
u^0(t,e) = {\mathbb E} \biggl[ 
u^0(S,E_{S}^{\epsilon,t,e})  + \epsilon \int_{t}^S f_{0}\bigl( u^{\epsilon}(s,E_{s}^{\epsilon,t,e})\bigr) \partial_{e} u^0(s,E_{s}^{\varepsilon,t,e}) ds \biggr]. 
\end{equation*}
Clearly, $u^0(S,e) \longrightarrow \lambda {\mathbf 1}_{[\Lambda,+\infty)}(e)$ as $S \nearrow T$, for $e \not = S$. Since
${\mathbb P}[E_{T}^{\epsilon,t,e} = \Lambda ]=0$, see (ii-3) in the proof of Theorem \ref{thmsec3}, we deduce that 
${\mathbb E} u^0(S,E_{S}^{\epsilon,t,e}) \longrightarrow \lambda {\mathbb E}{\mathbf 1}_{[\Lambda,+\infty)}(E_{T}^{\epsilon,t,e})=u^\epsilon(t,e)$
as $S \nearrow T$. Therefore,
\begin{equation}
\label{eq:6:12:2}
u^0(t,e) - u^{\epsilon}(t,e) =  \epsilon {\mathbb E}   \int_{t}^T f_{0}\bigl( u^{\epsilon}(s,E_{s}^{\epsilon,t,e})\bigr) \partial_{e} u^0(s,E_{s}^{\epsilon,t,e}) ds, 
\end{equation}
the right-hand side above making sense because of the bound \eqref{eq:6:12:1}. By a similar argument, we get
\begin{equation}
\label{eq:6:12:3}
U^0(t,e)  = {\mathbb E} \bigl( u^0(\tau,E_{\tau}^{\epsilon,t,e}) - K \bigr)^+ + \epsilon {\mathbb E}  \int_{t}^{\tau} f_{0}\bigl( u^{\epsilon}(s,E_{s}^{\epsilon,t,e})\bigr) \partial_{e} U^0(s,E_{s}^{\epsilon,t,e}) ds. 
\end{equation}
Notice that $\partial_{e} U^0$ is bounded since $u^0(\tau,\cdot)$ is Lipschitz-continuous so that the integral above is well-defined.
By \eqref{eq:6:12:1} and \eqref{eq:6:12:2}, we know that $\|u^0(\tau,\cdot) - u^{\epsilon}(\tau,\cdot)\|_{\infty} \leq C \epsilon$, for a constant $C$ independent of $\epsilon$ and $\tau$. Therefore,
\begin{equation*}
\begin{split}
&{\mathbb E} \bigl( u^0(\tau,E_{\tau}^{\epsilon,t,e}) - K \bigr)^+ 
\\
&= {\mathbb E} \bigl[ \bigl( u^0(\tau,E_{\tau}^{\epsilon,t,e}) - K \bigr)^+ {\mathbf 1}_{\{
\vert 
u^0(\tau,E_{\tau}^{\epsilon,t,e}) - K \vert \geq C \epsilon \}} \bigr]
\\
&\hspace{15pt} + 
{\mathbb E} \bigl[ \bigl( u^0(\tau,E_{\tau}^{\epsilon,t,e}) - K \bigr)^+ {\mathbf 1}_{\{
\vert 
u^0(\tau,E_{\tau}^{\epsilon,t,e}) - K \vert < C \epsilon\}} \bigr]
\\
&= {\mathbb E} \bigl[ \bigl( u^{\epsilon}(\tau,E_{\tau}^{\epsilon,t,e}) - K \bigr)^+ {\mathbf 1}_{\{
\vert 
u^0(\tau,E_{\tau}^{\epsilon,t,e}) - K \vert \geq C \epsilon \}} \bigr]
\\
&\hspace{15pt} + \epsilon {\mathbb E} \biggl[   {\mathbf 1}_{\{
u^0(\tau,E_{\tau}^{\epsilon,t,e}) \geq  K + C \epsilon  \}}
\int_{\tau}^T f_{0}\bigl( u^{\epsilon}(s,E_{s}^{\epsilon,t,e})\bigr) \partial_{e} u^0(s,E_{s}^{\epsilon,t,e}) ds \biggr]
\\
&\hspace{15pt} + 
{\mathbb E} \bigl[ \bigl( u^0(\tau,E_{\tau}^{\epsilon,t,e}) - K \bigr)^+ {\mathbf 1}_{\{
\vert 
u^0(\tau,E_{\tau}^{\epsilon,t,e}) - K \vert < C \epsilon \}} \bigr]
\\
&= {\mathbb E} \bigl[ \bigl( u^{\epsilon}(\tau,E_{\tau}^{\epsilon,t,e}) - K \bigr)^+  \bigr]
 + \epsilon {\mathbb E} \biggl[   {\mathbf 1}_{\{
u^0(\tau,E_{\tau}^{\epsilon,t,e})  \geq K \}}
\int_{\tau}^T f_{0}\bigl( u^{\epsilon}(s,E_{s}^{\epsilon,t,e})\bigr) \partial_{e} u^0(s,E_{s}^{\epsilon,t,e}) ds \biggr]
\\
&\hspace{15pt} + \epsilon O \bigl( {\mathbb P}[\vert 
u^0(\tau,E_{\tau}^{\epsilon,t,e}) - K \vert \leq C \epsilon ] \bigr),
\end{split}
\end{equation*}
where $O(\cdot)$ stands for the Landau notation. By \eqref{eq:6:12:3}, we finally get
\begin{equation}
\label{eq:6:12:4}
\begin{split}
U^0(t,e) &= U^{\epsilon}(t,e) + \epsilon {\mathbb E} \biggl[   {\mathbf 1}_{\{
u^0(\tau,E_{\tau}^{\epsilon,t,e})  \geq K \}}
\int_{\tau}^T f_{0}\bigl( u^{\epsilon}(s,E_{s}^{\epsilon,t,e})\bigr) \partial_{e} u^0(s,E_{s}^{\epsilon,t,e}) ds \biggr]
\\
&\hspace{15pt} + \epsilon {\mathbb E} \int_{t}^{\tau} f_{0}\bigl( u^{\epsilon}(s,E_{s}^{\epsilon,t,e})\bigr) \partial_{e} U^0(s,E_{s}^{\epsilon,t,e}) ds
\\
&\hspace{15pt} + \epsilon O \bigl( {\mathbb P}[
\vert 
u^0(\tau,E_{\tau}^{\epsilon,t,e}) - K \vert \leq C \epsilon]\bigr).
\end{split}
\end{equation}

(iii) We now prove that: 
\begin{equation*}
\lim_{\epsilon \rightarrow 0} {\mathbb P}\vert 
u^0(\tau,E_{\tau}^{\epsilon,t,e}) - K \vert \leq C \epsilon] = 0
\end{equation*}
for $(t,e) \in [0,\tau) \times (0,+\infty)$. For any $\delta >0$, 
\begin{equation*}
\limsup_{\epsilon \rightarrow 0}
{\mathbb P}[\vert u^0(\tau,E_{\tau}^{\epsilon,t,e}) - K \vert \leq C \epsilon ]
\leq 
\limsup_{\epsilon \rightarrow 0}
 {\mathbb P}[\vert u^0(\tau,E_{\tau}^{\epsilon,t,e}) - K \vert \leq C \delta ].
\end{equation*}
By continuity with respect to parameters of solutions of stochastic differential equations, we see that that $E_{\tau}^{\epsilon,t,e} \longrightarrow E_{\tau}^{0,t,e}$
a.s. as $\epsilon \rightarrow 0$. Therefore, by the porte-manteau theorem, it holds for any $\delta >0$,
\begin{equation*}
\limsup_{\epsilon \rightarrow 0}
{\mathbb P}[\vert u^0(\tau,E_{\tau}^{\epsilon,t,e}) - K \vert \leq C \epsilon ]
\leq  {\mathbb P}[\vert u^0(\tau,E_{\tau}^{0,t,e}) - K \vert \leq C \delta ].
\end{equation*}
On the interval $[K-C \delta,K+C \delta ]$, with $\delta$ small enough so that 
$K-C \delta >0$,
 the function 
$u^0(\tau,\cdot)$ is continuously differentiable with a non-zero derivative and thus defines a $C^1$-diffeomorphism. Moreover, since $e > 0$, the random variable 
$E_{\tau}^{0,t,e}$ has a smooth density on the interval 
$[K-C \delta,K+C \delta]$.
 Therefore, the random
variable $u^0(\tau,E_{\tau}^{0,t,e})$ has a continuous density on the interval 
$[K-C \delta,K+C \delta]$. We conclude that 
\begin{equation*}
\lim_{\delta \rightarrow 0}
{\mathbb P}[\vert u^0(\tau,E_{\tau}^{0,t,e}) - K \vert \leq C \delta] = 0.
\end{equation*}
(iv) We now have all the ingredients needed to complete the proof.
From \eqref{eq:6:12:4}, we have:
\begin{equation*}
\begin{split}
U^0(t,e) &= U^{\epsilon}(t,e) + \epsilon {\mathbb E} \biggl[   {\mathbf 1}_{\{
u^0(\tau,E_{\tau}^{\epsilon,t,e})  \geq K \}}
\int_{\tau}^T f_{0}\bigl( u^{\epsilon}(s,E_{s}^{\epsilon,t,e})\bigr) \partial_{e} u^0(s,E_{s}^{\epsilon,t,e}) ds \biggr]
\\
&\hspace{15pt} + \epsilon {\mathbb E} \biggl[   \int_{t}^{\tau} f_{0}\bigl( u^{\epsilon}(s,E_{s}^{\epsilon,t,e})\bigr) \partial_{e} U^0(s,E_{s}^{\epsilon,t,e}) ds \biggr]  + \epsilon o(\epsilon).
\end{split}
\end{equation*}
Since, for any $s \in [0,T)$, $u^{\epsilon}(s,\cdot)$ converges towards $u^0(s,\cdot)$ uniformly as
$\epsilon$ tends to 0, and since $\PP[u^0(\tau,E_{\tau}^{0,t,e})=K]=0$, we deduce from \eqref{eq:6:12:1} and from Lebesgue dominated convergence theorem that
\begin{equation*}
\begin{split}
U^0(t,e) &= U^{\epsilon}(t,e) + \epsilon {\mathbb E} \biggl[   {\mathbf 1}_{\{
u^0(\tau,E_{\tau}^{0,t,e})  \geq K \}}
\int_{\tau}^T f_{0}\bigl( u^{0}(s,E_{s}^{0,t,e})\bigr) \partial_{e} u^0(s,E_{s}^{0,t,e}) ds \biggr]
\\
&\hspace{15pt} + \epsilon {\mathbb E} \biggl[   \int_{t}^{\tau} f_{0}\bigl( u^{0}(s,E_{s}^{0,t,e})\bigr) \partial_{e} U^0(s,E_{s}^{0,t,e}) ds \biggr]  + \epsilon o(\epsilon).
\end{split}
\end{equation*}
The final result then follows from the identity:
$$
\partial_e U^0(t,e)
=
\EE\left[\frac{E^{0,t,e}_\tau}{E^{0,t,e}_t}
               \mathbf{1}_{[K,\infty)}(u^0(\tau,E^{0,t,e}_\tau))
                         \partial_e u^0(\tau,E^{0,t,e}_\tau)
                   \right], \quad 0 \leq t < \tau, \ e >0,
$$
which can be derived by differentiation of  \eqref{V0}
 and making use of the  equality 
${\mathbb P}[u^0(\tau,E_{\tau}^{0,t,e}) =K]=0$.\end{proof}

\subsection{Numerical results}
\label{sub:numerics}

In this final subsection we provide the following numerical evidence of the accuracy of the small abatement asymptotic formula derived above:
\begin{enumerate}
\item We compute numerically $u^\epsilon$ with high accuracy, and we then compute values of $U^\epsilon$ using the values of $u^\epsilon$ so computed.
We used an explicit finite difference monotone scheme (see for example \cite{BarlesSouganidis} for details). The left pane of Figure \ref{fi:option_prices} gives a typical sample of results. For the sake of illustration we used the abatement function $f(x)=x$ corresponding to quadratic costs of abatement. The penalty, cap, emission volatility and emission rate in BAU  were chosen as
$\lambda=1$, $\Lambda=1.25$, $\sigma=0.3$ and $b=2\Lambda/T$ where the length of the regulation period was $T=1$ year.
The prices of the allowances $u(t,e)$ and $u^\epsilon(t,e)$ were computed on a regular grid in the time $\times$ log-emission space.
The mesh of the time subdivision was $\Delta t=1/250$. The grid of $1001$ log-emission was regular, centered around $0$ with mesh $\Delta x$ connected to $\Delta t$ by the standard stability condition. We considered an option with maturity $\tau=0.25$ and strike $K=0.86$. We computed $u^\epsilon(t,e)$ and $U^\epsilon(t,e)$ over this grid for $11$ values of $\epsilon$, $\epsilon=0, 0.1, 0.2, \cdots, 0.9, 1$, and we plotted the option prices $U^\epsilon(t,e)$ against the corresponding allowance prices $u^\epsilon(t,e)$. The graphs decrease as $\epsilon$ increases from $0$ to $1$.

\begin{figure}
\centerline{ 
\includegraphics[height=6cm, width=8cm]{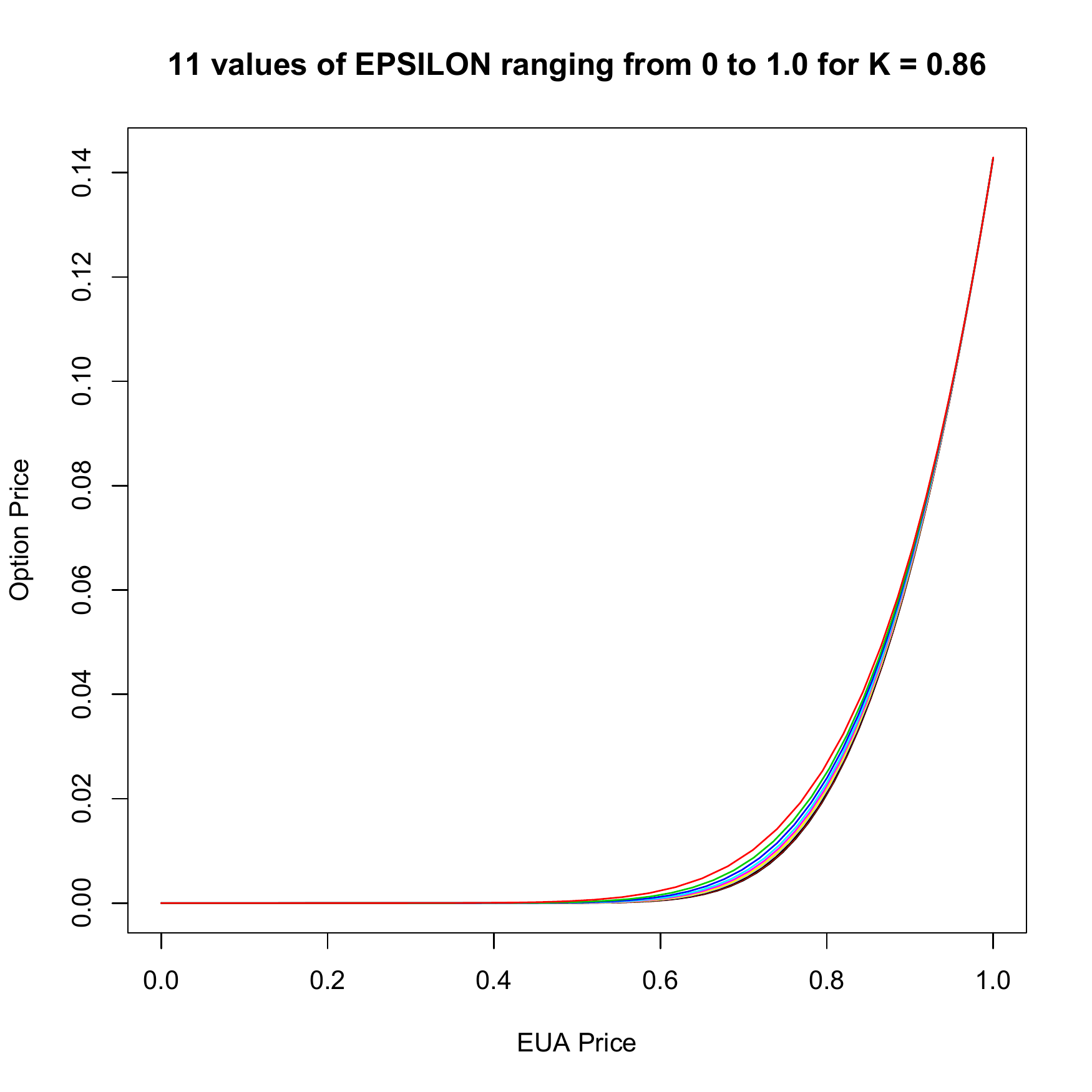}
\hskip 6pt
\includegraphics[height=6cm, width=8cm]{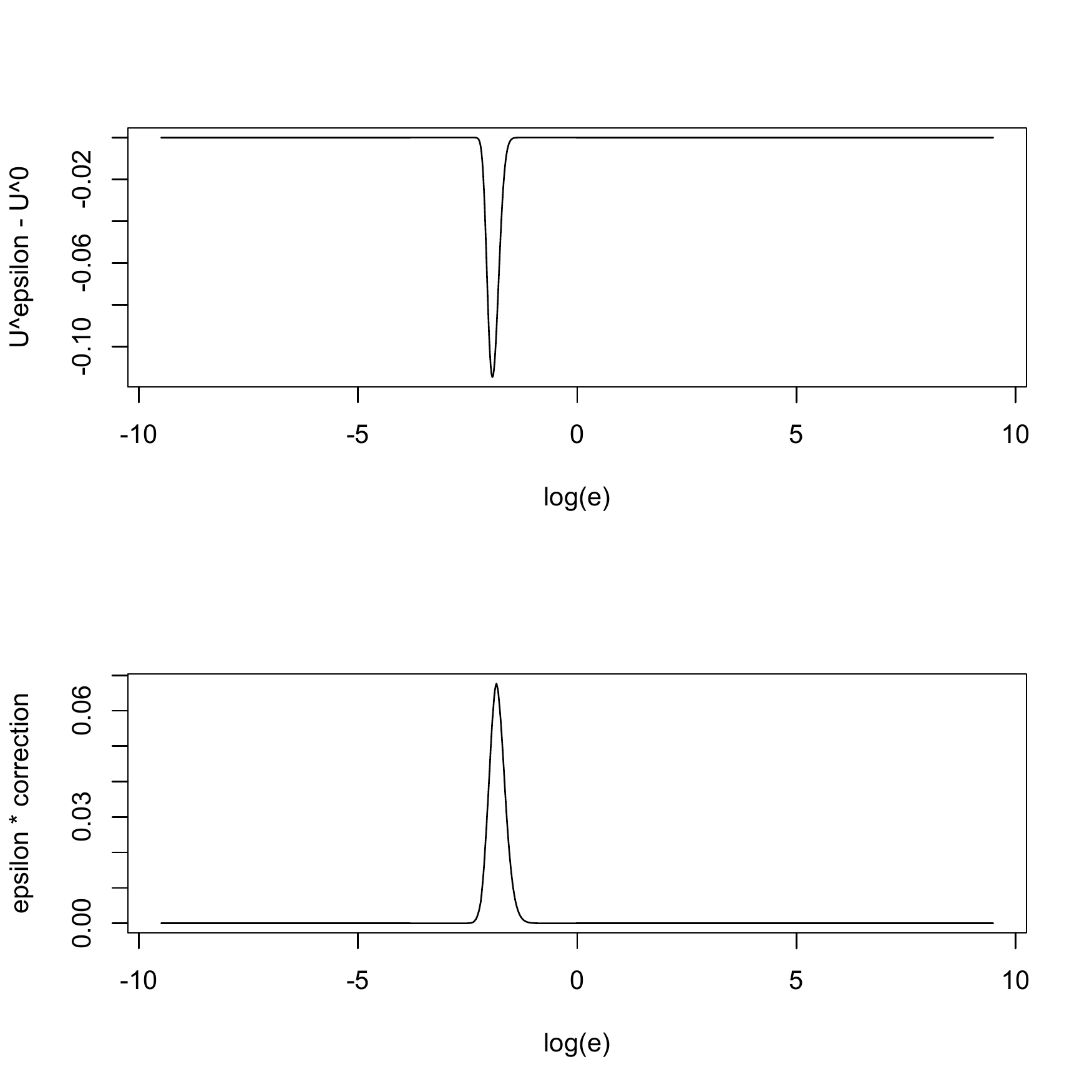}
}
\caption{European call option prices for $\epsilon=0,0.1,0.2,\cdots,0.9,1$.}
\label{fi:option_prices}
\end{figure}
We plotted $U^\epsilon$ against $u^\epsilon$ in order to show how the option price depends upon the value of the underlying allowance.

\item We also computed the expectation appearing as the coefficient of $\epsilon$ in the first order expansion of Proposition \ref{propTaylor1}. We used a plain  Monte Carlo computation of the expectation with $N=10,000$ sample paths. The right pane of 
Figure \ref{fi:option_prices} shows the potential of the approximation for $\epsilon=0.1$. The top plot shows the difference between 
the exact option value and the linear approximation given by setting $\epsilon=0$ and ignoring the feedback effect. Both option values were computed by solving the partial differential equations as explained at the beginning of the section. The lower plot shows the first order correction as identified in Proposition \ref{propTaylor1}, showing the potential of the approximation.
\end{enumerate}

\bibliographystyle{plain}

\end{document}